%% file: main.tex
\documentclass[runningheads]{llncs}

\input{sections/macro.tex}

\usepackage{soul}
\usepackage{autobreak}
\usepackage{pifont}
\usepackage{rotfloat}
\usepackage{ifsym}
\usepackage{bbding}
\usepackage{mathtools}
\usepackage{lineno}
\usepackage{algpseudocode}
\usepackage{makecell}
\usepackage{threeparttable}
\usepackage{booktabs}
\usepackage{multirow}
\usepackage{graphicx}
\usepackage{xcolor}
\usepackage{colortbl}

\theoremstyle{definition}

\begin{document}
\title{Normal Alignment: Improved Cryptanalytic Sign Recovery on Hard-Label Networks}
\titlerunning{Normal Alignment for Sign Recovery on Hard-Label Networks}
\author{
Shi Tang\inst{1} \and
Zirui Chen\inst{2} \and
Yongjia Su\inst{1} \and
Zhengchao Gao\inst{1} \and\\
Lingyue Qin\inst{2} \and
Xiaoyang Dong\inst{2}
}
\authorrunning{S. Tang et al.}

\institute{
Shandong University, Jinan, P. R. China\\
\email{\{shi.tang,yongjia.su,chao\_qwq\}@mail.sdu.edu.cn}
\and
Tsinghua University, Beijing, P. R. China\\
\email{chenzr25@mails.tsinghua.edu.cn}\\
\email{\{qinly,xiaoyangdong\}@tsinghua.edu.cn}
}

\maketitle

\begin{abstract}
At EUROCRYPT 2025, Carlini {\em et al.} proposed a breakthrough in the cryptanalytic extraction on hard‑label (S1) deep neural networks (DNNs), demonstrating polynomial-time signature and sign recovery. However,  Carlini {\em et al.}'s sign‑recovery method ({which we call \em Future Toggle}) suffers  only a marginal advantage over random guessing, producing high‑confidence wrong sign predictions in deeper layers. Such errors trigger expensive exponential‑time enumeration. 

This work presents {\em Normal Alignment}, a novel statistical sign‑recovery approach for S1 DNNs. Drawing on the expected length difference between projected normals of adjacent decision facets at dual points, our method infers neuron signs via normal‑signature alignment. It delivers higher voting accuracy and pushes erroneous predictions to low‑confidence ranks, which further enables a more efficient combined method, {\em eSOE + Alignment}, by combining {\em Normal Alignment} with the hard‑label {SOE} extension. This combined strategy removes heavy enumeration overhead and realizes exact polynomial‑time full sign recovery. 

Experiments demonstrate the effectiveness of our method,  especially for deep layers. For example, with our method, the signs for CIFAR-10 (architecture 192-64$\times$8-10) and MNIST (architecture 64-96$\times$3-32-10) models can be fully recovered in polynomial time; in contrast, Carlini {\em et al.}'s sign‑recovery method would require exponential‑time enumerations involving $2^{52}$ or $2^{82}$ guesses of the signs, respectively.



\keywords{Cryptanalytic model extraction \and ReLU networks \and Sign recovery
\and S1 access \and Normal Alignment}
\end{abstract}


\input{sections/Introduction.tex}

\input{sections/PreliminariesCompact.tex}
\input{sections/limitations.tex}

\input{sections/hardlabel.tex}
\input{sections/combineSOE.tex}
\input{sections/experiment.tex}
\section{Conclusion}\label{sect:conclusion}
This paper proposes {\em Normal Alignment}, an improved statistical sign‑recovery method for hard‑label ReLU network extraction to address the limitations of {\em Future Toggle}. Using projected decision‑facet normals at dual points, it achieves higher voting accuracy and pushes errors to low‑confidence ranks. Combined with extended hard‑label {\em SOE}, {\em eSOE+Alignment} achieves exact polynomial‑time full sign recovery without exponential enumeration. Evaluations on CIFAR‑10 and MNIST confirm its superiority.  

\bibliographystyle{splncs04}
\bibliography{bib}

\newpage
\appendix
\renewcommand{\theHsection}{Appendix.\arabic{section}}
\section*{\centering \textsf{Supplementary Material}}

\section{Supporting Experimental Results}
\label{supp:supporting-experimental-results}
\vspace{-0.6em}

\input{sections/appendixExperimentalResults}

\end{document}

%% file: sections/macro.tex
\usepackage{makeidx}
\usepackage{multirow}
\usepackage{graphicx}
\usepackage{tabularx}
\usepackage{array}
\usepackage{amsmath}
\usepackage{amssymb}
\usepackage{amsfonts}
\usepackage{url}
\usepackage[figuresright]{rotating}
\usepackage{arydshln}
\usepackage[vlined,boxed,commentsnumbered,ruled,linesnumbered]{algorithm2e}
\usepackage{caption}
\usepackage{subcaption}
\usepackage{lscape}
\usepackage{bm}
\usepackage{braket} 

\usepackage{amsthm}
\usepackage{float}

\usepackage{booktabs} 

\usepackage[colorlinks, citecolor=blue]{hyperref} 

\newcommand{\bx}{\boldsymbol{x}}
\newcommand{\by}{\boldsymbol{y}}
\newcommand{\bb}{\boldsymbol{b}}
\newcommand{\bu}{\boldsymbol{u}}

\newcommand{\bn}{\boldsymbol{n}}
\newcommand{\bg}{\boldsymbol{g}}
\newcommand{\bbeta}{\boldsymbol{\beta}}
\newcommand{\bgamma}{\boldsymbol{\gamma}}
\newcommand{\bdelta}{\boldsymbol{\delta}}

\newcommand{\bA}{\boldsymbol{A}}
\newcommand{\bS}{\boldsymbol{S}}
\newcommand{\bI}{\boldsymbol{I}}
\newcommand{\bF}{\boldsymbol{F}}
\newcommand{\bG}{\boldsymbol{G}}
\newcommand{\bP}{\boldsymbol{P}}
\newcommand{\bQ}{\boldsymbol{Q}}

\providecommand{\tblNA}{\textsc{n/a}}

\usepackage{listings}
\usepackage{color}

\usepackage{pgfplots, pgfplotstable}
\usepgfplotslibrary{fillbetween}
\usetikzlibrary{patterns}

\definecolor{oursblue}{rgb}{0.125,0.306,0.439}

\newcolumntype{C}{>{\centering\arraybackslash}X}

\definecolor{polyshade}{RGB}{190,232,200}
\definecolor{expshade}{RGB}{255,226,150}

\newcommand{\polycell}{%
  \begingroup
  \setlength{\fboxsep}{3.5pt}%
  \colorbox{polyshade}{%
    \makebox[4em][c]{$\mathit{poly}$}%
  }%
  \endgroup
}

\newcommand{\expcell}{%
  \begingroup
  \setlength{\fboxsep}{3.5pt}%
  \colorbox{expshade}{%
    \makebox[4em][c]{$\mathit{exp}$}%
  }%
  \endgroup
}

%% file: sections/Introduction.tex

\section{Introduction}
\label{sec:introduction}


Deep neural networks (DNNs) are widely used in computer vision \cite{Alex2012}, natural-language processing~\cite{DBLP:conf/nips/VaswaniSPUJGKP17}, and medical diagnosis~\cite{DBLP:journals/nature/EstevaKNKSBT17}, etc. Training high-performing DNNs often requires large amounts of data, computation and engineering efforts, making trained models valuable intellectual assets \cite{oliynyk2023know}. 
Model extraction is a long‑studied attack in which an adversary uses input-output queries of the victim DNN (or other side-channel information \cite{batina2019csi}) to extract  its parameters (weights and biases). Early  works explored network reconstruction \cite{fefferman1994reconstructing} and query‑based model stealing \cite{lowd2005adversarial,tramer2016stealing}. At CRYPTO~2020, 
Carlini, Jagielski, and Mironov introduced a cryptanalytic approach to model extraction \cite{DBLP:conf/crypto/CarliniJM20}, which exploits the piecewise-affine structure of a ReLU network and recovers its parameters layer by layer from raw-output queries. Layer-wise extraction proceeds in two stages: First, the \emph{signature recovery} identifies the unsigned weights and biases of each neuron by using the high-order differential at a so-called {\em critical point}, where one ReLU input of the DNN is exactly zero; Second, the \emph{sign recovery} determines the sign. Their method enjoys a polynomial-time \emph{signature recovery} phase, but suffers from an exponential-time \emph{sign recovery} phase by a brute-force guessing method. 
At EUROCRYPT~2024,  Canales-Mart\'inez {\em et al.} \cite{DBLP:conf/eurocrypt/CanalesMartinezCHRSS24}  developed polynomial-time \emph{sign recovery} algorithms in raw-output  setting, {\em i.e.}, the {\em Neuron Wiggle} and {\em SOE} methods. Later work examined the practical limitations of these layer-wise attacks.
Foerster {\em et al.}~\cite{DBLP:conf/nips/FoersterMSH24} found that increasing the number of critical points does not necessarily improve {\em Neuron Wiggle} recovery for difficult neurons. Liu et al.~\cite{DBLP:conf/eurocrypt/LiuSELBP26} addressed rank-deficient signature systems and the misattribution of critical points from deeper layers, thereby extending practical extraction from three hidden layers to eight. 
In parallel, cryptanalytic extraction has expanded along two dimensions: covering various activation functions \cite{chen2025delving,qi2026various,DBLP:conf/crypto/AsselineauDFM26}  and different network architectures \cite{wei2026rnn,cnn_average_pooling,liu2026model-cnn,DBLP:conf/crypto/ChenTGSQD26}. 

\subsubsection{Hard-label extraction.}
The S1 hard-label setting returns only the predicted class labels ({\em e.g.}, ``dog'' or ``car'') and hides the logits. At ASIACRYPT~2024, Yi Chen {\em et al.} initiated the cryptanalytic extraction in the S1 setting~\cite{DBLP:conf/asiacrypt/ChenDGSWW24}, but it requires exponential execution time.  
At EUROCRYPT~2025, Carlini {\em et al.} \cite{DBLP:conf/eurocrypt/CarliniCHRS25} gave the first polynomial-query, polynomial-time hard-label extraction. 
The attack collects and clusters \emph{dual points}, which are critical and also on a visible class-decision boundary, to recover the signatures. 
At CRYPTO 2026, Ito, Miura, and Todo~\cite{DBLP:conf/crypto/ItoMT26} identified a limitation of Carlini {\em et al.}'s attack \cite{DBLP:conf/eurocrypt/CarliniCHRS25}: for nearly always-active neurons, the state switches needed for parameter recovery can become exponentially difficult to observe. They proposed cross-layer extraction to address this failure mode. 
In 2026, Zirui Chen {\em et al.}~\cite{cryptoeprint:2026/1164}  proposed the {\em Approximate Signature Vector} (ASV) method to reduce the cost of clustering dual points, and hence improve Carlini {\em et al.}'s signature recovery phase \cite{DBLP:conf/eurocrypt/CarliniCHRS25}. 
\subsubsection{Existing Sign Recovery Methods in S1 and Their Limitations.} There are only two existing S1 {\em sign recovery} methods: Carlini {\em et al.}'s statistical {\em Future Toggle} method   \cite{DBLP:conf/eurocrypt/CarliniCHRS25}, and Canales-Mart\'inez and Santos's deterministic hard-label {\em SOE} method \cite{DBLP:conf/latincrypt/CanalesMartinezS25}. The hard-label {\em SOE} typically recovers only the first hidden layer, unless the network is sufficiently contractive to permit the recovery of deeper layers. 
 The main limitation of  \textit{Future Toggle} is its   
     {\em weak advantage over random guessing}.  
    In our practical experiments, {\em Future toggle} suffers small advantage than random guessing, leading to the low vote accuracy and the low vote confidence. For example, on the evaluated CIFAR-10 model, Table~\ref{tab:whitebox-future-toggle-cifar10} reports vote accuracies of $52\%$--$57\%$, while Figure~\ref{fig:confidence-cifar-l7-toggle} shows that the confidence of most neurons is below $60\%$.

   The weak advantage may lead to many incorrectly recovered signs in deep layers, which results in an exponential time complexity to recover the full signs, shifting the entire hard-label extraction  from polynomial to exponential time complexity. Specifically, the 
   {\em Future Toggle} may produce errors with relatively high confidence. As shown in Figure~\ref{fig:confidence-cifar-l7-toggle}, 9 of the $64$ signs on layer 7 are incorrectly recovered, including one with a confidence of $55.58\%$ that ranks $20$th among all neurons by confidence. This makes exact recovery of the entire layer difficult under two existing approaches. 
   \begin{itemize}
       \item {\bf Enumeration Infeasible.} To recover all signs, we follow the approach of Foerster et al. \cite{DBLP:conf/nips/FoersterMSH24}: we enumerate and guess the low-confidence sign assignments (which may contain errors) until all erroneous signs are covered, and verify the assignments by executing the next-layer signature recovery algorithm.As shown in Figure~\ref{fig:confidence-cifar-l7-toggle}, since the erroneous sign ranks $20$-th in confidence, the enumeration must cover all $45$ signs from rank $20$ to rank $64$. This requires testing \(2^{45}\) possible assignments.
       \item  {\bf SOE Failure.} In 2026, Liu {\em et al.}~\cite{DBLP:conf/eurocrypt/LiuSELBP26} combined statistical predictions ({\em Neuron Wiggle}) with {\em SOE} in raw-output setting by eliminating unknowns associated with neurons predicted to be inactive with high confidence. Hence, when an actually active neuron is incorrectly predicted to be inactive with high confidence, its nonzero unknown is likely to be eliminated, making the reduced {\em SOE} incorrect and causing the combined strategy to fail.  
   \end{itemize}
  

    \begin{figure}[!t]
    \centering

    \begin{subfigure}[t]{0.49\linewidth}
        \centering
        \includegraphics[width=\linewidth]
        {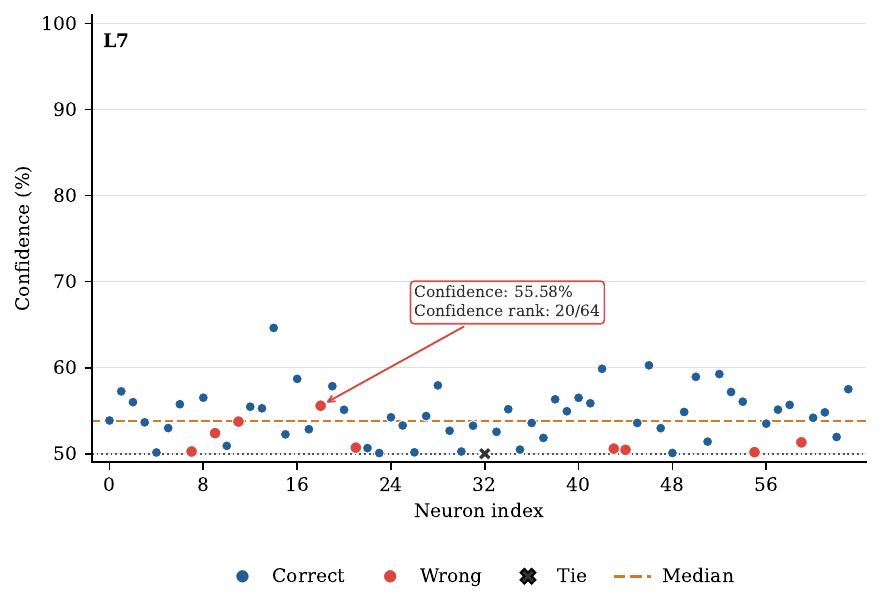}
        \caption{{\em Future Toggle} ($n_{\mathrm{attempt}}=1000$).}
        \label{fig:confidence-cifar-l7-toggle}
    \end{subfigure}
\hfill
    \begin{subfigure}[t]{0.49\linewidth}
        \centering
        \includegraphics[width=\linewidth]
        {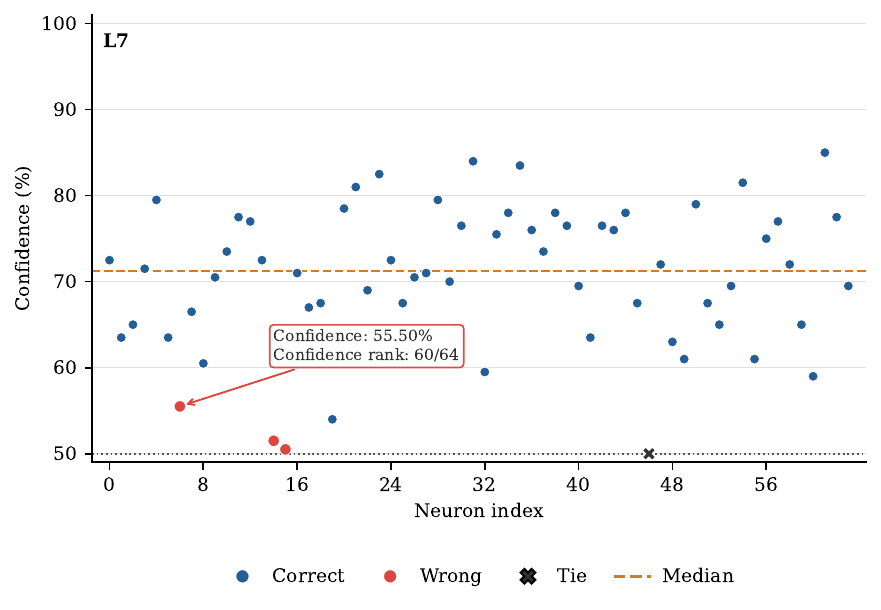}
        \caption{{\em Normal Alignment} ($n_{\mathrm{attempt}}=200$).}
        \label{fig:confidence-cifar-l7-alignment}
    \end{subfigure}
    \caption{vote confidence distributions of {\em Future Toggle} and {\em Normal Alignment} on the 7th hidden layer of the CIFAR-10 model with architecture
    \mbox{$192$-$64{\times}8$-$10$}.}
    \label{fig:confidence-cifar-l7-comparison}
    \end{figure}

\subsubsection{Our Contributions.}  
This paper introduces a new sign recovery method in S1 setting, called {\em Normal Alignment}, which uses the normals of the  two decision facets adjacent to a dual point to directly determine the target sign, without repeatedly walking along the decision boundary to search for neuron toggles in future layers  ({\em i.e.}, Carlini {\em et al.}'s {\em Future Toggle} method \cite{DBLP:conf/eurocrypt/CarliniCHRS25}). Our method is based on the following statistical intuition (which is also formally proved): at a dual point, the decision-facet normal on the active side has a larger expected length than the normal on the inactive side, because it explicitly includes the target neuron's weight contribution. 
The advantages of our methods are summarized below: 
    \begin{enumerate}
    \item {\bf Larger Statistical Advantage.}
    As shown in Table~\ref{tab:mnist-layer-comparison}, with the same budget of $n_{\mathrm{attempt}}=200$ dual points, {\em Normal Alignment} achieves an overall vote accuracy of $68.74\%$ across all hidden layers, compared with $60.89\%$ for {\em Future Toggle}. In deep hidden layers in Table \ref{tab:cifar-layer-comparison}, our method still maintains a vote accuracy of about $70\%$, while the vote accuracy  of {\em Future Toggle}  drops to around $53\%$ in layers 3-7.  
   

    \item {\bf Fewer Queries Needed.}
    {\em Normal Alignment} neither walks along the decision boundary, nor discards dual points due to non-future neuron toggles -- as occurs with  {\em Future Toggle}. When the two adjacent decision-facet normals are successfully recovered, a dual point can produce a vote. Consequently, the dual point utilization reaches $100\%$, as shown in Table~\ref{tab:cifar-layer-comparison}. By contrast,  only $2.26\%$ of the dual points produce a vote in Layer 8 by the {\em Future Toggle}. Furthermore, the weak advantage of the {\em Future Toggle} naturally requires more votes to improve accuracy, and hence  usually needs more queries and time than {\em Normal Alignment} 
   as shown in Table~\ref{tab:intro-combined-sign-recovery}.  

    \item {\bf Low Confidence for Incorrect Signs.}
    As shown in Figure~\ref{fig:confidence-cifar-l7-comparison}, For {\em Normal Alignment}, the highest-confidence error has a confidence of $55.50\%$ and ranks $60$th among the $64$ neurons, whereas {\em Future Toggle}'s highest-confidence error has a
    similar confidence of $55.58\%$ but ranks as high as $20$th. Therefore, to recover the full signs by enumeration \cite{DBLP:conf/nips/FoersterMSH24}, our method must test $2^5$ sign assignments, while {\em Future Toggle} tests $2^{45}$ sign assignments.
    We also test more models with different layers  in Table \ref{tab:highest-error-confidence-rank} in {\sf Supp.}~\ref{supp:supporting-experimental-results} .  Across all hidden layers of each model, our highest-ranked errors occur on CIFAR-10 L6 and MNIST L2 at ranks 50/64 and 68/96, respectively, yielding enumeration complexities of $2^{15}$ and $2^{29}$. By contrast, at the same budget of $n_{\mathrm{attempt}}=200$, the highest-ranked errors of {\em Future Toggle} require enumeration complexities of $2^{64}$ and $2^{93}$ on the CIFAR-10 and MNIST models, respectively.
    
    \item {\bf Feasible Combination of the Hard-label {\em SOE} and {{\em Normal Alignment}: {\em eSOE + Alignment}.}} Because the highest-confidence error ranks very low in {\em Normal Alignment}, a similar combination of a statistical method and a deterministic method by Liu {\em et al.} ~\cite{DBLP:conf/eurocrypt/LiuSELBP26} in S5 setting works in S1 setting, {\em i.e.}, combining hard-label {\em SOE} and {\em Normal Alignment}. 
    The {\em Normal Alignment} identifies the high-confidence (highly ranked) inactive neurons  and eliminates the corresponding zero equations in SOE. By contrast,  {\em Future Toggle} is susceptible to high-confidence errors ({\em i.e.}, errors with high confidence rank); for instance, an active neuron might be predicted as inactive with high confidence, causing a nonzero equation to be erroneously discarded and thus causing the {\em SOE} method to fail. Similarly to Liu {\em et al.} ~\cite{DBLP:conf/eurocrypt/LiuSELBP26}, to further increase the rank of {\em SOE}, we select several transition points sharing the same activation states in the target and future layers, forming a stacked coefficient matrix. Besides, we also introduce an orthogonal projection matrix to eliminate the unknown normal length at each point. Therefore, we call the resulting method  hard-label {\em SOE} extension and {\em Normal Alignment} ({\em eSOE + Alignment}). It helps eliminate the enumeration complexities -- specifically the $2^{15}$ time  for layer L6 of the CIFAR-10 network (Table \ref{tab:cifar-layer-comparison}) and the $2^{29}$ time for layer L2 of the MNIST network (Table \ref{tab:mnist-layer-comparison}).


    \end{enumerate}
    \subsubsection{Experiments.} As shown in Table \ref{tab:intro-combined-sign-recovery}, the sign recovery methods are evaluated on CIFAR-10 and MNIST networks. We follow the same assumption as  \cite{DBLP:conf/eurocrypt/CanalesMartinezCHRSS24}:  when targeting the layer $k$, the preceding layers ($<k$) and the unsigned signatures of layer $k$ are known. 
    Tables \ref{tab:cifar-layer-comparison} and \ref{tab:mnist-layer-comparison} in {\sf Supp.} \ref{supp:supporting-experimental-results} report the full results of our experiments, comparing {\em Normal Alignment}, {\em eSOE + Alignment}, {\em Future Toggle},  the combination of the hard-label {\em SOE} extension and {\em Future Toggle} ({\em eSOE + Toggle}) for a fair comparison though {\em eSOE + Toggle} does not reduce the overall enumeration complexities as shown in Table~\ref{tab:intro-combined-sign-recovery}.
   Our {\em eSOE+Alignment} correctly recovers all $512/512$ signs on CIFAR-10  and all $320/320$ signs on MNIST, thereby achieving exact sign recovery in polynomial time in Table~\ref{tab:intro-combined-sign-recovery}. In contrast, {\em eSOE+Toggle} recovers only $469/512$ and $296/320$ signs, respectively. Specifically,  
  for the CIFAR-10 model with $n_{\rm attempt}=1000$ in Table \ref{tab:cifar-layer-comparison}, {\em eSOE} + {\em Toggle} leaves errors on L5, L7, and L8, whose highest-confidence erroneous signs rank $44$-th, $20$-th, and $13$-th out of 64, respectively. Exact recovery must therefore enumerate the $52$ signs from rank $13$ to rank $64$ in L8, requiring $2^{52}$ sign enumerations. For the MNIST model in Table \ref{tab:mnist-layer-comparison}, the highest-confidence error ranks $15$-th among $96$ neurons on L3, hence requiring $2^{82}$ sign enumerations to recover full signs. 
The source code for all the experiments can be found via
\begin{center}
\url{XXX}
\end{center}

\begin{table}[!t]
\scriptsize
\centering
\caption{Comparison of our {\em eSOE+Alignment} and
{\em eSOE+Toggle}~\cite{DBLP:conf/eurocrypt/CarliniCHRS25}
on the CIFAR-10 and MNIST models.}
\label{tab:intro-combined-sign-recovery}

\setlength{\tabcolsep}{4pt}
\renewcommand{\arraystretch}{1.35}

\resizebox{\textwidth}{!}{%
\begin{threeparttable}

\begin{tabular}{@{}llcccccc@{}}
\toprule

\multicolumn{3}{c}{\textbf{Model and Method}}
& \multicolumn{2}{c}{\textbf{Recovery Complexity}}
& \multicolumn{3}{c}{\textbf{Experimental Results}}
\\

\cmidrule(lr){1-3}
\cmidrule(lr){4-5}
\cmidrule(l){6-8}

\textbf{Architecture}
& \textbf{Method}
& $\bm{n_{\mathrm{attempt}}}$
& \makecell{\textbf{Method}\\\textbf{execution}}
& \makecell{\textbf{Complete}\\\textbf{recovery}}
& \textbf{Correct signs}
& \textbf{Time}
& \textbf{Queries}
\\
\midrule

\multirow{2}{*}{%
    \makecell[l]{
        \textbf{CIFAR-10}\\
        $192$-$64{\times}8$-$10$
    }
}
& eSOE+Alignment
& 200
& $\mathit{poly}$
& \polycell
& $\mathbf{512/512}$
& $2^{10.73}$
& $2^{31.13}$
\\

& eSOE+Toggle~\cite{DBLP:conf/eurocrypt/CarliniCHRS25}
& 1000
& $\mathit{poly}$
& \expcell
& $469/512$
& $2^{13.47} (s)+ 2^{52} (g)$
& $2^{35.49}$
\\

\midrule

\multirow{2}{*}{%
    \makecell[l]{
        \textbf{MNIST}\\
        $64$-$96{\times}3$-$32$-$10$
    }
}
& eSOE+Alignment
& 200
& $\mathit{poly}$
& \polycell
& $\mathbf{320/320}$
& $2^{11.20}$
& $2^{28.71}$
\\

& eSOE+Toggle~\cite{DBLP:conf/eurocrypt/CarliniCHRS25}
& 1000
& $\mathit{poly}$
& \expcell
& $296/320$
& $2^{17.39}(s)+ 2^{82} (g)$
& $2^{30.77}$
\\

\bottomrule
\end{tabular}

\begin{tablenotes}[flushleft]

\item[]
$\boldsymbol{n_{\mathrm{attempt}}}$: denotes the number of attempted dual
points per neuron. As shown in
Fig.~\ref{fig:normal-alignment-dual-point-budget}, we set
$n_{\mathrm{attempt}}=200$ for {\em Normal Alignment}. Because
{\em Future Toggle} provides a weaker statistical advantage, we set
$n_{\mathrm{attempt}}=1000$ for {\em Future Toggle}.

\item[]
\textbf{Method execution}: denotes the time complexity of
{\em eSOE+Alignment} or {\em eSOE+Toggle}. Since hard-label {\em SOE},
{\em Normal Alignment}, and {\em Future Toggle} all run in polynomial
time, both combined methods also run in polynomial time.

\item[]
\textbf{Complete recovery}: denotes the time complexity required to
recover all neuron signs correctly. {\em eSOE+Alignment} recovers all
neuron signs correctly and therefore achieves complete recovery in
polynomial time. In contrast, {\em eSOE+Toggle} leaves some signs
incorrect; guaranteeing complete recovery therefore requires
exponential enumeration of the unresolved sign assignments~\cite{DBLP:conf/nips/FoersterMSH24}.

\item[]
\textbf{Time}: consists of two components. The notation $2^x\, (\mathrm{s})$ denotes the time of {\em eSOE+Alignment/Toggle} in seconds, whereas $2^y\, (\mathrm{g})$ denotes the cost of sign guessing required for complete recovery of all signs in the model. 

\item[]
\textbf{Queries}: In the CIFAR-10 proof-of-concept experiments, the
decision-facet normals are computed directly from model parameters,
and the reported values estimate the corresponding hard-label query
cost; the MNIST entries report the actual hard-label query counts.

\end{tablenotes}
\end{threeparttable}%
}
\end{table}

%% file: sections/PreliminariesCompact.tex

\ifdefined\definitionsstandalone
\section*{Preliminaries}
This document extracts the network and geometric notation, oracle models, and
extraction goal used in the paper.
For a positive integer $m$, we write $[m]=\{1,\ldots,m\}$.
\subsection*{Notations and Definitions}
\else
\section{Preliminaries}
\label{sec:preliminaries}

Unless otherwise specified, the subscript and superscript numbers start from 1. 
\begin{itemize}
\item $[m]$: for a positive integer $m$,
we write $[m]=\{1,\ldots,m\}$, 
    \item $\bA$: matrix, where its $i$-th row is $\bA_{i}$, and its element in $i$-th row and $j$-th column is $\bA_{i,j}$, $i,j\geq 1$,

\item $\bx$: column vector, and its $i$-th element is $\bx_i$, $i\geq 1$, 

\item $\mathcal{F}$: functions,
\item $\mathbb{C}, \mathbb{D}$: space or set,
\item neuron $(k,j)$: the $j$-th neuron in layer $k$. 
\end{itemize}

\subsection{Notations and Definitions}
\label{subsec:notations-definitions}
\fi
\label{subsec:networks-layers-neurons}
\label{subsec:activation-local-linearity}

The DNN is composed of a sequence of functions alternating between linear functions $f^{(k)}: \mathbb{R}^{d^{(k)}}\mapsto \mathbb{R}^{d^{(k+1)}}$ ($k\in [r+1]$), and a nonlinear function $\sigma$ (component-wise ReLU function): 
\begin{equation}
    \mathcal{F}_{\theta}=f^{(r+1)}\circ \sigma\circ f^{(r)}\circ \sigma\circ\cdots f^{(2)}\circ \sigma\circ f^{(1)}, 
\end{equation}
where $f^{(k)}:\mathbb{R}^{d^{(k)}} \rightarrow \mathbb{R}^{d^{(k+1)}}$ is an affine transformation:
        \begin{equation}
            \by^{(k)}=f^{(k)}({\bx}^{(k)})=\bA^{(k)}\bx^{(k)}+\bb^{(k)} \in \mathbb{R}^{d^{(k+1)}},
        \end{equation}
    where $\bx^{(k)}\in\mathbb{R}^{d^{(k)}}$ represents the input vector of layer $k$, and $\bx^{(1)}\in\mathbb{R}^{d^{(1)}}$ is the model input. The weight matrix $\bA^{(k)}\in \mathbb{R}^{d^{(k+1)}\times d^{(k)}}$ and the bias vector $\bb^{(k)}\in \mathbb{R}^{d^{(k+1)}}$ are composed of floating-point numbers, which are the model parameters.  
    
Given input $\bx^{(1)}\in \mathbb{R}^{d^{(1)}}$, the ReLU function $\sigma$ in layer $k$ is also interpreted as the matrix determined by  $\bx^{(1)}$, 
\begin{equation}\label{eqn:I}
    \bI^{(k)}=\operatorname{diag}( \tau^{(k)}_1, \tau^{(k)}_2,\dots, \tau^{(k)}_{d^{(k+1)}}), 
\end{equation}
where $\tau^{(k)}_i=1,~ i\in [d^{(k+1)}]$ when $\by^{(k)}_i\geq 0$, else $\tau^{(k)}_i=0$. Then, $\bx^{(k+1)}=\bI^{(k)}\by^{(k)} \in \mathbb{R}^{d^{(k+1)}}$ is the output of layer $k$. 

\begin{definition}[Linear Neighborhood]
    Given an input  $\bx\in  \mathbb{R}^{d^{(1)}}$, the matrices $\bA^{(k)},~\bb^{(k)},~\bI^{(k)}, ~ k\in [r+1]$ will be all fixed. The linear neighborhood of $\bx$ is defined as the subset $\mathbb{L}_{\bx} \subset \mathbb{R}^{d^{(1)}}$, so that, for all  $\bx^{(1)}\in \mathbb{L}_{\bx}$, the same matrices $\bA^{(k)},~\bb^{(k)},~\bI^{(k)}$ will be  applied to compute the output of the DNN. 
\end{definition}
The DNN has been proved to be a piecewise linear function \cite{DBLP:conf/crypto/CarliniJM20,DBLP:conf/eurocrypt/CanalesMartinezCHRSS24}, {\em i.e.}, for $\bx \in \mathbb{L}_{\bx}$, the model output will change linearly, {\em i.e.},  the DNN is reduced to 
\begin{equation}\small
    \label{eqn:fcnn_linear}
    \begin{array}{lll}
     \mathcal{F}_{\theta}(\bx) &=&  \bA^{(r+1)} \left(\bI^{(r)}\left(\bA^{(r)} \cdots \left(\bI^{(1)}\left(\bA^{(1)}\bx + \bb^{(1)}\right)\right) \cdots + \bb^{(r)}\right) \right)+ \bb^{(r+1)} \\
        &=& \bA^{(r+1)}\bI^{(r)}\bA^{(r)} \cdots \bI^{(2)}\bA^{(2)}\bI^{(1)}\bA^{(1)}\bx + {\boldsymbol \beta} ={\boldsymbol \Gamma} \bx + {\boldsymbol \beta}.
    \end{array}
\end{equation}

\begin{definition}[Oracle models]
\label{def:s5-oracle}
\label{def:s1-oracle}
In S5 raw-output setting, the oracle returns all logits:
$\mathcal O_{\mathrm{S5}}(\bx)=\mathcal{F}_{\theta}(\bx)$. In S1 hard-label setting, it returns
only the label
\[
    \mathcal O_{\mathrm{S1}}(\bx)=\min\operatorname*{arg\,max}_{j \in [d^{(r+2)}]}\mathcal{F}_{\theta}(\bx),
\]
where the minimum indicates a deterministic rule for ties.
\end{definition}


\begin{samepage}
\begin{definition}[Critical hyperplane, activation boundary, and critical point]
\label{def:critical-hyperplane}
\label{def:critical-point}
The critical hyperplane of neuron $(k,j)$ (the $j$-th neuron of layer $k$) in the input space of layer $k$ is
\[
    \mathbb{H}_j^{(k)}
    =\bigl\{\bu\in\mathbb R^{d^{(k)}}:
      \langle \bA_j^{(k)},\bu\rangle+\bb_j^{(k)}=0\bigr\}, k \in [r+1], j\in [d^{(k+1)}]\bigr\}.
\]
The corresponding activation
boundary in the model-input space is
\[
    \mathbb{C}_j^{(k)}
    =\bigl\{\bx^{(1)}\in\mathbb R^{d^{(1)}}: \bx^{(k)}\in\mathbb{H}_j^{(k)}\bigr\}.
\]
Therefore, $\bx^{(1)}\in\mathbb{C}_j^{(k)}$ is a critical point of the neuron $(k,j)$.
\end{definition}
\end{samepage}

\subsubsection{Prefix and suffix maps of layer $k$.}
Fix a target layer $ k \in [r+1]$. We decompose the network as
\begin{equation}\label{eq:dnn_decompose}
      \mathcal{F}_{\theta} 
    =\underbrace{f^{(r+1)}\circ\sigma\circ f^{(r)}\circ\cdots
       \circ\sigma\circ f^{(k+1)}}_{\mathcal{G}_x^{k+1}}
       \circ\sigma\circ f^{(k)}\circ
      \underbrace{\sigma\circ f^{(k-1)}\circ\cdots
       \circ\sigma\circ f^{(1)}}_{\mathcal{F}_x^{k-1}},
\end{equation}
where $\mathcal{F}_x^{k-1}$ is the layers before layer $k$, and $\mathcal{G}_x^{k+1}$ is the layers after the layer $k$. 
Given model input $\bx$ and its corresponding linear neighborhood $\mathbb{L}_{\bx}$,  
the $\mathcal{F}_x^{k-1}$ and $\mathcal{G}_x^{k+1}$ collapse to fix affine functions, {\em i.e.}, for model input $\bx^{(1)}\in \mathbb{L}_{\bx}$, and its corresponding $\bx^{(k)}$, 
\begin{equation}\label{eq:per_suffix_matrix}
    \bx^{(k)}= \mathcal{F}_{\bx}^{k-1}(\bx^{(1)})=\bF^{(k-1)}\bx^{(1)}+\bbeta^{(k-1)},
    ~
  \mathcal{G}_x^{k+1}(\bx^{(k+1)})=\bG^{(k+1)}\bx^{(k+1)}+\bgamma^{(k+1)}.  
\end{equation}
where {\small$
   \bF^{(k-1)}
      =\bI^{(k-1)}\bA^{(k-1)}\cdots \bI^{(1)}\bA^{(1)},
    ~
    \bG^{(k+1)}
      =\bA^{(r+1)}\bI^{(r)}\bA^{(r)}\cdots
        \bI^{(k+1)}\bA^{(k+1)}.
$} according to Eq. \eqref{eqn:fcnn_linear}, and hence Eq. \eqref{eqn:fcnn_linear} becomes \begin{equation}\label{eq:G_F_model_output}
\mathcal{F}_{\theta}({\bx}^{(1)}) = \bG^{(k+1)}\bI^{(k)}\left({\bA}^{(k)}\left(\bF^{(k-1)}\bx^{(1)} + \bbeta^{(k-1)}\right)+{\bb}^{(k)}\right)+\bgamma^{(k+1)}.
\end{equation}


\label{subsec:critical-decision-geometry}
\begin{definition}[Transition and dual points, decision boundary, decision facet]
\label{def:class-pair-margin}
\label{def:decision-transition}
\label{def:dual-point}
For distinct classes $a,b\in [d^{(r+2)}]$, define 
$\mathcal{D}_{ab}(\bx)=\mathcal{F}_{\theta}(\bx)_a-\mathcal{F}_{\theta}(\bx)_b$, and the set of decision boundary
\begin{equation}
\label{eq:decision boundary}
      \mathbb{D}_{ab}
    =\bigl\{\bx\in \mathbb{R}^{d^{(1)}}:\mathcal{F}_{\theta}(\bx)_a=\mathcal{F}_{\theta}(\bx)_b>\mathcal{F}_{\theta}(\bx)_c
      \text{ for every }c \in[d^{(r+2)}]/\{a,b\}\bigr\}.  
\end{equation}
Any $\bx\in \mathbb{D}_{ab}$ is a transition point for switching classes $a$ and $b$, whose decision facet in the input model space $\mathbb{R}^{d^{(1)}}$ is defined as $\mathbb{D}_{ab} \cap \mathbb{L}_{\bx} \in \mathbb{R}^{d^{(1)}}$. 
The point $\bx\in \mathbb{D}_{ab} \cap \mathbb{C}_j^{(k)}$ is a 
dual point for neuron $(k,j)$ and class pair
$(a,b)$.  Usually, there are at least two adjacent  decision facets for a given dual point $\bx$, denoted as 
 $\mathbb{D}_{ab} \cap \mathbb{L}_{\bx}$ and  $\mathbb{D}_{ab} \cap \mathbb{L}'_{\bx}$. 
\end{definition}

Figure~\ref{fig:prelim-linear-partition} summarizes different types of point in geometry, where each cell represents a  linear neighborhood. 

\begin{figure}[H]
    \centering
    \includegraphics[width=0.8\linewidth]{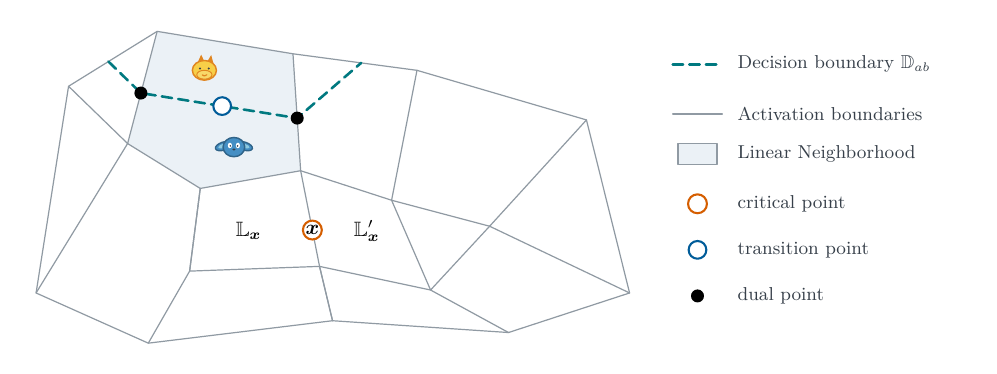}
    \caption{Piecewise-affine geometry in the model-input space.
    }
    \label{fig:prelim-linear-partition}
\end{figure}

\begin{definition}[Layer-\(k\) wiggle]
\label{def:layer-i-wiggle}
A layer-\(k\) wiggle around a model input \(\bx \) is a small vector \(\bdelta^{(k)} \in \mathbb{R}^{d^{(k)}}\), so that there exists \(\bdelta^{(1)}\in \mathbb{R}^{d^{(1)}}\) that satisfies $\bx+\bdelta^{(1)} \in \mathbb{L}_{\bx}$ (or $\bx-\bdelta^{(1)} \in \mathbb{L}_{\bx}$) and 
$\bdelta^{(k)}=\bF^{(k-1)}(\bx+\bdelta^{(1)})+\bbeta^{(k-1)}-(\bF^{(k-1)}(\bx)+\bbeta^{(k-1)})=\bF^{(k-1)}(\bdelta^{(1)})$ (or $\bdelta^{(k)}=\bF^{(k-1)}(\bx)+\bbeta^{(k-1)}- (\bF^{(k-1)}(\bx-\bdelta^{(1)})+\bbeta^{(k-1)})=\bF^{(k-1)}(\bdelta^{(1)})$). 
\end{definition}

\begin{definition}[Control space]
\label{def:control-space}
Given a model input $\bx$ and its linear neighborhood
$\mathbb L_{\bx}$, its control space at the input to layer $k$ is
\[
\mathbb V_{\bx}^{(k)}
:=
\operatorname{span}
\left\{
   \bdelta^{(k)}= \bF^{(k-1)}\bdelta^{(1)}:
    \bdelta^{(1)}\in\mathbb R^{d^{(1)}},
    \ \bx+\bdelta^{(1)}\in\mathbb L_{\bx}
\right\}
\subseteq\mathbb R^{d^{(k)}}.
\]
Equivalently, $\mathbb V_{\bx}^{(k)}$ is the subspace spanned by the
columns of $\bF^{(k-1)}$. 
\end{definition}



\begin{definition}[Projection onto the control space]
\label{def:control-space-projection}
Since the control space $\mathbb V_{\bx}^{(k)}$ is the column space of
$\bF^{(k-1)}$, its orthogonal complement satisfies $\left(\mathbb V_{\bx}^{(k)}\right)^\perp
=
\ker\left(
    \left(\bF^{(k-1)}\right)^\top
\right).$
Every vector $\bu\in\mathbb R^{d^{(k)}}$ can therefore be uniquely
written as $\bu=\overline{\bu}+\bu_\perp$, where
$\overline{\bu}\in\mathbb V_{\bx}^{(k)}$ and
$\bu_\perp\in(\mathbb V_{\bx}^{(k)})^\perp$.
We call $\overline{\bu}$ the orthogonal projection of $\bu$ onto 
$\mathbb V_{\bx}^{(k)}$ and write $\overline{\bu}:=\bP^{(k)}\bu$, where
$\bP^{(k)}$ is the corresponding orthogonal projection matrix. Equivalently, if the columns of $\bQ^{(k)}$ form an orthonormal basis
of $\mathbb V_{\bx}^{(k)}$, then
\[
\bP^{(k)}
=
\bQ^{(k)}
\bigl(\bQ^{(k)}\bigr)^\top,
\qquad
\overline{\bu}
=
\bQ^{(k)}
\bigl(\bQ^{(k)}\bigr)^\top\bu.
\]
In particular, $\bP^{(k)}$ is symmetric, i.e.,
$(\bP^{(k)})^\top=\bP^{(k)}$.
\end{definition}

\subsection{Extraction Goal and Assumptions}\label{subsec:oracle-goal}

\begin{definition}[Signature \cite{DBLP:conf/eurocrypt/CanalesMartinezCHRSS24}]
\label{def:signature}
Let
$\bA_j^{(k)}
=
({\bA}_{j,1}^{(k)},\ldots,
{\bA}_{j,d^{(k)}}^{(k)})$
be the weight vector of neuron $(k,j)$, and assume that
${\bA}_{j,1}^{(k)}\neq0$. Its signature is the vector
\begin{equation}
\label{eq:neuron-signature}
\widehat{\bA}_j^{(k)}
:=
\frac{
    \bA_j^{(k)}
}{
    {\bA}_{j,1}^{(k)}
}
=
\left(
    1,
    \frac{
        {\bA}_{j,2}^{(k)}
    }{
        {\bA}_{j,1}^{(k)}
    },
    \ldots,
    \frac{
        {\bA}_{j,d^{(k)}}^{(k)}
    }{
        {\bA}_{j,1}^{(k)}
    }
\right),
\end{equation}
where the true weight vector satisfies
$\bA_j^{(k)}
=
{\bA}_{j,1}^{(k)}
\widehat{\bA}_j^{(k)}$. 
\end{definition} 
Thus, after signature recovery, the only remaining ambiguity is the
nonzero scalar ${\bA}_{j,1}^{(k)}$. Its magnitude does not need to be
recovered \cite{DBLP:conf/crypto/CarliniJM20}: since
$\operatorname{ReLU}(cz)=c\operatorname{ReLU}(z)$ for every $c>0$,
a positive scaling can be absorbed into the outgoing weights of the
neuron. Its sign, however, is essential. Negating the recovered affine
form exchanges its active and inactive sides and cannot be absorbed
through ReLU. We call the sign of ${\bA}_{j,1}^{(k)}$ the sign of neuron
$(k,j)$. Sign recovery determines this sign and thereby identifies the
true active side of the neuron. 
Denote signs in the layer $k$ by $(s_1,s_2,\cdots,s_{d^{(k+1)}})$, $s_j \in \{1,-1\},\forall j \in [d^{(k+1)}]$, and define the sign matrix \begin{equation}\label{eqn:S_matrix}
    \bS^{(k)}=\operatorname{diag}( s^{(k)}_1, s^{(k)}_2,\dots, s^{(k)}_{d^{(k+1)}}). 
\end{equation}
Then, given $\bx\in \mathbb{R}^{d^{(1)}}$, by Eq. \eqref{eq:G_F_model_output}, the model output in S5 setting is
\begin{equation}\label{eq:sign_model_output}
\mathcal{F}_{\theta}({\bx}) = \bG^{(k+1)}\bI^{(k)}\bS^{(k)}\left(\widehat{\bA}^{(k)}\left(\bF^{(k-1)}\bx + \bbeta^{(k-1)}\right)+\widehat{\bb}^{(k)}\right)+\bgamma^{(k+1)},
\end{equation}
where $\widehat{\bA}^{(k)}$ and $\widehat{\bb}^{(k)}$ are the unsigned signatures and biases in layer $k$. 

\subsubsection{Extraction Goal.}
Our final objective is functionally equivalent parameter extraction: given
oracle access to a target network $\mathcal{F}_\theta$, recover parameters
$\hat\theta$ such that the extracted network computes the same function as
the target, up to unavoidable symmetries such as positive neuron rescaling and
permutation within a layer. This paper focuses on sign recovery in S1 hard-label setting and assumes
that a preceding signature-recovery phase has recovered the target signatures
up to nonzero scalar multiples.

\subsubsection{Assumptions.}
\begin{itemize}
    \item \textbf{Known architecture.} The attacker knows
    $(d^{(1)},\dots,d^{(r+2)})$ and that the hidden layers are fully
    connected ReLU layers.
    \item \textbf{Full-domain inputs.} The attacker may adaptively query any
    input in \(\mathbb R^{d^{(1)}}\).
    \item \textbf{Precise computation.} The analysis assumes exact real
    arithmetic or sufficiently high floating-point precision.
    \item \textbf{Oracle access.} We consider both S5 raw-output and S1
    hard-label access.
    \item \textbf{Available signatures.} For the target layer $k$, we assume that all the weights and biases of the preceding layers $1,\cdots, k-1$ are recovered, while  each neuron's
    signature $\widehat{\bA}^{(k)}$ and bias  $\widehat{\bb}^{(k)}$ are known up to an unknown nonzero scalar in layer $k$. Also, we assume that no two signatures are the same \cite{DBLP:conf/eurocrypt/CanalesMartinezCHRSS24}. Our goal is to recover the signs in layer $k$.  
\end{itemize}

%% file: sections/limitations.tex

\section{Existing Sign Recovery Methods in S1 Setting and their Limitations}
\label{sec:existing-sign-recovery}


\subsection{{\em Future Toggle} in Hard-label Setting}
\label{subsec:hard-label-future-toggle-background}

\subsubsection{{\em Neuron Wiggle}  in S5 Setting \cite{DBLP:conf/eurocrypt/CanalesMartinezCHRSS24}.}
The {\em Neuron Wiggle} method, proposed by Canales{-}Mart{\'{\i}}nez {\em et al.} at EUROCRYPT 2024, is a heuristic sign recovery method in raw-output setting. The method relies on a basic asymmetry across the target activation boundary: 
{\em the norm of the target layer output change is larger on the
target active side because ReLU blocks the target neuron's contribution
on the inactive side}. 
We first establish this asymmetry and then explain
how it motivates the observation used by {\em Future Toggle} \cite{DBLP:conf/eurocrypt/CarliniCHRS25} in S1 setting.

Suppose that $\bx$ is a critical point of the target neuron $(k,j)$, and denote its two adjacent linear neighborhoods by $\mathbb L_{\bx}$ and $\mathbb L'_{\bx}$. Their activation statuses differ only in the state of neuron $(k,j)$. Without loss of generality, suppose that the target neuron is active in $\mathbb L_{\bx}$ and inactive in $\mathbb L'_{\bx}$. Denote the corresponding activation matrices in layer $k$ by
$\bI_+^{(k)}$ and $\bI_-^{(k)}$. Choose a 
wiggle $\bdelta^{(1)}\in\mathbb R^{d^{(1)}}$ such that
$\bx+\bdelta^{(1)}\in\mathbb L_{\bx}$ and
$\bx-\bdelta^{(1)}\in\mathbb L'_{\bx}$. It induces the layer-$k$ wiggle
$\bdelta^{(k)}:=\bF^{(k-1)}\bdelta^{(1)}$ by Def. \ref{def:layer-i-wiggle}. 
Hence, the layer-$(k+1)$ wiggles are
$\bdelta^{(k+1)}
=
\bI_+^{(k)}
\bA^{(k)}
\bdelta^{(k)}$
and
$\bdelta'^{(k+1)}
=
\bI_-^{(k)}
\bA^{(k)}
\bdelta^{(k)}$
on the active and inactive sides, respectively. 
Let $\mathbb S$ denote the set of active neurons in layer $k$ within
$\mathbb L_{\bx}$. We have
\begin{equation}\label{eq:wiggle_delta_k+1}
    \|\bdelta^{(k+1)}\|^2
=
\sum_{i\in\mathbb S}
|\bA_i^{(k)}\bdelta^{(k)}|^2, ~~ ~\|\bdelta'^{(k+1)}\|^2
=
\sum_{i\in\mathbb S\setminus\{j\}}
|\bA_i^{(k)}\bdelta^{(k)}|^2.
\end{equation}
Consequently,
\begin{equation}\label{eq:wiggle_delta_k+1_gap}
\|\bdelta^{(k+1)}\|^2
=
\|\bdelta'^{(k+1)}\|^2
+
|\bA_j^{(k)}\bdelta^{(k)}|^2. 
\end{equation}
Therefore, whenever
$\bA_j^{(k)}\bdelta^{(k)}\neq0$, the layer-$(k+1)$ wiggle has
a strictly larger norm on the active side. 
For one coordinate of the model output vector ({\em e.g.}, the first coordinate), Eq. \eqref{eq:G_F_model_output} gives the following output changes:
\begin{equation}\label{eq:wiggle_changes}
\begin{array}{ll}
     \bdelta_1^{(r+2)}
&:=
(\mathcal F_{\theta}(\bx+\bdelta^{(1)})
-
\mathcal F_{\theta}(\bx))_1
=
\bG_1^{(k+1)}\bdelta^{(k+1)},\\
\bdelta_1^{\prime(r+2)}
&:=
(\mathcal F_{\theta}(\bx)
-
\mathcal F_{\theta}(\bx-\bdelta^{(1)}))_1
=
\bG_1^{(k+1)}\bdelta'^{(k+1)}.
\end{array}
\end{equation}
They satisfy
$\bdelta_1^{(r+2)}
=
\bdelta_1^{\prime(r+2)}
+
\bG_{1,j}^{(k+1)}
\bA_j^{(k)}
\bdelta^{(k)}$.
Thus, the active side contains the additional contribution of the target
neuron.

Since $|\bA_j^{(k)}\bdelta^{(k)}|
=\|\bA_j^{(k)}\|\,\|\bdelta^{(k)}\|
|\cos\angle(\bA_j^{(k)},\bdelta^{(k)})|$, $|\bA_j^{(k)}\bdelta^{(k)}|$ is maximized when $\bdelta^{(k)}$ is parallel to $\bA_j^{(k)}$. Since $\bdelta^{(k)}\in \mathbb{V}_{\bx}^{(k)}$ according to Def. \ref{def:control-space} and $\bA_j^{(k)}\not\in \mathbb{V}_{\bx}^{(k)}$, the 
{\em Neuron Wiggle} thereby chooses the direction of the wiggle $\bdelta^{(1)}\in \mathbb{R}^{d^{(1)}}$, to have a layer-$k$ wiggle $\bdelta^{(k)}$, which is exactly parallel to  the projection of $\bA_j^{(k)}$ to $\mathbb{V}_{\bx}^{(k)}$. This wiggle strengthens the target neuron $(k, j)$'s contribution, making the absolute network output change on the active side more likely to be larger. 
It therefore predicts the side with the
larger absolute output change to be the target active side
\cite{DBLP:conf/eurocrypt/CanalesMartinezCHRSS24}. 
Since a single comparison depends on several factors, notably the network architecture and the neuron activation states around $\bx$, an individual vote is not guaranteed to be correct. {\em Neuron Wiggle} thereby aggregates votes from many critical points to improve the accuracy of sign recovery. 

\subsubsection{Motivation from {\em Neuron Wiggle}: {\em Future Toggle} \cite{DBLP:conf/eurocrypt/CarliniCHRS25}.} 

In the hard-label setting, the attacker cannot observe changes in the output logits. At EUROCRYPT 2025, Carlini {\em et al.} \cite{DBLP:conf/eurocrypt/CarliniCHRS25} instead compare the walking distance from the two sides of a dual point to the first future-layer neuron toggle. When a boundary walk crosses a neuron's activation boundary, its activation state changes and the visible decision boundary bends, as illustrated in Fig.~\ref{fig:prelim-linear-partition}. We call such an activation-state change a {\em neuron toggle}. The side that reaches a future-layer toggle after a shorter distance is predicted as the target active side.

The distance comparison is motivated by the statistical signal exploited by {\em Neuron Wiggle}. Conceptually,  consider two opposite input perturbations $\bdelta^{(1)}$ and $\bdelta'^{(1)}=-\bdelta^{(1)}$ from the dual point $\bx$. The corresponding layer-$(k+1)$ wiggles $\bdelta^{(k+1)}$ and $\bdelta'^{(k+1)}$ satisfy Eq.~\eqref{eq:wiggle_delta_k+1}. For a future neuron $(p,q)$ with $p > k$, the corresponding changes in its preactivation on the two sides are $\left|
\bA_q^{(p)}\bI^{(p-1)}\cdots
\bA^{(k+1)}\bdelta^{(k+1)}
\right|$ and $\left|
\bA_q^{(p)}\bI^{(p-1)}\cdots
\bA^{(k+1)}\bdelta'^{(k+1)}
\right| $ respectively.  For the same distance $\|\bdelta^{(1)}\|$, the $\|\bdelta^{(k+1)}\|$ on the active side is larger than $\|\bdelta'^{(k+1)}\|$ on the inactive side according to Eq. \eqref{eq:wiggle_delta_k+1_gap}, hence a future neuron's preactivation tends to change faster on the active side. Although hard-label access doesn't reveal preactivation's rate of change, it reveals the resulting activation-state change when the preactivation crosses zero. {\em Future Toggle} therefore uses the distance to this toggle as an indirect proxy for the unobservable rate of change.

To strengthen this effect, the input perturbation should maximize $\left|
\bA_j^{(k)}\bF^{(k-1)}\bdelta^{(1)}
\right|.$ The ideal input-space direction is therefore parallel to $(\bA_j^{(k)}\bF^{(k-1)})^\top$. However, the attacker must remain on the visible decision boundary in order to detect its bends. For an adjacent decision facet with unit normal $\bn^{(1)}$, the walking direction is thus chosen parallel to $(\bA_j^{(k)}\bF^{(k-1)})^\top
-
\left\langle
(\bA_j^{(k)}\bF^{(k-1)})^\top,\bn^{(1)}
\right\rangle\bn^{(1)}.$ 

Starting from the dual point $\bx$, the attacker follows the projected direction on one adjacent decision facet until the decision boundary bends. Using the recovered parameters of layers $1,\ldots,k$, the attacker checks whether the bend is caused by a neuron in one of these layers. If so, it adds the current segment length(i.e., the Euclidean distance from the previous bend, or from $\bx$ for the first segment) to the accumulated distance, relocates onto the adjacent decision facet, recomputes the projected direction, and continues walking. Otherwise, the bend is attributed to a future-layer neuron toggle and the walk terminates.
The same procedure is applied on the other side of $\bx$, and the side
with the shorter accumulated distance is predicted as the target
active side.

\subsubsection{Limitations of the {\em Future Toggle}.}
\label{subsec:future-toggle-limitations}

{\em Future Toggle} \cite{DBLP:conf/eurocrypt/CarliniCHRS25} attempts to identify the target active side by comparing
the distances from a dual point to the first future layer toggles on
its two sides. However, obtaining these distances and using them for
sign recovery introduce limitations in both efficiency and accuracy.

To facilitate our discussion, we perform a sign recovery experiment using {\em Future Toggle} method~\cite{DBLP:conf/eurocrypt/CarliniCHRS25} on the CIFAR-10 network with architecture \mbox{$3072$-256$\times$3-$64$-$10$}.
The white-box information is used to identify bends caused by neurons in the recovered layers and to evaluate whether each valid vote is correct. 
Table~\ref{tab:whitebox-future-toggle-cifar10} summarizes the resulting boundary walking and voting statistics. 
The column of ``Bends from layers $1,\cdots, k$'' reports the number of bends (neuron toggles) from recovered layers encountered by walking from each attempted dual point.
Entries in Table~\ref{tab:whitebox-future-toggle-cifar10} are reported as ``the mean $\pm$ standard deviation'' values across the ten selected neurons.
\begin{table}
\centering
\caption{White-box diagnostic of {\em Future Toggle} on the CIFAR-10 network
\mbox{$3072$-$256$$\times$3-$64$-$10$}. For each hidden layer,
ten neurons are randomly selected, with $1000$ attempted dual points
evaluated for each of these neurons.}
\label{tab:whitebox-future-toggle-cifar10}
\setlength{\tabcolsep}{5.5pt}
\renewcommand{\arraystretch}{1.12}
\begin{tabular}{@{}cccc@{}}
\toprule
Layer
& \makecell{Bends from layers $1,\ldots,k$}
& \makecell{Dual point utilization (\%)\\ $\eta_{\mathrm{dual}}=n_{\mathrm{valid}}/n_{\mathrm{attempt}}$}
& \makecell{Vote accuracy (\%)\\
$p = n_{\mathrm{correct}} / n_{\mathrm{valid}}$} \\
\midrule
1 & $0.87 \pm 0.22$  & $99.74 \pm 0.18$ & $54.14 \pm 7.16$ \\
2 & $1.58 \pm 0.08$  & $94.65 \pm 1.63$ & $56.51 \pm 3.15$ \\
3 & $6.73 \pm 0.25$  & $68.18 \pm 2.63$ & $55.31 \pm 2.31$ \\
4 & $12.45 \pm 0.52$ & $3.21 \pm 0.53$  & $52.82 \pm 8.08$ \\
\bottomrule
\end{tabular}

\end{table}

\begin{itemize}
    \item {\bf Limitation 1: Cost of decision boundary tracing.}
During each of the two walks  in $\mathbb{D}_{ab} \cap \mathbb{L}_{\bx}$ and $\mathbb{D}_{ab} \cap \mathbb{L}'_{\bx}$ from a dual point $\bx$, the attacker must detect when the current decision facet ends and the decision boundary bends. This information is not directly provided by the hard-label oracle. Instead, the attacker must repeatedly query the oracle to determine whether the walk remains on the same decision facet and use binary search to locate the bend when the facet changes. 
This cost is further amplified when a detected bend is caused by the neuron in a known layer $1,\ldots,k$ rather than by a future layer. Such a bend does not terminate the walk. Instead, the attacker must recover the normal of the new decision facet and continue the walk. 
Each such bend requires the recovery of a new decision facet normal, which involves $d^{(1)}-1$ coordinate ratio searches, according to Eq.~\eqref{eq:recovered-input-space-normal} in Sect. \ref{subsec:recovering-projected-unit-normals}. Consequently, this cost can be substantial when the model input dimension is large ({\em e.g.,} $d^{(1)}=3072$).

In the 2nd column of Table~\ref{tab:whitebox-future-toggle-cifar10}, when the layer depth increases, the mean number of bends from the recovered layers rises from $0.87$ to $12.45$, indicating that walks in deeper layers require more decision-facet normal recoveries.

\item {\bf Limitation 2: Limited utilization of dual points.} 
A search may repeatedly encounter activation boundaries belonging to recovered layers or fail to reach a future layer toggle within the maximum searching distance. In either case, the dual point is discarded.
Let $n_{\mathrm{attempt}}$ denote the number of
attempted dual points and $n_{\mathrm{valid}}$ the number that produce
valid votes. The dual point utilization rate is defined as
$\eta_{\mathrm{dual}}:=n_{\mathrm{valid}}/n_{\mathrm{attempt}}$. 
A lower $\eta_{\mathrm{dual}}$ requires more attempted dual points and results in a longer time.

In the third column of Table~\ref{tab:whitebox-future-toggle-cifar10}, as the layer depth increases, the dual point utilization rate falls from $99.74\%$ to $3.21\%$. 

\item {\bf Limitation 3: Weak statistical advantage.}
{\em Future Toggle} infers the target active side through two successive
proxy relations: a larger target layer output change is expected to
produce faster changes in future neurons, and the faster neuron value changes are expected to produce a shorter distance to the first future layer toggle. 
Neither relation is guaranteed to hold at every dual point, and 
the probability that the target active side produces the shorter distance may be only slightly greater than one half.

Let $p=1/2+\gamma$ denote the probability that an individual valid vote is correct, where $\gamma>0$ represents the statistical advantage in favor of the target active side. Assuming that the valid votes are independent and share the same success probability $p$, Hoeffding's inequality gives
$\Pr[\textnormal{the majority vote is incorrect}]
\leq\exp(-2\gamma^2n_{\mathrm{valid}})$.
Therefore, ensuring an error probability of at most $\alpha$ requires
$n_{\mathrm{valid}}
\geq\ln(1/\alpha)/(2\gamma^2)$, which grows with $1/\gamma^2$. For example, when $p=0.55$, about $n_{\mathrm{valid}}=10^3$ valid votes are required to reach a confidence level of $99\%$ \cite{DBLP:conf/eurocrypt/CarliniCHRS25}. 
This bound shows that a weaker statistical advantage requires more
valid votes to reach the same confidence level. With a limited budget
$n_\mathrm{attempt}$, more neurons may therefore remain below the required confidence
threshold. 

In the 4th column of Table~\ref{tab:whitebox-future-toggle-cifar10}, $n_{\mathrm{correct}}$ denotes the number of
valid dual points that produce correct votes; the vote accuracy is $p = n_{\mathrm{correct}} / n_{\mathrm{valid}}$. The average vote accuracy of each layer ranges from $52.82\%$ to $56.51\%$.  It shows that an individual valid vote is only slightly more likely to be correct than random guessing.

\item {\bf Limitation 4: Amplified limitations in the last hidden layer.}
The {\em Future Toggle} requires a different terminal event for the last hidden layer,  
where there is
no future-layer ReLU neuron.  
It therefore continues each search until an intersection of decision boundaries  and uses this distance instead \cite{DBLP:conf/eurocrypt/CarliniCHRS25}.

This special treatment amplifies the preceding three limitations.
First, a class decision boundary intersection may be farther from the
dual point than a future layer neuron toggle, resulting in a longer walk. Second, the required intersection may not be reached within the maximum walking distance and the corresponding dual point is
then discarded, reducing $\eta_{\mathrm{dual}}$ and increasing the
number $n_{\mathrm{attempt}}$ of attempted dual points. Finally, the two
walks may terminate at intersections with different class decision boundaries, so their measured distances depend on different class decision boundaries as well as on the rates of logit change. 

In the last hidden layer of Table~\ref{tab:whitebox-future-toggle-cifar10}, only about $32$ of the $1000$
attempted dual points per neuron produce valid votes on average.
\end{itemize}

\subsection{System of Equations ({\em SOE}) and Its Extensions}
\label{subsec:SOE and its extensions}


\subsubsection{The Raw-Output {\em SOE} Method \cite{DBLP:conf/eurocrypt/CanalesMartinezCHRSS24}.} At EUROCRYPT 2024, Canales{-}Mart{\'{\i}}nez {\em et al.} proposed the System of Equations ({\em SOE}) method in S5 setting. 
Specifically, targeting a single coordinate of the output vector ({\em e.g.}, the first coordinate) at an input $\bx$, the attacker samples $d^{(k+1)}$ perturbations $\bdelta^{(1)}_1,\ldots,\bdelta^{(1)}_{d^{(k+1)}} \in \mathbb{R}^{d^{(1)}}$ such that every perturbed input $\bx+\bdelta^{(1)}_\ell$ ($\ell \in [d^{(k+1)}]$) remains within the same linear neighborhood as $\bx$. Then the linear system is built by computing the output difference between $\mathcal F_{\theta}(\bx+\bdelta_{\ell})$ and $\mathcal F_{\theta}(\bx)$, 
\begin{equation}\small
\label{eq:raw-output-SOE}
\begin{bmatrix}
\left(
    \widehat{\bA}^{(k)}
    \bF^{(k-1)}
    \bdelta^{(1)}_1
\right)^\top
\\
\vdots
\\
\left(
    \widehat{\bA}^{(k)}
    \bF^{(k-1)}
    \bdelta^{(1)}_{d^{(k+1)}}
\right)^\top
\end{bmatrix}
\left(
    \bG_1^{(k+1)}
    \bI^{(k)}\bS^{(k)}
\right)^\top
=
\begin{bmatrix}
(
    \mathcal F_{\theta}(\bx+\bdelta^{(1)}_1)
    -
    \mathcal F_{\theta}(\bx)
)_1
\\
\vdots
\\
(
    \mathcal F_{\theta}
    (\bx+\bdelta^{(1)}_{d^{(k+1)}})
    -
    \mathcal F_{\theta}(\bx)
)_1
\end{bmatrix},
\end{equation}
where $\bF^{(k-1)}$, $\widehat{\bA}^{(k)}$, $\widehat{\bb}^{(k)}$ are known and the target is to recover $\bS^{(k)}$. 
Solving this linear system yields the  unknown vector $(\bG_1^{(k+1)} \bI^{(k)}\bS^{(k)})^\top$.
Since the ReLU activation suppresses negative values, any neuron $t$ in the layer $k$ that is inactive at $\bx$ will have a corresponding entry of zero in $(\bG_1^{(k+1)} \bI^{(k)}\bS^{(k)})^\top$.
Once the inactive neurons are identified ({\em e.g.}, neuron $t$), if $\widehat{\bA}^{(k)}_t \bx^{(k)} + \widehat{\bb}^{(k)}_t>0$, its sign will be $s^{(k)}_t=-1$. 
To ensure the linear system in Eq.~(\ref{eq:raw-output-SOE}) has a unique solution, the coefficient matrix must be of full rank, which requires $\text{rank}(\widehat{\bA}^{(k)}\bF^{(k-1)}) = d^{(k+1)}$. This implies $d^{(1)}, \dots, d^{(k)} \geq  d^{(k+1)}$. 
Consequently, this deterministic method is primarily applicable to network architectures that are sufficiently contractive.

\subsubsection{The {\em SOE} + {\em Wiggle} in Raw-Output Setting \cite{DBLP:conf/eurocrypt/LiuSELBP26}.} 
\label{subsubsec:removing-inactive-neurons}
At EUROCRYPT 2026, Liu {\em et al.}  combined the {\em SOE} and the {\em Neuron Wiggle} methods. As summarized in Sect.~\ref{subsec:hard-label-future-toggle-background}, {\em Neuron Wiggle} \cite{DBLP:conf/eurocrypt/CanalesMartinezCHRSS24} recovers the sign of each neuron along with a confidence level, where a high confidence level strongly indicates a correct recovery. Recall that the unknown vector in Eq. \eqref{eq:raw-output-SOE} is expressed as ${(\bG^{(k+1)}_1 \bI^{(k)} \bS^{(k)})}^\top = (\bG^{(k+1)}_{1,1}\tau^{(k)}_1s^{(k)}_1, \dots, \bG^{(k+1)}_{1,d^{(k+1)}}\tau^{(k)}_{d^{(k+1)}}s^{(k)}_{d^{(k+1)}})^\top$. If neuron $t$ is identified as inactive (determined by evaluating $s^{(k)}_t (\widehat{\bA}^{(k)}_t \bx^{(k)} + \widehat{\bb}^{(k)}_t)<0$, provided that the sign $s^{(k)}_t$ is recovered correctly with a high confidence level by {\em Neuron Wiggle}), the corresponding term $\bG^{(k+1)}_{1,t}\tau^{(k)}_t s^{(k)}_t$ can be directly set to $0$, thereby reducing the number of unknowns.
Let $\mathbb K$ contain the neurons that {\em Neuron Wiggle} identifies as
inactive at $\bx$ with high confidence, and
$\mathbb U:=[d^{(k+1)}]\setminus\mathbb K$ contain the remaining neurons.
For a vector $\boldsymbol v$, let
$[\boldsymbol v]_{\mathbb K}$ and
$[\boldsymbol v]_{\mathbb U}$ denote the subvectors indexed by
$\mathbb K$ and $\mathbb U$, respectively. Since every neuron in
$\mathbb K$ is inactive, the corresponding entries of
$(\bG_1^{(k+1)}\bI^{(k)} \bS^{(k)})^\top$ are zero. The {\em SOE} thereby reduces to
\begin{equation}\small
\label{eq:reduced-raw-output-SOE}
\begin{bmatrix}
\left[
    \widehat{\bA}^{(k)}
    \bF^{(k-1)}
    \bdelta^{(1)}_1
\right]_{\mathbb U}^{\top}
\\
\vdots
\\
\left[
    \widehat{\bA}^{(k)}
    \bF^{(k-1)}
     \bdelta^{(1)}_{d^{(k+1)}}
\right]_{\mathbb U}^{\top}
\end{bmatrix}
\left[
    \left(
        \bG_1^{(k+1)}
        \bI^{(k)}
        \bS^{(k)}
    \right)^\top
\right]_{\mathbb U}
=
\begin{bmatrix}
(
    \mathcal F_{\theta}(\bx+\bdelta^{(1)}_1)
    -
    \mathcal F_{\theta}(\bx)
)_1
\\
\vdots
\\
(
    \mathcal F_{\theta}(\bx+\bdelta^{(1)}_{d^{(k+1)}})
    -
    \mathcal F_{\theta}(\bx)
)_1
\end{bmatrix}.
\end{equation}
The total number of unknowns reduces from $d^{(k+1)}$ to $|\mathbb U|$. For these remaining unknowns to be uniquely determined, the rank of the coefficient matrix in Eq.~\eqref{eq:reduced-raw-output-SOE} must equal  $|\mathbb U|$. Since the rank of this coefficient matrix is upper-bounded by $\operatorname{rank}(\widehat{\bA}^{(k)}\bF^{(k-1)})$, then $|\mathbb U| \leq \operatorname{rank}(\widehat{\bA}^{(k)}\bF^{(k-1)})$. Since $|\mathbb U| + |\mathbb K| = d^{(k+1)}$, this yields the following necessary condition for uniquely solving Eq.~\eqref{eq:reduced-raw-output-SOE}:
\begin{equation}
\label{eq:reduced-SOE-necessary-condition}
|\mathbb K|
\geq
d^{(k+1)}
-
\operatorname{rank}
\left(
    \widehat{\bA}^{(k)}
    \bF^{(k-1)}
\right).
\end{equation}

The original {\em SOE} method \cite{DBLP:conf/eurocrypt/CanalesMartinezCHRSS24} constructs the linear system at a single input $\bx$, which inherently limits the rank of the system to the rank of the local prefix matrix $\bF^{(k-1)}$ around $\bx$. 
However, since the unknown vector is $(\bG_1^{(k+1)} \bI^{(k)} \bS^{(k)})^\top$, any input point sharing the same activation state in the layers $k$ to $r+1$ 
can be utilized.  
In other words, as long as the activation states of all neurons in layer $k$ and all subsequent layers remain invariant, the target layer's activation matrix $\bI^{(k)}$ and the suffix map $\bG^{(k+1)}$ are identical across these inputs. 
Meanwhile, the activation states of neurons in layers
$1,\ldots,k-1$ may differ across these inputs. These differences can
produce distinct prefix maps
$\bF_{\bx_t}^{(k-1)}$ for $\bx_t$ ($1 \leq t \leq T$),
and hence distinct {\em SOE} coefficient matrices. 
At each $\bx_t$ ($1 \leq t \leq T$), choose perturbations
$\bdelta^{(1)}_{t,1},\ldots,\bdelta^{(1)}_{t,d^{(k+1)}}$
that remain in its linear neighborhood. After removing the entries
indexed by $\mathbb K$, the corresponding {\em SOE} systems can be combined
as
\begin{equation}\small
\label{eq:raw-output-SOE-extension}
\begin{bmatrix}
\left[
    \widehat{\bA}^{(k)}
    \bF_{\bx_1}^{(k-1)}
    \bdelta^{(1)}_{1,1}
\right]_{\mathbb U}^{\top}
\\
\vdots
\\
\left[
    \widehat{\bA}^{(k)}
    \bF_{\bx_1}^{(k-1)}
   \bdelta^{(1)}_{1,d^{(k+1)}}
\right]_{\mathbb U}^{\top}
\\
\vdots
\\
\left[
    \widehat{\bA}^{(k)}
    \bF_{\bx_T}^{(k-1)}
    \bdelta^{(1)}_{T,1}
\right]_{\mathbb U}^{\top}
\\
\vdots
\\
\left[
    \widehat{\bA}^{(k)}
    \bF_{\bx_T}^{(k-1)}
    \bdelta^{(1)}_{T,d^{(k+1)}}
\right]_{\mathbb U}^{\top}
\end{bmatrix}
\left[
    \left(
        \bG_1^{(k+1)}
        \bI^{(k)}
        \bS^{(k)}
    \right)^\top
\right]_{\mathbb U}
=
\begin{bmatrix}
(
    \mathcal F_{\theta}(\bx_1+\bdelta^{(1)}_{1,1})
    -
    \mathcal F_{\theta}(\bx_1)
)_1
\\
\vdots
\\
(
    \mathcal F_{\theta}
    (\bx_1+\bdelta^{(1)}_{1,d^{(k+1)}})
    -
    \mathcal F_{\theta}(\bx_1)
)_1
\\
\vdots
\\
(
    \mathcal F_{\theta}(\bx_T+\bdelta^{(1)}_{T,1})
    -
    \mathcal F_{\theta}(\bx_T)
)_1
\\
\vdots
\\
(
    \mathcal F_{\theta}
    (\bx_T+\bdelta^{(1)}_{T,d^{(k+1)}})
    -
    \mathcal F_{\theta}(\bx_T)
)_1
\end{bmatrix}.
\end{equation}

The combined coefficient matrix may have a higher rank than the
coefficient matrix obtained at any individual input. When its rank
reaches $|\mathbb U|$, the remaining unknown entries are uniquely
determined.
\subsubsection{Hard-Label {\em SOE} \cite{DBLP:conf/latincrypt/CanalesMartinezS25}.} Canales{-}Mart{\'{\i}}nez {\em et al.} extended the {\em SOE} method \cite{DBLP:conf/eurocrypt/CanalesMartinezCHRSS24} to S1 setting at LATINCRYPT 2025. 
Suppose a transition point
$\bx\in\mathbb D_{ab}$ is located at the decision boundary between classes $a$ and $b$ ($a < b$). According to Eq.~\eqref{eq:decision boundary} and \eqref{eq:per_suffix_matrix}, we have:
\begin{equation}\label{eq:hard_lable_SOE_decision}
\mathcal D_{ab}(\bx)
=
\left(
    \bG_a^{(k+1)}-\bG_b^{(k+1)}
\right)  \bx^{(k+1)}
+
\bgamma_a^{(k+1)}-\bgamma_b^{(k+1)} = 0,
\end{equation}
where $ \bx^{(k+1)} =\bI^{(k)} (\bA^{(k)} (\bF^{(k-1)}\bx+\bbeta^{(k-1)})+\bb^{(k)})$. 
The attacker then samples $d^{(k+1)}-1$ perturbations $\bdelta^{(1)}_1,\ldots,\bdelta^{(1)}_{d^{(k+1)}-1}$ such that $\bx+\bdelta^{(1)}_\ell \in \mathbb{D}_{ab}\cap \mathbb{L}_{\bx}$, with $\ell \in [d^{(k+1)}-1]$.  
Then according to Eq. \eqref{eq:hard_lable_SOE_decision}, we have 
\begin{equation}
    \begin{aligned}
\mathcal D_{ab}(\bx+\bdelta^{(1)}_\ell) - \mathcal{D}_{ab}(\bx)
= 
\left(
    \bG_a^{(k+1)}-\bG_b^{(k+1)}
\right)
\bI^{(k)} \bS^{(k)} \widehat{\bA}^{(k)} \bF^{(k-1)} \bdelta^{(1)}_\ell
= 0.
\end{aligned}
\end{equation}
With $d^{(k+1)}-1$ linearly independent equations from $\bdelta^{(1)}_1,\ldots,\bdelta^{(1)}_{d^{(k+1)}-1}$, we can construct
the linear system,
\begin{equation}
\label{eq:hard-label SOE}
\begin{bmatrix}
\left[
    \widehat{\bA}^{(k)}
    \bF^{(k-1)}
    \bdelta_1^{(1)}
\right]^{\top}
\\
\vdots
\\
\left[
    \widehat{\bA}^{(k)}
    \bF^{(k-1)}
    \bdelta^{(1)}_{d^{(k+1)}-1}
\right]^{\top}
\end{bmatrix}
\left[
    \left(
        (\bG_a^{(k+1)}-\bG_b^{(k+1)})
        \bI^{(k)}
        \bS^{(k)}
    \right)^\top
\right]
=
\bm0.
\end{equation}
Solving Eq.~\eqref{eq:hard-label SOE} yields the unknown vector  
$\left(
    (\bG_a^{(k+1)}-\bG_b^{(k+1)})
    \bI^{(k)}
    \bS^{(k)}
\right)^\top$. 
Then the sign recovery process is similar to {\em SOE} method introduced above. 
\paragraph{Limitation.} To ensure the linear system in Eq.~\eqref{eq:hard-label SOE} has a unique nonzero solution up to a scalar multiple, also the same as the {\em SOE} method, hard label {\em SOE} requires
$\operatorname{rank}(\widehat{\bA}^{(k)}\bF^{(k-1)})=d^{(k+1)}$, then 
$d^{(1)}, \dots, d^{(k)} \geq d^{(k+1)}$. 
Consequently, the network architectures also need to be {\em strongly contractive}.

%% file: sections/hardlabel.tex
\section{Hard-Label Sign Recovery via \em{Normal Alignment}}
\label{sec:hard-label-normal-jump}





\subsection{Recovering Decision-Facet Normals in S5/S1 Settings}
\label{subsec:recovering-projected-unit-normals}

\subsubsection{Recovering Unit Decision-Facet Normals in the Model Input Space Using  \cite{DBLP:conf/asiacrypt/ChenDGSWW24,DBLP:conf/eurocrypt/CarliniCHRS25}.}
\label{subsubsec:recovering-input-space-unit-normal}
Given a transition point $\bx\in\mathbb D_{ab}$, recall from Eq.~\eqref{eq:hard_lable_SOE_decision}, we have $\mathcal D_{ab}(\bx)
=
\left(
    \bG_a^{(1)}-\bG_b^{(1)}
\right)  \bx^{(1)}
+
\bgamma_a^{(1)}-\bgamma_b^{(1)} = 0$.
The normal vector $\boldsymbol g^{(1)}$ of the local decision facet $\mathbb{D}_{ab}\cap \mathbb{L}_{\bx}$, is given by
$\boldsymbol g^{(1)} := (\bG_a^{(1)}-\bG_b^{(1)})^\top\in \mathbb{R}^{d^{(1)}}$.

The $\boldsymbol g^{(1)}$ can be recovered up to a nonzero
scalar using hard-label queries following the methods in \cite{DBLP:conf/asiacrypt/ChenDGSWW24,DBLP:conf/eurocrypt/CarliniCHRS25}. Let $\boldsymbol e_1,\ldots,\boldsymbol e_{d^{(1)}}$ denote the
standard basis of the model input space $\mathbb R^{d^{(1)}}$.
For each $\ell\in[d^{(1)}]$, let
$\boldsymbol g^{(1)}_\ell
:=
\langle\boldsymbol g^{(1)},\boldsymbol e_\ell\rangle$
denote the $\ell$-th coordinate of $\boldsymbol g^{(1)}$.
Suppose that $\boldsymbol g^{(1)}_1\neq0$.
For each $\ell\in[d^{(1)}]$, take a sufficiently small
step $\alpha\boldsymbol e_\ell$ and use hard-label binary search along
$\boldsymbol e_1$ to find a scalar $\beta_\ell$ such that
$\bx+\alpha\boldsymbol e_\ell+\beta_\ell\boldsymbol e_1$ returns to the decision boundary $\mathbb D_{ab}$. Provided that the returned point remains on the same affine decision
facet, then
we have $\alpha\boldsymbol g^{(1)}_\ell+\beta_\ell\boldsymbol g^{(1)}_1=0$, and hence
$\boldsymbol g^{(1)}_\ell/\boldsymbol g^{(1)}_1=-\beta_\ell/\alpha$.
Using these recovered coordinate ratios, we can recover:
\begin{equation}
\label{eq:recovered-input-space-normal}
\widehat{\boldsymbol g}^{(1)}
:=
\left(
    1,
    -\frac{\beta_2}{\alpha},
    \ldots,
    -\frac{\beta_{d^{(1)}}}{\alpha}
\right)
=
\frac{\boldsymbol g^{(1)}}{\boldsymbol g^{(1)}_1}.
\end{equation} 
Although its magnitude cannot be recovered, its sign can be determined with the following method. 
For a sufficiently small $\varepsilon>0$, $\bx+\varepsilon\widehat{\boldsymbol g}^{(1)}$ is still in the linear neighborhood $\mathbb L_{\bx}$, 
we have
\begin{equation}
\mathcal D_{ab}(\bx+\varepsilon\widehat{\boldsymbol g}^{(1)})
-\mathcal D_{ab}(\bx)
= \boldsymbol g^{(1)} \cdot \varepsilon\widehat{\boldsymbol g}^{(1)}
=\varepsilon \langle\boldsymbol g^{(1)},\widehat{\boldsymbol g}^{(1)}\rangle.
\end{equation}
\begin{itemize}
    \item If the hard-label oracle returns class $a$ at $\bx+\varepsilon\widehat{\boldsymbol g}^{(1)}$, then $\mathcal D_{ab}(\bx+\varepsilon\widehat{\boldsymbol g}^{(1)}) -\mathcal D_{ab}(\bx)>0$, {\em i.e.,} $\langle\boldsymbol g^{(1)},\widehat{\boldsymbol g}^{(1)}\rangle>0$. The sign of $\widehat{\boldsymbol g}^{(1)}$ follows $\boldsymbol g^{(1)}$.
    \item In contrast, if it returns class $b$, then $\langle\boldsymbol g^{(1)},\widehat{\boldsymbol g}^{(1)}\rangle<0$, and $\widehat{\boldsymbol g}^{(1)}$ needs to be reversed.
\end{itemize}
After normalization, we obtain the {\em unit decision-facet normal} of $\mathbb{D}_{ab}\cap \mathbb{L}_{\bx}$ as
$\boldsymbol n^{(1)}:=\boldsymbol g^{(1)}/\|\boldsymbol g^{(1)}\|$.

\subsubsection{Projected Decision-Facet Normal in the Input Space of Layer $k$.}
\label{subsubsec:recovering-layer-space-unit-normal}

Let $\boldsymbol g^{(k)}$ denote the {\em decision-facet normal in the input space of layer $k\geq 1$}, then $\boldsymbol g^{(k)}=
\left[
    \left(
        \bG_a^{(k+1)}-\bG_b^{(k+1)}
    \right)
    \bI^{(k)}
    \bA^{(k)}
\right]^\top
\in\mathbb R^{d^{(k)}}.$
Note that the decision-facet normal $\boldsymbol g^{(1)} $ in the model input space and the decision-facet normal $\boldsymbol g^{(k)}$ in the input space of
layer $k$ satisfy:
\begin{equation}
    \label{eq:g and gk}
    \boldsymbol g^{(1)}=(\bF^{(k-1)})^\top \boldsymbol g^{(k)}.
\end{equation}
This constructs a linear system with an unknown vector $\boldsymbol g^{(k)}$. Then we can recover $\boldsymbol g^{(k)}$ from $ \boldsymbol g^{(1)}$ by solving this system. If
$\operatorname{rank}(\bF^{(k-1)})=d^{(k)}$,
then the system has the unique solution $\boldsymbol g^{(k)}$.
When
$\operatorname{rank}(\bF^{(k-1)})<d^{(k)}$,
the system admits multiple solutions, and its solution set is
$\boldsymbol g^{(k)}
+\ker((\bF^{(k-1)})^\top)$.
Although the full $\boldsymbol g^{(k)}$ is not uniquely determined in the latter
case, its projection
$\overline{\boldsymbol g}^{(k)}
=\bP^{(k)}\boldsymbol g^{(k)}$
can be recovered as the unique minimum-norm solution of the system
using least squares as proved below. 

Let $\boldsymbol h$ be any solution to
Eq.~\eqref{eq:g and gk}, so that
$\boldsymbol h=\boldsymbol g^{(k)}+\boldsymbol u$ for some
$\boldsymbol u\in\ker((\bF^{(k-1)})^\top)$.
By Def.~\ref{def:control-space-projection},
$\ker((\bF^{(k-1)})^\top)
=(\mathbb V_{\bx}^{(k)})^\perp$.
Since $\bP^{(k)}$ is the orthogonal projector onto
$\mathbb V_{\bx}^{(k)}$, we have
$\boldsymbol g^{(k)}-\bP^{(k)}\boldsymbol g^{(k)}
\in\ker((\bF^{(k-1)})^\top)$.
Therefore, we can rewrite $\boldsymbol h$ as
\begin{equation}
\label{eq:solution-projection-decomposition}
\begin{aligned}
\boldsymbol h
&=
\bP^{(k)}\boldsymbol g^{(k)}
+
\left(
\boldsymbol g^{(k)}
-\bP^{(k)}\boldsymbol g^{(k)}
+\boldsymbol u
\right)
=
\bP^{(k)}\boldsymbol g^{(k)}
+\boldsymbol u',
\end{aligned}
\end{equation}
where
$\boldsymbol u'
=\boldsymbol g^{(k)}
-\bP^{(k)}\boldsymbol g^{(k)}
+\boldsymbol u
\in\ker((\bF^{(k-1)})^\top)$. Since $\bP^{(k)}\boldsymbol g^{(k)}
\in\mathbb V_{\bx}^{(k)}$ and
$\boldsymbol u'\in(\mathbb V_{\bx}^{(k)})^\perp$,
the two vectors are orthogonal. Consequently,
$\lVert\boldsymbol h\rVert^2
=\lVert\bP^{(k)}\boldsymbol g^{(k)}\rVert^2
+\lVert\boldsymbol u'\rVert^2$.
The norm is therefore uniquely minimized when
$\boldsymbol u'=\boldsymbol 0$.
Thus, the projected decision-facet normal
$\overline{\boldsymbol g}^{(k)}
=\bP^{(k)}\boldsymbol g^{(k)}$
is the unique minimum-norm solution of
Eq.~\eqref{eq:g and gk}.

In the hard-label setting, the attacker recovers the {\em unit decision-facet normal}
$\boldsymbol n^{(1)}=\boldsymbol g^{(1)}/\|\boldsymbol g^{(1)}\|$ rather than $\boldsymbol g^{(1)}$. We therefore solve
$(\bF^{(k-1)})^\top \frac{ \boldsymbol g^{(k)}}{\|\boldsymbol g^{(1)}\|}
=\boldsymbol n^{(1)}$
and choose any solution
$\frac{ \boldsymbol g^{(k)}}{\|\boldsymbol g^{(1)}\|}\in\mathbb R^{d^{(k)}}$.
Following the same argument as above, all such solutions have the same
projection onto the control space. Given any solution 
$\boldsymbol g^{(k)}/\|\boldsymbol g^{(1)}\|$, this common
projection is
$\bP^{(k)}\frac{ \boldsymbol g^{(k)}}{\|\boldsymbol g^{(1)}\|}
=\frac{\overline{\boldsymbol g}^{(k)}}{\|\boldsymbol g^{(1)}\|}$.
Normalizing this projection gives 
\begin{equation}\label{eq:n_bar}
\overline{\boldsymbol n}^{(k)}
:=
\frac{
    \overline{\boldsymbol g}^{(k)}
}{
    \|\overline{\boldsymbol g}^{(k)}\|
}.
\end{equation}

\subsection{Normal Lengths Comparison in Raw-Output Setting}
\label{subsec:normal-length-model}

\subsubsection{Two-Side Projected Decision-Facet Normals.} 
Let $\bx\in\mathbb C_j^{(k)}\cap\mathbb D_{ab}$ be a dual point, and
let $\mathbb L_{\bx}$ and $\mathbb L'_{\bx}$ denote the two linear
neighborhoods adjacent to the target critical hyperplane.
Then the neuron activation states in $\mathbb L_{\bx}$ and $\mathbb L'_{\bx}$ differ only in the state of neuron $(k,j)$. Without loss of generality, suppose that the target neuron is active in $\mathbb L_{\bx}$ and inactive in $\mathbb L'_{\bx}$. 
Denote the corresponding activation matrices by $\bI_+^{(k)}$ and $\bI_-^{(k)}$, and the decision-facet normals in the input space of layer $k$ by $\boldsymbol g^{(k)}$ and $\boldsymbol g^{\prime(k)}$, respectively. Then we have,
\begin{equation}
\resizebox{\linewidth}{!}{$
\label{eq:two-sided-layer-margin-normals}
\begin{aligned}
\boldsymbol g^{(k)}
&:=
\left[
    \left(
        \bG_a^{(k+1)}-\bG_b^{(k+1)}
    \right)
    \bI_+^{(k)}
    \bA^{(k)}
\right]^\top\\
&=\sum_{i \in \mathbb{S}/\{j\}}(\bG_{a,i}^{(k+1)}-\bG_{b,i}^{(k+1)})
(\bA_i^{(k)})^\top + (\bG_{a,j}^{(k+1)}-\bG_{b,j}^{(k+1)})
(\bA_j^{(k)})^\top,\\
\boldsymbol g^{\prime(k)}
&:=
\left[
    \left(
        \bG_a^{(k+1)}-\bG_b^{(k+1)}
    \right)
    \bI_-^{(k)}
    \bA^{(k)}
\right]^\top =\sum_{i \in \mathbb{S}/\{j\}}(\bG_{a,i}^{(k+1)}-\bG_{b,i}^{(k+1)})
(\bA_i^{(k)})^\top,
\end{aligned}
$}
\end{equation}
where $\mathbb{S}$ contains the indices of all active neurons in layer $k$. 
Therefore, the difference between $\bg^{(k)}$ and $\bg^{\prime(k)}$ is exactly a scalar multiple of the target neuron's weight $A^{(k)}_j$, {\em i.e.}, 
\begin{equation}\label{eq:gap}
\boldsymbol g^{(k)}-\boldsymbol g^{\prime(k)}
=(\bG_{a,j}^{(k+1)}-\bG_{b,j}^{(k+1)})
(\bA_j^{(k)})^\top.
\end{equation}

According to Eq.~(\ref{eq:g and gk}) in Sect.~\ref{subsec:recovering-projected-unit-normals}, $\boldsymbol g^{(k)}$ (or $\boldsymbol g^{\prime(k)}$) can be recovered from the system $\boldsymbol g^{(1)} =(\bF^{(k-1)})^\top \boldsymbol g^{(k)}$. 
In the deeper hidden layers, $\bF^{(k-1)}$ is often rank deficient, so the
full $\boldsymbol g^{(k)}$ (or $\boldsymbol g^{\prime(k)}$)  is generally not uniquely recoverable. Only their
projections onto the control space $\mathbb V_{\bx}^{(k)}$, 
$\overline{\boldsymbol g}^{(k)}
:=\bP^{(k)}\boldsymbol g^{(k)}$ and
$\overline{\boldsymbol g}^{\prime(k)}
:=\bP^{(k)}\boldsymbol g^{\prime(k)}$,
are uniquely determined.

By Def.~\ref{def:control-space-projection}, $\bP^{(k)}$ is
symmetric. Hence, the projected decision-facet normals can be written as
\begin{equation}\small
\label{eq:two-sided-projected-layer-margin-normals}
\begin{aligned}
\overline{\boldsymbol g}^{(k)}
&:=
\bP^{(k)}\boldsymbol g^{(k)}
=
\left[
    \left(
        \bG_a^{(k+1)}-\bG_b^{(k+1)}
    \right)
    \bI_+^{(k)}
    \bA^{(k)}
    \bP^{(k)}
\right]^\top
\\
&=
\sum_{i\in \mathbb{S}/\{j\}}
\left(
    \bG_{a,i}^{(k+1)}-\bG_{b,i}^{(k+1)}
\right)
\bP^{(k)}
\left(\bA_i^{(k)}\right)^\top
+
\left(
    \bG_{a,j}^{(k+1)}-\bG_{b,j}^{(k+1)}
\right)
\bP^{(k)}
\left(\bA_j^{(k)}\right)^\top,
\\
\overline{\boldsymbol g}^{\prime(k)}
&:=
\bP^{(k)}\boldsymbol g^{\prime(k)}
=
\left[
    \left(
        \bG_a^{(k+1)}-\bG_b^{(k+1)}
    \right)
    \bI_-^{(k)}
    \bA^{(k)}
    \bP^{(k)}
\right]^\top\\
&=
\sum_{i\in \mathbb{S}/\{j\}}
\left(
    \bG_{a,i}^{(k+1)}-\bG_{b,i}^{(k+1)}
\right)
\bP^{(k)}
\left(\bA_i^{(k)}\right)^\top.
\end{aligned}
\end{equation}

\subsubsection{Statistical Length Advantage.}
\label{subsubsec:length-based-motivation}
For a matrix $\boldsymbol M$, its squared Frobenius norm is defined by
$\|\boldsymbol M\|_F^2
:=
\sum_m\|\boldsymbol M_{m}\|^2
=
\sum_m\sum_{n}\boldsymbol M_{m,n}^2$.
The matrices
$\boldsymbol I_+^{(k)}
\boldsymbol A^{(k)}
\boldsymbol P^{(k)}$
and
$\boldsymbol I_-^{(k)}
\boldsymbol A^{(k)}
\boldsymbol P^{(k)}$
have identical rows except for the row corresponding to the target
neuron, which is
$\boldsymbol A_j^{(k)}\boldsymbol P^{(k)}$
on the active side and zero on the inactive side. Therefore,
\begin{equation}
\label{eq:projected-activation-matrix-frobenius}
\left\|
\boldsymbol I_+^{(k)}
\boldsymbol A^{(k)}
\boldsymbol P^{(k)}
\right\|_F^2
=
\left\|
\boldsymbol I_-^{(k)}
\boldsymbol A^{(k)}
\boldsymbol P^{(k)}
\right\|_F^2
+
\left\|
\boldsymbol A_j^{(k)}
\boldsymbol P^{(k)}
\right\|^2,
\end{equation}
which 
shows that the active side matrix has a larger Frobenius norm, or
equivalently, greater total squared row energy. 

From
Eq.~\eqref{eq:two-sided-projected-layer-margin-normals}, we have
$\overline{\boldsymbol g}^{(k)}
=
\overline{\boldsymbol g}^{\prime(k)}
+
(\bG_{a,j}^{(k+1)}-\bG_{b,j}^{(k+1)})
\bP^{(k)}(\bA_j^{(k)})^\top$.
Consequently,
\begin{equation}\small
\label{eq:projected-normal-length-difference}
\begin{array}{l}
\left\|
\overline{\boldsymbol g}^{(k)}
\right\|^2
-
\left\|
\overline{\boldsymbol g}^{\prime(k)}
\right\|^2=\\
2
\left(
\bG_{a,j}^{(k+1)}
-
\bG_{b,j}^{(k+1)}
\right)
\left\langle
\overline{\boldsymbol g}^{\prime(k)},
\bP^{(k)}
\left(
\bA_j^{(k)}
\right)^\top
\right\rangle
+
\left(
\bG_{a,j}^{(k+1)}
-
\bG_{b,j}^{(k+1)}
\right)^2
\left\|
\bP^{(k)}
\left(
\bA_j^{(k)}
\right)^\top
\right\|^2.
\end{array}
\end{equation}
 Generally, the last term in
Eq.~\eqref{eq:projected-normal-length-difference} is 
positive. If the angle between
$(\bG_{a,j}^{(k+1)}-\bG_{b,j}^{(k+1)})
\bP^{(k)}(\bA_j^{(k)})^\top$
and
$\overline{\boldsymbol g}^{\prime(k)}$
is no bigger than $\pi/2$, their inner product is nonnegative. The first term
is therefore nonnegative, and
$\|\overline{\boldsymbol g}^{(k)}\|
>
\|\overline{\boldsymbol g}^{\prime(k)}\|$.

If the angle is greater than $\pi/2$, the first term in
Eq.~\eqref{eq:projected-normal-length-difference} is negative,
whereas the last term remains positive. The length ordering is
therefore determined by their relative magnitudes. If the last term is
larger than the absolute value of the first term, then
$\|\overline{\boldsymbol g}^{(k)}\|
>
\|\overline{\boldsymbol g}^{\prime(k)}\|$; otherwise, 
$\|\overline{\boldsymbol g}^{(k)}\|
\leq
\|\overline{\boldsymbol g}^{\prime(k)}\|$.

\paragraph{Statistical interpretation.}
We formalize the preceding intuition using an idealized model.  
\begin{proposition}\label{pro:1}
    We fix
$\bI_+^{(k)}\bA^{(k)}\bP^{(k)}$ and
$\bI_-^{(k)}\bA^{(k)}\bP^{(k)}$, and treat the suffix coefficients
$\bG_a^{(k+1)}-\bG_b^{(k+1)}$ as random. Specifically, we assume that their coordinates are independent zero-mean Gaussian variables with common variance $\sigma^2>0$\footnote{Under Kaiming initialization~\cite{DBLP:conf/iccv/HeZRS15}, all network weights are independent zero-mean Gaussian variables.}. 
We have the expectation
\begin{equation}
\label{eq:expected-projected-normal-length-advantage}
\mathbb E\left[
    \lVert\overline{\boldsymbol g}^{(k)}\rVert^2
    -
    \lVert\overline{\boldsymbol g}^{\prime(k)}\rVert^2
\right]
=
\sigma^2
\lVert\bA_j^{(k)}\bP^{(k)}\rVert^2.
\end{equation}
\end{proposition}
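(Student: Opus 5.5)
Starting from the decomposition already obtained in Eq.~\eqref{eq:projected-normal-length-difference}, we take expectations term by term. Write $\boldsymbol c:=(\bG_a^{(k+1)}-\bG_b^{(k+1)})^\top\in\mathbb R^{d^{(k+1)}}$ for the random suffix coefficient vector, so that by assumption $\mathbb E[\boldsymbol c_i]=0$, $\mathbb E[\boldsymbol c_i^2]=\sigma^2$, and $\mathbb E[\boldsymbol c_i\boldsymbol c_{i'}]=0$ for $i\neq i'$. The fixed vectors $\boldsymbol v_i:=\bP^{(k)}(\bA_i^{(k)})^\top$ are deterministic under the proposition's conditioning. By Eq.~\eqref{eq:two-sided-projected-layer-margin-normals},
\begin{equation*}
\overline{\boldsymbol g}^{\prime(k)}=\sum_{i\in\mathbb S\setminus\{j\}}\boldsymbol c_i\boldsymbol v_i
\quad\text{and}\quad
\overline{\boldsymbol g}^{(k)}=\overline{\boldsymbol g}^{\prime(k)}+\boldsymbol c_j\boldsymbol v_j .
\end{equation*}

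The key step is the cross term $2\boldsymbol c_j\langle\overline{\boldsymbol g}^{\prime(k)},\boldsymbol v_j\rangle=2\sum_{i\in\mathbb S\setminus\{j\}}\boldsymbol c_j\boldsymbol c_i\langle\boldsymbol v_i,\boldsymbol v_j\rangle$. Since $j\notin\mathbb S\setminus\{j\}$, every summand involves a product of two distinct coordinates. Independence and zero mean give $\mathbb E[\boldsymbol c_j\boldsymbol c_i]=\mathbb E[\boldsymbol c_j]\mathbb E[\boldsymbol c_i]=0$, so by linearity the cross term has expectation zero. The quadratic term gives $\mathbb E[\boldsymbol c_j^2]\|\boldsymbol v_j\|^2=\sigma^2\|\bP^{(k)}(\bA_j^{(k)})^\top\|^2$. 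Because $\bP^{(k)}$ is symmetric (Def.~\ref{def:control-space-projection}), $\|\bP^{(k)}(\bA_j^{(k)})^\top\|=\|\bA_j^{(k)}\bP^{(k)}\|$, which yields Eq.~\eqref{eq:expected-projected-normal-length-advantage}.

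As a cross-check, and to connect with Eq.~\eqref{eq:projected-activation-matrix-frobenius}, we can also compute each expectation separately. For any fixed matrix $\boldsymbol M$, $\mathbb E\|\boldsymbol M^\top\boldsymbol c\|^2=\operatorname{tr}(\boldsymbol M^\top\mathbb E[\boldsymbol c\boldsymbol c^\top]\boldsymbol M)=\sigma^2\|\boldsymbol M\|_F^2$. Taking $\boldsymbol M=\bI_\pm^{(k)}\bA^{(k)}\bP^{(k)}$ gives the two expected squared lengths as $\sigma^2$ times the Frobenius norms, and their difference is exactly $\sigma^2\|\bA_j^{(k)}\bP^{(k)}\|^2$ by Eq.~\eqref{eq:projected-activation-matrix-frobenius}.

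The only delicate point is the modeling itself. We must be explicit that $\bP^{(k)}$, $\bI_\pm^{(k)}$, and $\bA^{(k)}$ are held fixed and that only $\boldsymbol c$ is random, with all second moments finite. Once that is stated, the remainder is routine linearity of expectation.
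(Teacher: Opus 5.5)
Your proof is correct. Your ``cross-check'' paragraph is exactly the paper's argument. The paper proves $\mathbb E[\lVert\bg^{(k+1)}\boldsymbol M\rVert^2]=\sigma^2\lVert\boldsymbol M\rVert_F^2$ for a fixed $\boldsymbol M$ by expanding the bilinear form. It then applies this to $\bI_\pm^{(k)}\bA^{(k)}\bP^{(k)}$ and subtracts using the row-energy identity Eq.~\eqref{eq:projected-activation-matrix-frobenius}.

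Your main argument takes a different route. It takes the expectation of the difference identity Eq.~\eqref{eq:projected-normal-length-difference} directly and kills the cross term. This is the same expansion the paper introduces only later, in Eq.~\eqref{eq:factorized-projected-normal-length-difference}, for the single-point success probability.

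The two routes trade off as follows. The paper's route gives each expected squared length individually, $\sigma^2\lVert\bI_\pm^{(k)}\bA^{(k)}\bP^{(k)}\rVert_F^2$, which makes the ``extra row energy'' reading explicit. However, its intermediate step uses the full isotropic covariance $\mathbb E[\boldsymbol c\boldsymbol c^\top]=\sigma^2\mathbf{Id}$ on the active coordinates. Your route shows the conclusion needs only $\mathbb E[\boldsymbol c_j^2]=\sigma^2$ and $\mathbb E[\boldsymbol c_j\boldsymbol c_i]=0$ for $i\in\mathbb S\setminus\{j\}$. The second moments among the remaining coordinates enter both sides identically and cancel.

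Neither argument uses Gaussianity; that hypothesis only matters for the arcsine formula in the next proposition.
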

\begin{proof}
Let 
$\boldsymbol g^{(k+1)}=\bG_a^{(k+1)}-\bG_b^{(k+1)}$ with $\boldsymbol g^{(k+1)}_i$ 
denoting its $i$-th coordinate. For a fixed matrix $\boldsymbol M$, let
$\boldsymbol M_i$ denote its $i$-th row. Since
$\boldsymbol g^{(k+1)}\boldsymbol M=\sum_i \bg^{(k+1)}_i\boldsymbol M_i$, bilinearity of
the inner product and linearity of expectation give
\begin{equation}\small
\label{eq:expected-suffix-weighted-norm}
\begin{aligned}
\mathbb E\left[
    \lVert \bg^{(k+1)}\boldsymbol M\rVert^2
\right]
&=
\mathbb E\left[
    \left\langle
        \sum_i \bg^{(k+1)}_i\boldsymbol M_i,
        \sum_j \bg^{(k+1)}_j\boldsymbol M_j
    \right\rangle
\right]\\
&=
\sum_{i,j}
\mathbb E[\bg^{(k+1)}_i \bg^{(k+1)}_j]
\left\langle
    \boldsymbol M_i,\boldsymbol M_j
\right\rangle
=\sigma^2\sum_i
\lVert \boldsymbol M_i\rVert^2=
\sigma^2
\lVert\boldsymbol M\rVert_F^2, 
\end{aligned}
\end{equation}
since independence and zero means give
$\mathbb E[\bg^{(k+1)}_i \bg^{(k+1)}_j]=0$ for $i\neq j$, while
$\mathbb E[(\bg^{(k+1)}_i)^2]=\sigma^2$ for every $i$.

Replacing $\boldsymbol M$ in  Eq.~\eqref{eq:expected-suffix-weighted-norm} by 
$\bI_+^{(k)}\bA^{(k)}\bP^{(k)}$ and
$\bI_-^{(k)}\bA^{(k)}\bP^{(k)}$, and then using
Eq.~\eqref{eq:projected-activation-matrix-frobenius}, gives
Eq.~\eqref{eq:expected-projected-normal-length-advantage}.
\end{proof}

Thus, whenever
$\bA_j^{(k)}\bP^{(k)}\neq\boldsymbol 0$, the active-side
projected normal has a larger expected squared length. 
A positive expected difference alone does not determine how often an
individual comparison is correct. Therefore, we introduce the following single-point
success probability estimation. 
\begin{proposition}[Single-point
success probability]
Follow the same assumption in Pro. \ref{pro:1} and define
$\tau:=
\sum_{i\in\mathbb{S}/\{j\}}
\langle
\bA_i^{(k)}\bP^{(k)},
\bA_j^{(k)}\bP^{(k)}
\rangle^2\geq 0$, then  we have  the single-point success probability,
\begin{equation}
\label{eq:single-point-length-success-probability}
\Pr\left[
    \lVert\overline{\boldsymbol g}^{(k)}\rVert
    >
    \lVert\overline{\boldsymbol g}^{\prime(k)}\rVert
\right]
=
\frac{1}{2}
+
\frac{1}{\pi}
\arcsin\left(
    \frac{
        \lVert\bA_j^{(k)}\bP^{(k)}\rVert^2
    }{
        \sqrt{
            \lVert\bA_j^{(k)}\bP^{(k)}\rVert^4
            +4\tau
        }
    }
\right)
>
\frac{1}{2}.
\end{equation}
\end{proposition}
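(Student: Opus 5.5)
The plan is to reduce the event $\|\overline{\boldsymbol g}^{(k)}\|>\|\overline{\boldsymbol g}^{\prime(k)}\|$ to a sign condition on the product of two jointly Gaussian scalars, and then apply Sheppard's orthant formula. Write $\bg^{(k+1)}=\bG_a^{(k+1)}-\bG_b^{(k+1)}$ as in the proof of Pro.~\ref{pro:1}. Set $\boldsymbol v:=\bP^{(k)}(\bA_j^{(k)})^\top$ and $a:=\|\boldsymbol v\|^2=\|\bA_j^{(k)}\bP^{(k)}\|^2$.

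\textbf{Step 1 (factor the length difference).} By Eq.~\eqref{eq:projected-normal-length-difference}, with $c:=\bg^{(k+1)}_j$,
\[
\|\overline{\boldsymbol g}^{(k)}\|^2-\|\overline{\boldsymbol g}^{\prime(k)}\|^2=2cY+c^2a=c\,(2Y+ac).
\]
Here $Y:=\langle\overline{\boldsymbol g}^{\prime(k)},\boldsymbol v\rangle$. By Eq.~\eqref{eq:two-sided-projected-layer-margin-normals} and the symmetry of $\bP^{(k)}$,
\[
Y=\sum_{i\in\mathbb S/\{j\}}\bg^{(k+1)}_i\langle\bA_i^{(k)}\bP^{(k)},\bA_j^{(k)}\bP^{(k)}\rangle .
\]
So $Y$ is a linear combination of the coordinates $\bg^{(k+1)}_i$ with $i\neq j$. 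Under the assumption of Pro.~\ref{pro:1}, $Y\sim\mathcal N(0,\sigma^2\tau)$ and $Y$ is independent of $c\sim\mathcal N(0,\sigma^2)$. Since norms are nonnegative, the target event equals $\{XZ>0\}$ with $X:=c$ and $Z:=2Y+ac$.

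\textbf{Step 2 (joint Gaussian moments).} The pair $(X,Z)$ is a zero-mean bivariate Gaussian with
\[
\operatorname{Var}X=\sigma^2,\qquad \operatorname{Var}Z=\sigma^2(a^2+4\tau),\qquad \operatorname{Cov}(X,Z)=a\sigma^2 .
\]
Hence the correlation is $\rho=a/\sqrt{a^2+4\tau}$. Note that $a^2=\|\bA_j^{(k)}\bP^{(k)}\|^4$, which matches the expression inside the $\arcsin$ in Eq.~\eqref{eq:single-point-length-success-probability}.

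\textbf{Step 3 (Sheppard's formula).} For a centered bivariate Gaussian with correlation $\rho$, I will use $\Pr[XZ>0]=\tfrac12+\tfrac1\pi\arcsin\rho$. I will prove it by whitening, which is the only nontrivial step.
\begin{itemize}
\item Write $X=U$ and $Z=\rho U+\sqrt{1-\rho^2}\,W$ after standardization, with $(U,W)$ standard and rotationally invariant.
\item Then $\{X>0,Z>0\}$ is a wedge of opening angle $\pi-\arccos\rho$, so it has probability $(\pi-\arccos\rho)/(2\pi)$.
\item Doubling for the symmetric negative orthant and using $\arccos\rho=\pi/2-\arcsin\rho$ gives the claimed formula.
\end{itemize}
The event $\{XZ=0\}$, i.e.\ a tie in lengths, has probability zero as long as $\operatorname{Var}Z>0$. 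This is why strict versus non-strict inequality does not matter.

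\textbf{Step 4 (strict advantage and edge cases).} If $\bA_j^{(k)}\bP^{(k)}\neq\boldsymbol 0$, then $a>0$, so $\rho\in(0,1]$ and the probability exceeds $\tfrac12$.
\begin{itemize}
\item If $\tau=0$, then $Y\equiv0$ and the difference equals $ac^2>0$ almost surely. This is consistent with $\rho=1$ and probability $1$.
\item If $\bA_j^{(k)}\bP^{(k)}=\boldsymbol 0$, the expression inside the $\arcsin$ is undefined and the difference is identically zero.
\end{itemize}
The final write-up should therefore state the standing assumption $\bA_j^{(k)}\bP^{(k)}\neq\boldsymbol 0$, as already implicit after Pro.~\ref{pro:1}, in order to obtain the strict inequality.
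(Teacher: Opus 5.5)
Your proposal is correct and follows the paper's proof essentially step for step. It uses the same factorization $c\,(2Y+ac)$ (the paper's $Z_1Z_2$), the same variance and covariance computation giving $\rho=\|\bA_j^{(k)}\bP^{(k)}\|^2/\sqrt{\|\bA_j^{(k)}\bP^{(k)}\|^4+4\tau}$, and the same Gaussian quadrant identity combined with central symmetry. Beyond the paper, you derive the quadrant identity by whitening rather than citing it, and you make explicit the nondegeneracy hypothesis $\bA_j^{(k)}\bP^{(k)}\neq\boldsymbol 0$, which the paper uses only tacitly when it writes ``since $\boldsymbol u_j\neq\boldsymbol 0$''.
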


\begin{proof}
    Define
$\boldsymbol u_i
:=
\bP^{(k)}(\bA_i^{(k)})^\top$.
According to Eq. \eqref{eq:two-sided-projected-layer-margin-normals}, 
$\overline{\boldsymbol g}^{\prime(k)}
=
\sum_{i\in\mathbb{S}/\{j\}}\bg^{(k+1)}_i\boldsymbol u_i$,
$\overline{\boldsymbol g}^{(k)}
=
\sum_{i\in\mathbb{S}/\{j\}}\bg^{(k+1)}_i\boldsymbol u_i
+
\bg^{(k+1)}_j\boldsymbol u_j$, and
$\tau
=
\sum_{i\in\mathbb{S}/\{j\}}
\langle\boldsymbol u_i,\boldsymbol u_j\rangle^2$.
Eq.~\eqref{eq:projected-normal-length-difference} can then be
written as
\begin{equation}
\label{eq:factorized-projected-normal-length-difference}
\lVert\overline{\boldsymbol g}^{(k)}\rVert^2
-
\lVert\overline{\boldsymbol g}^{\prime(k)}\rVert^2
=
\bg^{(k+1)}_j
\left(
2\sum_{i\in\mathbb{S}/\{j\}}
\bg^{(k+1)}_i
\left\langle
\boldsymbol u_i,\boldsymbol u_j
\right\rangle
+
\bg^{(k+1)}_j\lVert\boldsymbol u_j\rVert^2
\right).
\end{equation}
Define
$Z_1:=\bg^{(k+1)}_j$ and
$Z_2:=
2\sum_{i\in\mathbb{S}/\{j\}}
\bg^{(k+1)}_i\langle\boldsymbol u_i,\boldsymbol u_j\rangle
+
\bg^{(k+1)}_j\lVert\boldsymbol u_j\rVert^2$.
The active-side projected normal is therefore longer exactly when
$Z_1Z_2>0$, that is, when $Z_1$ and $Z_2$ have the same sign.

By assumption in Pro. \ref{pro:1}, the coefficients $\bg^{(k+1)}_i$ are independent zero-mean
Gaussian variables and $\bu_i$ are fixed. For any $\alpha,\beta\in\mathbb R$,
$\alpha Z_1+\beta Z_2$ is a linear combination of 
$\bg^{(k+1)}_i$ and is therefore Gaussian. Hence, $(Z_1,Z_2)$ is a jointly
Gaussian pair. Their means are zero by the linearity of expectation.

Since $Z_1=\bg^{(k+1)}_j$, we immediately have
$\operatorname{Var}(Z_1)=\sigma^2$.
For $Z_2$, we have
\begin{equation}
\label{eq:appendix-z2-variance}
\begin{aligned}
\operatorname{Var}(Z_2)
&=
\operatorname{Var}\left(
    2\sum_{i\in\mathbb{S}/\{j\}}
    \bg^{(k+1)}_i
    \left\langle
        \boldsymbol u_i,\boldsymbol u_j
    \right\rangle
    +
    \bg^{(k+1)}_j\lVert\boldsymbol u_j\rVert^2
\right)
\\
&=
4\sum_{i\in\mathbb{S}/\{j\}}
\left\langle
    \boldsymbol u_i,\boldsymbol u_j
\right\rangle^2
\operatorname{Var}(\bg^{(k+1)}_i)
+
\lVert\boldsymbol u_j\rVert^4
\operatorname{Var}(\bg^{(k+1)}_j)
=
\sigma^2
\left(
    4\tau+\lVert\boldsymbol u_j\rVert^4
\right).
\end{aligned}
\end{equation}

Since $Z_1$ and $Z_2$ have zero mean, their covariance is
$\operatorname{Cov}(Z_1,Z_2)=\mathbb E[Z_1Z_2]$.
For  $i\in\mathbb{S}/\{j\}$, the independence gives
$\mathbb E[\bg^{(k+1)}_j\bg^{(k+1)}_i]=\mathbb E[\bg^{(k+1)}_j]\mathbb E[\bg^{(k+1)}_i]=0$. Therefore,
\begin{equation}\small
\label{eq:appendix-z1-z2-covariance}
\begin{aligned}
\operatorname{Cov}(Z_1,Z_2)
&=
\mathbb E\left[
\bg^{(k+1)}_j
\left(
2\sum_{i\in\mathbb{S}/\{j\}}
\bg^{(k+1)}_i
\left\langle
\boldsymbol u_i,\boldsymbol u_j
\right\rangle
+
\bg^{(k+1)}_j\lVert\boldsymbol u_j\rVert^2
\right)
\right]\\
&=
2\sum_{i\in\mathbb{S}/\{j\}}
\left\langle
\boldsymbol u_i,\boldsymbol u_j
\right\rangle
\mathbb E[\bg^{(k+1)}_j\bg^{(k+1)}_i]
+
\lVert\boldsymbol u_j\rVert^2
\mathbb E[(\bg^{(k+1)}_j)^2]
=
\sigma^2
\lVert\boldsymbol u_j\rVert^2.
\end{aligned}
\end{equation} 
Using
$\operatorname{Var}(Z_1)=\sigma^2$,
$\operatorname{Var}(Z_2)
=
\sigma^2(4\tau+\lVert\boldsymbol u_j\rVert^4)$, and
$\operatorname{Cov}(Z_1,Z_2)
=
\sigma^2\lVert\boldsymbol u_j\rVert^2$,
the correlation coefficient between $Z_1$ and $Z_2$ is
\begin{equation}
\label{eq:appendix-z1-z2-correlation}
\operatorname{Corr}(Z_1,Z_2)
=
\frac{
\operatorname{Cov}(Z_1,Z_2)
}{
\sqrt{
\operatorname{Var}(Z_1)
\operatorname{Var}(Z_2)
}
}
=
\frac{
\lVert\boldsymbol u_j\rVert^2
}{
\sqrt{
\lVert\boldsymbol u_j\rVert^4+4\tau
}
}.
\end{equation}
Since $\boldsymbol u_j\neq\boldsymbol 0$,  $\operatorname{Corr}(Z_1,Z_2)>0$. 
Thus, $Z_1$ and $Z_2$ are positively correlated.

Dividing $Z_1$ and $Z_2$ by their positive standard deviations does
not change their signs or their correlation coefficient. The resulting
variables form a standard jointly Gaussian pair. For such a pair, the
standard Gaussian quadrant identity gives
$\Pr[Z_1>0,Z_2>0]
=
1/4+
\arcsin(\operatorname{Corr}(Z_1,Z_2))/(2\pi)$.
Moreover, the zero-mean jointly Gaussian distribution is centrally
symmetric, so
$\Pr[Z_1<0,Z_2<0]=\Pr[Z_1>0,Z_2>0]$. 
By
Eq.~\eqref{eq:factorized-projected-normal-length-difference},
the active-side projected normal is longer exactly when $Z_1Z_2>0$.
Therefore,
\begin{equation}
\label{eq:appendix-single-point-success-probability}
\begin{aligned}
\Pr\left[
    \lVert\overline{\boldsymbol g}^{(k)}\rVert
    >
    \lVert\overline{\boldsymbol g}^{\prime(k)}\rVert
\right]
=
\Pr[Z_1Z_2>0]&=
\frac{1}{2}
+
\frac{1}{\pi}
\arcsin\left(
    \operatorname{Corr}(Z_1,Z_2)
\right)
\\
&=
\frac{1}{2}
+
\frac{1}{\pi}
\arcsin\left(
    \frac{
        \lVert\boldsymbol u_j\rVert^2
    }{
        \sqrt{
            \lVert\boldsymbol u_j\rVert^4+4\tau
        }
    }
\right).
\end{aligned}
\end{equation}
Since $\operatorname{Corr}(Z_1,Z_2)>0$, the probability in
Eq.~\eqref{eq:appendix-single-point-success-probability} is
strictly greater than $1/2$. Finally, substituting
$\boldsymbol u_j=\bP^{(k)}(\bA_j^{(k)})^\top$
proves
Eq.~\eqref{eq:single-point-length-success-probability}.
\end{proof}

\begin{proposition}
    The $\tau:=
\sum_{i\in\mathbb{S}/\{j\}}
\langle
\bA_i^{(k)}\bP^{(k)},
\bA_j^{(k)}\bP^{(k)}
\rangle^2$  is the sum of the squared inner products between the projected  weight of the target neuron (neuron $j$) and the projected weights of the other active neurons (neuron $i\in\mathbb{S}/\{j\}$). 
It therefore measures their total alignment with the target direction $\bA_j^{(k)}\bP^{(k)}$. A larger alignment $\tau$ generally reduces the single-point success probability. 
\end{proposition}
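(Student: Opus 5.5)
The statement has two parts: an interpretive claim about what $\tau$ measures, and a monotonicity claim that larger $\tau$ lowers the single-point success probability. I would treat them in that order.

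\textbf{Interpretation of $\tau$.} This part is definitional. Set $\boldsymbol u_i=\bP^{(k)}(\bA_i^{(k)})^\top$, as in the proof of the single-point success probability. Because $\bP^{(k)}$ is symmetric, $\bA_i^{(k)}\bP^{(k)}=\boldsymbol u_i^\top$. Hence
\[
\langle \bA_i^{(k)}\bP^{(k)},\bA_j^{(k)}\bP^{(k)}\rangle=\langle\boldsymbol u_i,\boldsymbol u_j\rangle .
\]
Next decompose each summand as
\[
\langle\boldsymbol u_i,\boldsymbol u_j\rangle^2=\lVert\boldsymbol u_i\rVert^2\lVert\boldsymbol u_j\rVert^2\cos^2\angle(\boldsymbol u_i,\boldsymbol u_j).
\]
This shows that $\tau$ is a weighted sum of squared cosines between the target direction $\boldsymbol u_j$ and the other active projected weights. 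Each term vanishes exactly when $\boldsymbol u_i\perp\boldsymbol u_j$, so $\tau$ measures total alignment with $\bA_j^{(k)}\bP^{(k)}$. Equivalently, $\tau=\lVert \boldsymbol U\boldsymbol u_j\rVert^2$, where $\boldsymbol U$ stacks the rows $\boldsymbol u_i^\top$ for $i\in\mathbb S/\{j\}$; this is the squared energy of $\boldsymbol u_j$ seen through the other active neurons.

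\textbf{Monotonicity in $\tau$.} I would take the closed form from the previous proposition as a function of $\tau$ with $c:=\lVert\bA_j^{(k)}\bP^{(k)}\rVert^2>0$ held fixed:
\[
p(\tau)=\tfrac12+\tfrac1\pi\arcsin\bigl(c/\sqrt{c^2+4\tau}\bigr).
\]
The map $\tau\mapsto c/\sqrt{c^2+4\tau}$ is strictly decreasing on $[0,\infty)$, and $\arcsin$ is strictly increasing on $[0,1]$. So $p$ is strictly decreasing in $\tau$, with $p(0)=1$ and $p(\tau)\to\tfrac12$ as $\tau\to\infty$. I would also note the scale-free form: $p$ depends only on the ratio $\tau/c^2$, which makes precise the trade-off between the alignment $\tau$ and the target's own projected energy.

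\textbf{Main obstacle.} The difficulty is the word ``generally''. In practice $\tau$ and $\lVert\boldsymbol u_j\rVert$ do not vary independently, since both depend on $\bP^{(k)}$ and on $\mathbb S$. A clean theorem is therefore available only with $\lVert\bA_j^{(k)}\bP^{(k)}\rVert$ held fixed, or in terms of the normalized quantity $\tau/\lVert\boldsymbol u_j\rVert^4$. I would state the result in that form and argue that ``generally'' refers exactly to this ceteris-paribus comparison. The limiting cases illustrate it: when all other active projected weights are orthogonal to the target, $\tau=0$ and every single-point vote is correct under the idealized model of Pro.~\ref{pro:1}; when $\tau$ is large, votes degrade toward random guessing.
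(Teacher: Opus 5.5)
Your proposal is correct and follows the same reasoning as the paper. The paper gives no separate proof: it treats the first part as definitional and reads the monotonicity directly off the closed form in Eq.~\eqref{eq:single-point-length-success-probability}, exactly as you do by holding $\lVert\bA_j^{(k)}\bP^{(k)}\rVert$ fixed. Your ceteris-paribus reading of ``generally'', via the normalized quantity $\tau/\lVert\boldsymbol u_j\rVert^4$, is the right way to turn the informal claim into a precise one.
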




\subsection{\em{Normal Alignment} in Hard-Label Setting}
\label{subsubsec:alignment-rule}

The preceding analysis gives the length comparison: according to Eq. \eqref{eq:single-point-length-success-probability}, the side with the longer
$\overline{\boldsymbol g}^{(k)}$ is predicted to be the target active
side with a higher probability. However, the hard-label queries recover only the unit projected
normals
$\overline{\boldsymbol n}^{(k)}$ and
$\overline{\boldsymbol n}^{\prime(k)}$ as stated in  the last paragraph of Sect. \ref{subsec:recovering-projected-unit-normals},
thereby losing the lengths of
$\overline{\boldsymbol g}^{(k)}$ and
$\overline{\boldsymbol g}^{\prime(k)}$.
Consequently, the length comparison cannot be applied directly. We therefore
compare the absolute alignments of the two unit projected normals with
the recovered target weight $\widehat{\bA}_j^{(k)}$.

\begin{proposition}[Equivalence of the length and alignment comparisons]
\label{prop:length-alignment-equivalence}
Assume 
$\overline{\boldsymbol g}^{(k)}\neq\boldsymbol 0$,
$\overline{\boldsymbol g}^{\prime(k)}\neq\boldsymbol 0$,
$\bP^{(k)}(\widehat{\bA}_j^{(k)})^\top\neq\boldsymbol 0$, and
$\bP^{(k)}(\widehat{\bA}_j^{(k)})^\top
\nparallel
\overline{\boldsymbol g}^{\prime(k)}$.
Then
\begin{equation}\label{eq:NA_equivalence}
   \left\|\overline{\boldsymbol g}^{(k)}\right\|
>
\left\|\overline{\boldsymbol g}^{\prime(k)}\right\|
\quad\Longleftrightarrow\quad
\left|
\left\langle
\overline{\boldsymbol n}^{(k)},
\left(\widehat{\bA}_j^{(k)}\right)^\top
\right\rangle
\right|
>
\left|
\left\langle
\overline{\boldsymbol n}^{\prime(k)},
\left(\widehat{\bA}_j^{(k)}\right)^\top
\right\rangle
\right|. 
\end{equation}
\end{proposition}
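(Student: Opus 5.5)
The plan is to reduce both sides of Eq.~\eqref{eq:NA_equivalence} to the same scalar comparison. I will decompose the two projected normals along the projected target direction and its orthogonal complement inside $\mathbb V_{\bx}^{(k)}$.

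First, I rewrite the alignments in terms of projected quantities. Set $\boldsymbol w:=\bP^{(k)}(\widehat{\bA}_j^{(k)})^\top\neq\boldsymbol 0$ and $\widehat{\boldsymbol w}:=\boldsymbol w/\|\boldsymbol w\|$. Since $\overline{\boldsymbol n}^{(k)}\in\mathbb V_{\bx}^{(k)}$ and $\bP^{(k)}$ is a symmetric projector (Def.~\ref{def:control-space-projection}), $\langle\overline{\boldsymbol n}^{(k)},(\widehat{\bA}_j^{(k)})^\top\rangle=\langle\bP^{(k)}\overline{\boldsymbol n}^{(k)},(\widehat{\bA}_j^{(k)})^\top\rangle=\langle\overline{\boldsymbol n}^{(k)},\boldsymbol w\rangle$. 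The same holds for $\overline{\boldsymbol n}^{\prime(k)}$. By Eq.~\eqref{eq:n_bar}, the right-hand comparison therefore becomes $|\langle\overline{\boldsymbol g}^{(k)},\widehat{\boldsymbol w}\rangle|/\|\overline{\boldsymbol g}^{(k)}\|>|\langle\overline{\boldsymbol g}^{\prime(k)},\widehat{\boldsymbol w}\rangle|/\|\overline{\boldsymbol g}^{\prime(k)}\|$. Here the common factor $\|\boldsymbol w\|$ has been dropped; it is well defined because both normals are nonzero.

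Second, I use the structural relation from Eq.~\eqref{eq:two-sided-projected-layer-margin-normals} together with Def.~\ref{def:signature}. Since $\bA_j^{(k)}=\bA_{j,1}^{(k)}\widehat{\bA}_j^{(k)}$, we get $\overline{\boldsymbol g}^{(k)}=\overline{\boldsymbol g}^{\prime(k)}+\lambda\boldsymbol w$ with the scalar $\lambda:=(\bG_{a,j}^{(k+1)}-\bG_{b,j}^{(k+1)})\bA_{j,1}^{(k)}$. I then write $\overline{\boldsymbol g}^{\prime(k)}=a\widehat{\boldsymbol w}+\boldsymbol p$ with $\boldsymbol p\perp\boldsymbol w$. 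This gives $\overline{\boldsymbol g}^{(k)}=b\widehat{\boldsymbol w}+\boldsymbol p$ with $b=a+\lambda\|\boldsymbol w\|$, and crucially the same orthogonal part $\boldsymbol p$. The non-parallel hypothesis $\boldsymbol w\nparallel\overline{\boldsymbol g}^{\prime(k)}$ gives exactly $\boldsymbol p\neq\boldsymbol 0$.

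Third, both comparisons become monotone functions of a single scalar. We have $\|\overline{\boldsymbol g}^{(k)}\|^2-\|\overline{\boldsymbol g}^{\prime(k)}\|^2=b^2-a^2$, so the left side of Eq.~\eqref{eq:NA_equivalence} is equivalent to $|b|>|a|$. For the alignment side, the quantities are $\phi(b)$ and $\phi(a)$ with $\phi(t):=|t|/\sqrt{t^2+\|\boldsymbol p\|^2}$. Because $\|\boldsymbol p\|>0$, $\phi$ is a strictly increasing function of $|t|$. Hence $\phi(b)>\phi(a)\iff|b|>|a|$, which closes the equivalence.

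The only delicate point is this strict monotonicity. It fails when $\boldsymbol p=\boldsymbol 0$, since then $\phi\equiv1$ and the alignment comparison degenerates to equality. This is why the non-parallel assumption is needed, and I will state explicitly where each hypothesis enters: nonzero normals so the unit normals are defined, $\boldsymbol w\neq\boldsymbol 0$ so $\widehat{\boldsymbol w}$ exists, and non-parallelism so that $\boldsymbol p\neq\boldsymbol 0$.
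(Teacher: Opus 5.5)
Your proposal is correct and follows essentially the same route as the paper. You rewrite both alignments via the symmetric projector as inner products with the unit vector along $\bP^{(k)}(\widehat{\bA}_j^{(k)})^\top$, and split $\overline{\boldsymbol g}^{(k)}$ and $\overline{\boldsymbol g}^{\prime(k)}$ into components along that vector plus a common orthogonal part, which is nonzero by the non-parallel hypothesis. The only difference is cosmetic and lies in the last step: the paper writes the difference of squared alignments as $\|\boldsymbol m_\perp\|^2/(\|\overline{\boldsymbol g}^{(k)}\|^2\|\overline{\boldsymbol g}^{\prime(k)}\|^2)$ times the difference of squared lengths, whereas you use the equivalent fact that $|t|/\sqrt{t^2+\|\boldsymbol p\|^2}$ is strictly increasing in $|t|$.
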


\begin{proof}
According to Eq. \eqref{eq:n_bar}, we have
$\bP^{(k)}\overline{\boldsymbol n}^{(k)}
=
\overline{\boldsymbol n}^{(k)}$
and
$\bP^{(k)}\overline{\boldsymbol n}^{\prime(k)}
=
\overline{\boldsymbol n}^{\prime(k)}$.
Define
$\boldsymbol m
:=
\bP^{(k)}(\widehat{\bA}_j^{(k)})^\top/
\|\bP^{(k)}(\widehat{\bA}_j^{(k)})^\top\|$.
Using these identities and the symmetry of $\bP^{(k)}$, we obtain
\begin{equation}
\label{eq:appendix-alignment-projection}
\resizebox{\linewidth}{!}{$
\begin{aligned}
\left\langle
\overline{\boldsymbol n}^{(k)},
\left(\widehat{\bA}_j^{(k)}\right)^\top
\right\rangle
&=
\left\langle
\bP^{(k)}\overline{\boldsymbol n}^{(k)},
\left(\widehat{\bA}_j^{(k)}\right)^\top
\right\rangle
=
\left\langle
\overline{\boldsymbol n}^{(k)},
\bP^{(k)}
\left(\widehat{\bA}_j^{(k)}\right)^\top
\right\rangle
=
\left\|
\bP^{(k)}
\left(\widehat{\bA}_j^{(k)}\right)^\top
\right\|
\left\langle
\overline{\boldsymbol n}^{(k)},
\boldsymbol m
\right\rangle,
\\
\left\langle
\overline{\boldsymbol n}^{\prime(k)},
\left(\widehat{\bA}_j^{(k)}\right)^\top
\right\rangle
&=
\left\langle
\bP^{(k)}\overline{\boldsymbol n}^{\prime(k)},
\left(\widehat{\bA}_j^{(k)}\right)^\top
\right\rangle
=
\left\langle
\overline{\boldsymbol n}^{\prime(k)},
\bP^{(k)}
\left(\widehat{\bA}_j^{(k)}\right)^\top
\right\rangle
=
\left\|
\bP^{(k)}
\left(\widehat{\bA}_j^{(k)}\right)^\top
\right\|
\left\langle
\overline{\boldsymbol n}^{\prime(k)},
\boldsymbol m
\right\rangle.
\end{aligned}
$}
\end{equation}
The common factor
$\|\bP^{(k)}(\widehat{\bA}_j^{(k)})^\top\|$
is positive and therefore does not affect the ordering of the absolute
inner products.

Since $\widehat{\bA}_j^{(k)}$ and $\bA_j^{(k)}$ differ only by a
nonzero scalar,
Eq.~\eqref{eq:two-sided-projected-layer-margin-normals} shows that
$\overline{\boldsymbol g}^{(k)}
-
\overline{\boldsymbol g}^{\prime(k)}$
is parallel to $\boldsymbol m$. Hence, there exist
$\alpha,\beta\in\mathbb R$ and a vector
$\boldsymbol m_\perp\perp\boldsymbol m$ such that
$\overline{\boldsymbol g}^{\prime(k)}
=
\alpha\boldsymbol m+\boldsymbol m_\perp$
and
$\overline{\boldsymbol g}^{(k)}
=
\beta\boldsymbol m+\boldsymbol m_\perp$.
Thus, 
they share the same orthogonal component $\boldsymbol m_\perp$. 
Since $\boldsymbol m$ is a unit vector and
$\boldsymbol m_\perp\perp\boldsymbol m$, we have
$\|\overline{\boldsymbol g}^{\prime(k)}\|^2
=
\alpha^2+\|\boldsymbol m_\perp\|^2$
and
$\|\overline{\boldsymbol g}^{(k)}\|^2
=
\beta^2+\|\boldsymbol m_\perp\|^2$.
Using
$\overline{\boldsymbol n}^{(k)}
=
\overline{\boldsymbol g}^{(k)}
/
\|\overline{\boldsymbol g}^{(k)}\|$
and
$\overline{\boldsymbol n}^{\prime(k)}
=
\overline{\boldsymbol g}^{\prime(k)}
/
\|\overline{\boldsymbol g}^{\prime(k)}\|$,
we obtain
$|\langle
\overline{\boldsymbol n}^{(k)},
\boldsymbol m
\rangle|^2
=
\beta^2/
(\beta^2+\|\boldsymbol m_\perp\|^2)$
and
$|\langle
\overline{\boldsymbol n}^{\prime(k)},
\boldsymbol m
\rangle|^2
=
\alpha^2/
(\alpha^2+\|\boldsymbol m_\perp\|^2)$.
Consequently,
\begin{equation}
\begin{array}{ll}
&\left|
\left\langle
\overline{\boldsymbol n}^{(k)},
\boldsymbol m
\right\rangle
\right|^2
-
\left|
\left\langle
\overline{\boldsymbol n}^{\prime(k)},
\boldsymbol m
\right\rangle
\right|^2\\
&=
\frac{
    \beta^2
}{
    \beta^2+\|\boldsymbol m_\perp\|^2
}
-
\frac{
    \alpha^2
}{
    \alpha^2+\|\boldsymbol m_\perp\|^2
}
=
\frac{
    \beta^2
    \left(
        \alpha^2+\|\boldsymbol m_\perp\|^2
    \right)
    -
    \alpha^2
    \left(
        \beta^2+\|\boldsymbol m_\perp\|^2
    \right)
}{
    \left(
        \beta^2+\|\boldsymbol m_\perp\|^2
    \right)
    \left(
        \alpha^2+\|\boldsymbol m_\perp\|^2
    \right)
}\\
&=
\frac{
    \|\boldsymbol m_\perp\|^2
    \left(
        \beta^2-\alpha^2
    \right)
}{
    \left(
        \beta^2+\|\boldsymbol m_\perp\|^2
    \right)
    \left(
        \alpha^2+\|\boldsymbol m_\perp\|^2
    \right)
}
=
\frac{
    \|\boldsymbol m_\perp\|^2
}{
    \|\overline{\boldsymbol g}^{(k)}\|^2
    \|\overline{\boldsymbol g}^{\prime(k)}\|^2
}
\left(
    \|\overline{\boldsymbol g}^{(k)}\|^2
    -
    \|\overline{\boldsymbol g}^{\prime(k)}\|^2
\right). 
\end{array}
\end{equation} 
With the assumptions $\overline{\boldsymbol g}^{(k)}\neq\boldsymbol 0$,
$\overline{\boldsymbol g}^{\prime(k)}\neq\boldsymbol 0$,
$\bP^{(k)}(\widehat{\bA}_j^{(k)})^\top\neq\boldsymbol 0$, and
$\bP^{(k)}(\widehat{\bA}_j^{(k)})^\top
\nparallel
\overline{\boldsymbol g}^{\prime(k)}$, 
the factor
$\|\boldsymbol m_\perp\|^2/
(\|\overline{\boldsymbol g}^{(k)}\|^2
\|\overline{\boldsymbol g}^{\prime(k)}\|^2)$
is strictly positive. Hence, 
$\|\overline{\boldsymbol g}^{(k)}\|
>
\|\overline{\boldsymbol g}^{\prime(k)}\|$
if and only if
$|\langle\overline{\boldsymbol n}^{(k)},\boldsymbol m\rangle|
>
|\langle\overline{\boldsymbol n}^{\prime(k)},\boldsymbol m\rangle|$.
By
Eq.~\eqref{eq:appendix-alignment-projection},
the latter is equivalent to
$|\langle
\overline{\boldsymbol n}^{(k)},
(\widehat{\bA}_j^{(k)})^\top
\rangle|
>
|\langle
\overline{\boldsymbol n}^{\prime(k)},
(\widehat{\bA}_j^{(k)})^\top
\rangle|$,
which proves
Pro.~\ref{prop:length-alignment-equivalence}.
\end{proof}

According to Pro.~\ref{prop:length-alignment-equivalence}, 
the alignment comparison has
the same single-point success probability given in
Eq.~\eqref{eq:single-point-length-success-probability}. 
Based on this comparison, we introduce the sign recovery method
\emph{Normal Alignment} for hard-label networks. 
\begin{proposition}[\em Normal Alignment]
At each dual point,
the side whose unit projected decision-facet normal $\overline{\boldsymbol n}^{(k)}$ has the larger
absolute inner product with
$(\widehat{\bA}_j^{(k)})^\top$
is predicted to be the target active side. If this prediction agrees
with the active side indicated by the recovered signature, the dual point votes to retain the  sign of $\widehat{\bA}_j^{(k)}$; otherwise, it votes to reverse its sign. 
\end{proposition}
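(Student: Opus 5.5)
The statement is a decision rule rather than an identity, so what has to be established is that each vote it produces is correct exactly when the length comparison of Pro.~\ref{pro:1} is correct. Consequently each vote is correct with the single-point probability of Eq.~\eqref{eq:single-point-length-success-probability}, which exceeds $1/2$. I would argue in three steps.

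\textbf{Step 1: the rule is computable from hard-label data.} Fix a dual point $\bx\in\mathbb C_j^{(k)}\cap\mathbb D_{ab}$ with its two adjacent neighborhoods $\mathbb L_{\bx},\mathbb L'_{\bx}$.
\begin{itemize}
\item On each adjacent facet, the procedure of Sect.~\ref{subsec:recovering-projected-unit-normals} recovers the unit normal $\boldsymbol n^{(1)}$ from hard-label queries.
\item Solving $(\bF^{(k-1)})^\top\boldsymbol h=\boldsymbol n^{(1)}$ and normalizing the minimum-norm solution gives $\overline{\boldsymbol n}^{(k)}$ or $\overline{\boldsymbol n}^{\prime(k)}$ via Eq.~\eqref{eq:n_bar}. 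This uses only the known prefix $\bF^{(k-1)}$.
\item The compared quantities are absolute inner products with $(\widehat{\bA}_j^{(k)})^\top$. Any residual sign ambiguity in the recovered normals, or the unknown scalar ${\bA}^{(k)}_{j,1}$ separating $\widehat{\bA}_j^{(k)}$ from $\bA_j^{(k)}$, therefore does not affect the comparison.
\end{itemize}

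\textbf{Step 2: reduce the rule's prediction to the length comparison.} Label the sides by the \emph{true} active side, so $\mathbb L_{\bx}$ is active and $\mathbb L'_{\bx}$ inactive, as in Eq.~\eqref{eq:two-sided-layer-margin-normals}. Under the nondegeneracy hypotheses, Pro.~\ref{prop:length-alignment-equivalence} shows that the rule predicts $\mathbb L_{\bx}$ exactly when $\|\overline{\boldsymbol g}^{(k)}\|>\|\overline{\boldsymbol g}^{\prime(k)}\|$.

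\textbf{Step 3: translate the prediction into a sign vote.} The recovered signature indicates as active the side where $\widehat{\bA}_j^{(k)}\bx^{(k)}+\widehat{\bb}_j^{(k)}>0$. The true preactivation is $s_j^{(k)}(\widehat{\bA}_j^{(k)}\bx^{(k)}+\widehat{\bb}_j^{(k)})$, by Eq.~\eqref{eq:sign_model_output}.
\begin{itemize}
\item If $s_j^{(k)}=1$, the signature-indicated side is the true active side, and the correct vote is ``retain''.
\item If $s_j^{(k)}=-1$, the signature-indicated side is the opposite side, and the correct vote is ``reverse''.
\end{itemize}
A case check on these two signs shows that the vote is correct if and only if the predicted side equals the true active side. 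Combined with Step~2, this holds exactly when $\|\overline{\boldsymbol g}^{(k)}\|>\|\overline{\boldsymbol g}^{\prime(k)}\|$, whose probability is given by Eq.~\eqref{eq:single-point-length-success-probability} under the Gaussian model of Pro.~\ref{pro:1}.

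\textbf{Main obstacle: the degenerate configurations excluded by Pro.~\ref{prop:length-alignment-equivalence}.} These are $\overline{\boldsymbol g}^{(k)}=\boldsymbol 0$, $\overline{\boldsymbol g}^{\prime(k)}=\boldsymbol 0$, or $\bP^{(k)}(\widehat{\bA}_j^{(k)})^\top\parallel\overline{\boldsymbol g}^{\prime(k)}$.
\begin{itemize}
\item The case $\bP^{(k)}(\widehat{\bA}_j^{(k)})^\top=\boldsymbol 0$ is a fixed property of the prefix, independent of the Gaussian suffix coefficients. It must be excluded deterministically: neither comparison carries information about the sign there.
\item The other conditions depend on the random suffix $\bG_a^{(k+1)}-\bG_b^{(k+1)}$. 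I would first check that, with $\bP^{(k)}(\widehat{\bA}_j^{(k)})^\top\neq\boldsymbol 0$ and the fixed matrices of Pro.~\ref{pro:1}, each is a proper linear or algebraic condition on that Gaussian vector. If so, each has probability zero.
\end{itemize}
If this holds, ties and degenerate votes occur with probability zero and do not change the success probability. In practice such dual points would simply be discarded.
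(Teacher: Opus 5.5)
Your proposal is correct and is essentially the paper's own justification. The paper gives no separate proof of this decision rule; it rests it on Pro.~\ref{prop:length-alignment-equivalence}, so that the alignment vote inherits the success probability of Eq.~\eqref{eq:single-point-length-success-probability}, and your Step~3 just makes the retain/reverse bookkeeping explicit.

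On your pending check, the condition $\bP^{(k)}(\widehat{\bA}_j^{(k)})^\top\nparallel\overline{\boldsymbol g}^{\prime(k)}$ is a proper linear condition on the Gaussian suffix only if some active $i\neq j$ has $\bP^{(k)}(\bA_i^{(k)})^\top\notin\operatorname{span}\bigl(\bP^{(k)}(\bA_j^{(k)})^\top\bigr)$. Otherwise, for example when $\mathbb V_{\bx}^{(k)}$ is one-dimensional, both unit projected normals are $\pm$ the same vector and the rule ties at every such dual point, even though the length comparison still beats $1/2$. This case must therefore be excluded deterministically alongside $\bP^{(k)}(\widehat{\bA}_j^{(k)})^\top=\boldsymbol 0$, a caveat the paper also leaves implicit.
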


{\em Normal Alignment} repeats this comparison at multiple dual points and aggregates the resulting votes. The majority vote determines whether the sign of the recovered signature is retained or reversed. The fraction of valid votes supporting this decision is used as its confidence level. For a target neuron, let $n_+$ and $n_-$ denote the numbers of votes for
retaining and reversing the recovered signature, respectively, and let
$n_{\mathrm{valid}}:=n_++n_-$. For $n_{\mathrm{valid}}>0$, define the confidence level as
$\alpha:=\max\{n_+,n_-\}/n_{\mathrm{valid}}$. Given a confidence threshold
$\alpha_0\in(1/2,1]$, the {\em Normal Alignment} retains the recovered signature
if $n_+>n_-$ and $\alpha\geq\alpha_0$, and reverses it if $n_->n_+$ and
$\alpha\geq\alpha_0$. Otherwise, the available votes are insufficient
to determine the sign, which remains unresolved.

\subsubsection{White-Box Validation of the {\em Normal Alignment}.}
\label{subsubsec:normal-alignment-validation}

We experimentally validate the  {\em Normal Alignment} in the white-box setting on the CIFAR-10 DNNs 
with architectures \mbox{$192$-$d$$\times$3-$10$} for
$d\in\{32,64,128,256\}$.  As shown in Fig.~\ref{fig:normal-alignment-dual-point-budget} in {\sf Supp.}~\ref{supp:supporting-experimental-results}, $n_{\mathrm{attempt}}=200$ provides high sign recovery accuracy.  
Table~\ref{tab:normal-alignment-whitebox-validation} summarizes the layer-wise results.  
For each dual
point, the projected normal length comparison in
Eq.~\eqref{eq:projected-normal-length-difference} is quantified by $\frac{\lVert\overline{\boldsymbol g}^{(k)}\rVert}
     {\lVert\overline{\boldsymbol g}^{\prime(k)}\rVert}$. The corresponding column 
reports the median of this ratio over all evaluated dual points in each
layer.  For each neuron in a layer, Eq.~\eqref{eq:single-point-length-success-probability}
is used to compute a theoretical success probability at each of its
dual points and  ``$p_{\mathrm{th}}$'' is the mean of these
probabilities over all evaluated dual points in the layer.
Correspondingly, $p_{\mathrm{obs}}$ is the proportion of correct
single-point votes among all evaluated dual points in that layer.  The column of ``Agreement'' reports the percentage of evaluated dual points in each layer satisfying the equivalence in Eq. \eqref{eq:NA_equivalence} of Pro. \ref{prop:length-alignment-equivalence}. The column of ``Signs recovered'' 
gives the number of correctly recovered neuron signs in each layer after aggregating
$200$ votes per neuron.

\begin{table}
\centering
\caption{Layer-wise white-box validation of  {\em Normal Alignment} on CIFAR-10 DNNs \mbox{$192$-$d$-$d$-$d$-$10$} using $n_{\mathrm{attempt}}=200$ dual points per neuron.}
\label{tab:normal-alignment-whitebox-validation}
\scriptsize
\setlength{\tabcolsep}{4.2pt}
\renewcommand{\arraystretch}{1.08}
\begin{tabular}{@{}cccccc@{}}
\toprule
$d$
& Layer $k$
& $\lVert\overline{\boldsymbol g}^{(k)}\rVert/
   \lVert\overline{\boldsymbol g}^{\prime(k)}\rVert$
& $p_{\mathrm{th}}/p_{\mathrm{obs}}$ (\%)
& Agreement (\%)
& Signs recovered
\\
\midrule
\multirow{3}{*}{$32$}
& 1 & 1.014 & 74.94/76.61 & 100.00 & 32/32 \\
& 2 & 1.010 & 64.31/66.92 & 100.00 & 30/32 \\
& 3 & 1.016 & 66.86/69.56 & 100.00 & 31/32 \\
\midrule
\multirow{3}{*}{$64$}
& 1 & 1.008 & 71.04/74.68 & 100.00 & 64/64 \\
& 2 & 1.007 & 65.12/69.12 & 100.00 & 64/64 \\
& 3 & 1.012 & 66.25/71.84 & 100.00 & 62/64 \\
\midrule
\multirow{3}{*}{$128$}
& 1 & 1.004 & 68.04/73.15 & 100.00 & 128/128 \\
& 2 & 1.004 & 65.62/68.79 & 100.00 & 128/128 \\
& 3 & 1.004 & 65.40/68.82 & 100.00 & 126/128 \\
\midrule
\multirow{3}{*}{$256$}
& 1 & 1.002 & 64.99/70.81 & 100.00 & 256/256 \\
& 2 & 1.002 & 66.00/68.95 & 100.00 & 255/256 \\
& 3 & 1.002 & 64.41/68.73 & 100.00 & 241/256 \\
\bottomrule
\end{tabular}
\end{table}

As shown in
Table~\ref{tab:normal-alignment-whitebox-validation}, the median of the ratio$\frac{\lVert\overline{\boldsymbol g}^{(k)}\rVert}
     {\lVert\overline{\boldsymbol g}^{\prime(k)}\rVert}$ is greater than one in all twelve hidden
layers. For all the evaluated
dual points of the four DNNs, the mean single-point success probabilities are
$p_{\mathrm{th}}=66.01\%$ theoretically and
$p_{\mathrm{obs}}=70.12\%$ empirically.
Thus, the theoretical model captures the advantage over random
guessing, although it underestimates its magnitude. Consistent with
Pro.~\ref{prop:length-alignment-equivalence}, the length and
alignment rules agree (Eq. \eqref{eq:NA_equivalence}  is satisfied) on every evaluated dual point. 
After vote aggregation, {\em Normal Alignment} correctly recovers $1417$ of
the $1440$ neuron signs, giving an aggregate recovery accuracy of
$\frac{1417}{1440}\times100\%=98.40\%$. In particular, all signs in the
first hidden layer are recovered correctly for all four models.
Meanwhile, the second and third hidden layers contain $20$ incorrectly
recovered signs and $3$ tied outcomes.



%% file: sections/combineSOE.tex

\section{{\em eSOE+Alignment}: Combining {\em Normal Alignment} with Hard-Label {\em SOE}}
\label{sec:soe-align}

At NeurIPS 2024, Foerster {\em et al.}~\cite{DBLP:conf/nips/FoersterMSH24} empirically observed that many
neuron signs recovered by {\em Neuron Wiggle} \cite{DBLP:conf/eurocrypt/CanalesMartinezCHRSS24} remained at low confidence and that collecting additional critical points did not improve their confidence.
Therefore, they performed an exhaustive search on these low-confidence signs, which led to a significant increase in the number of model queries and runtime, even turning the so-called polynomial-time attack into an exponential-time  attack.
As a probabilistic voting method, the {\em Normal Alignment} may also leave some neuron signs undetermined when their voting confidence is insufficient. 
In contrast, Hard-label {\em SOE} \cite{DBLP:conf/latincrypt/CanalesMartinezS25} can deterministically recover the activation states of all neurons in layer $k$, but it requires $\operatorname{rank}(\widehat{\bA}^{(k)}\bF^{(k-1)})=d^{(k+1)}$ to solve the linear system in Eq.~\eqref{eq:hard-label SOE}. 
Inspired by the raw-output {\em SOE}+{\em Wiggle} \cite{DBLP:conf/eurocrypt/LiuSELBP26}, we proposed hard-label {\em eSOE}+{\em Alignment} for high-confidence sign recovery.



Recall from Sect.~\ref{subsec:SOE and its extensions}, {\em SOE} + {\em Wiggle} contains two main parts: removing inactive neurons identified with high-confidence level in Eq. \eqref{eq:reduced-raw-output-SOE} and extending the system from multiple points in Eq. \eqref{eq:raw-output-SOE-extension}.
In S1 access, removing inactive neurons remains straightforward: at a selected transition point, the signs recovered by {\em Normal Alignment} with high confidence are used to identify inactive neurons, and their corresponding zero entries are removed from the hard-label {\em SOE} Eq. \eqref{eq:hard-label SOE} in Sect.~\ref{subsec:SOE and its extensions}. 
The system extension, however, cannot be applied directly. 


\subsubsection{Finding Compatible Transition Points.}
\label{subsubsec:soe-align-extension-points}
According to Eq.~\eqref{eq:hard-label SOE}, the unknown vector is $\left(
    (\bG_a^{(k+1)}-\bG_b^{(k+1)})
    \bI^{(k)}
    \bS^{(k)}
\right)^\top$. To extend the linear system in Eq.~\eqref{eq:hard-label SOE}, the selected points must share the same unknowns. Therefore, two conditions should be simultaneously satisfied: 
\begin{itemize}
    \item First, the selected points must share the activation states in layer $k$ and all subsequent layers; otherwise, their corresponding hard-label {\em SOE} systems in Eq.~\eqref{eq:hard-label SOE} have different unknown vectors;
    \item Second, the points should remain on the decision boundary between the same two classes, {\em i.e.}, $\mathbb D_{ab}$.
\end{itemize}
We can only walk along the decision boundary to ensure that no future neurons have toggled, and check the output label from slightly perturbing $\mathcal{F}_{\theta}(\bx_t+\bdelta^{(1)})$ and $\mathcal{F}_{\theta}(\bx_t-\bdelta^{(1)})$ to keep $\bx_t \in \mathbb D_{ab}$.
We use the idea for locating dual points introduced in
\cite{DBLP:conf/eurocrypt/CarliniCHRS25} to find compatible transition
points on the decision boundary between fixed two classes, specifically, 
\begin{itemize}
    \item {\em Step 1}: From $\bx_1 \in \mathbb{D}_{ab} \cap \mathbb L_{\bx_1}$, the attacker makes a random excursion and uses hard-label binary search to relocate another point $\bx_1' \in \mathbb D_{ab} \cap \mathbb L_{\bx_1}$. Then the difference $\bx_1'-\bx_1$ determines a direction along $\mathbb D_{ab} \cap \mathbb L_{\bx_1}$. Following the direction of $\bx_1'-\bx_1$, the attacker walks until the decision boundary bends at a dual point $\bx_2$ (suppose that $\bx_1 \in \mathbb D_{ab} \cap \mathbb L_{\bx_2}$).
    \item {\em Step 2}: Since the parameters of layers $1,\ldots,k-1$ have already been recovered, the attacker can evaluate the pre-activation values of the neurons in these layers at $\bx_2$. If a neuron in a preceding layer is zero before ReLU, the bend is attributed to that layer; the attacker then makes a random excursion and uses binary search to locate another $\bx_2'$ on the adjacent decision facet $\mathbb D_{ab} \cap \mathbb L'_{\bx_2}$. Otherwise, the bend may be caused by a neuron in layer $k$ or a subsequent layer, so the attacker terminates the current search path.
    \item {\em Step 3}: If the bend is attributed to a neuron in a
    preceding layer,the attacker then collects $\bx_2'$ for {\em SOE} extension and continues to find the next decision boundary bend along the direction of $\bx'_2-\bx_2$. 
\end{itemize}
Repeating this procedure from the starting point $\bx_1$ yields sufficient transition points for the following extension.

\subsubsection{Scale-Free Hard-label {\em SOE} Extension.}
\label{subsubsec:scale-free-soe-extension}

Let $\bx_1,\ldots,\bx_T$ be the collected transition points. Since these points share the same activation matrices in layer $k$ and all subsequent layers, their $\bI^{(k)}$ and $\bG^{(k+1)}$ are identical. 
Let $\boldsymbol g^{(1)}_t$ be the normal of the decision facet  $\mathbb D_{ab} \cap \mathbb L_{\bx_t}\subset \mathbb{R}^{d^{(1)}}$, 
and $\boldsymbol n^{(1)}_t = \boldsymbol g^{(1)}_t/\|\boldsymbol g^{(1)}_t\|$.  The decision-facet normal is often already available from signature recovery and can therefore be reused here, avoiding the additional oracle queries needed to search for perturbations and construct the equations in Eq.~\eqref{eq:hard-label SOE}. According to Eq. \eqref{eq:g and gk} in Sect.~\ref{subsec:recovering-projected-unit-normals}, 
$\bg^{(1)}_t = (\bA^{(k)} \bF^{k-1}_{\bx_t})^\top \bI^{(k)} (\bG^{(k+1)}_a-\bG^{(k+1)}_b)^\top$. With the recovered signature $\widehat{\bA}^{(k)}$ and prefix map $\bF_{\bx_t}^{(k-1)}$, we have
\begin{equation}
\label{eq:Hard-label SOE normal}
    \boldsymbol n^{(1)}_t
=
\left(
    \widehat{\bA}^{(k)}
    \bF_{\bx_t}^{(k-1)}
\right)^\top
\frac{\bS^{(k)}}{\|\boldsymbol g^{(1)}_t\|}
\left[
    \bI^{(k)}
    \left(
        \bG_a^{(k+1)}
        -
        \bG_b^{(k+1)}
    \right)^\top
\right].
\end{equation}

Although $\bS^{(k)}\bI^{(k)} (\bG_a^{(k+1)}-\bG_b^{(k+1)})^\top$ is common to all selected points, the $1/\|\boldsymbol g^{(1)}_t\|$ generally differs across them. Therefore, these hard-label {\em SOE} systems cannot be combined directly. Since $\boldsymbol n^{(1)}_t$ is a unit vector, 
$\left(
\mathbf{Id}_{d^{(1)}}-
\bn^{(1)}_t\bn^{(1)\top}_t\right)\bn^{(1)}_t=\boldsymbol 0$, where $\mathbf{Id}_{d^{(1)}}$ is the $d^{(1)} \times d^{(1)}$ identity matrix. Multiplying both sides of Eq.~\eqref{eq:Hard-label SOE normal} by
$\mathbf{Id}_{d^{(1)}}-
\bn^{(1)}_t\bn^{(1)\top}_t$ to remove $1/\|\boldsymbol g^{(1)}_t\|$, and it gives
\begin{equation}
\label{eq:projected-hard-label-soe}
\left(
\mathbf{Id}_{d^{(1)}}
-
\boldsymbol n^{(1)}_t\boldsymbol n^{(1)\top}_t
\right)
\left(
\widehat{\bA}^{(k)}
\bF_{\bx_t}^{(k-1)}
\right)^\top
\bS^{(k)}
\left[
\bI^{(k)}
\left(
\bG_a^{(k+1)}
-
\bG_b^{(k+1)}
\right)^\top
\right]
=
\boldsymbol 0.
\end{equation}
Now, the unknown vector in Eq.~\eqref{eq:projected-hard-label-soe} is the same for all selected transition points.

Again, let $\mathbb K$ contain the neurons that {\em Normal Alignment} identifies as inactive in layer $k$ at $\bx_t$ with high confidence, and
$\mathbb U:=[d^{(k+1)}]\setminus\mathbb K$ contain the remaining neurons. For a vector $\boldsymbol v$, $[\boldsymbol v]_{\mathbb U}$ denotes the subvector indexed by $\mathbb U$.
Restricting Eq.~\eqref{eq:projected-hard-label-soe} to $\mathbb U$ and
combining the equations from all collected transition points, we obtain
\begin{equation}
\label{eq:soe-align-stacked-system}
\begin{bmatrix}
[\left(
    \mathbf{Id}_{d^{(1)}}
    -
    \boldsymbol n^{(1)}_1\boldsymbol n^{(1)\top}_1
\right)
\left(
    \widehat{\bA}^{(k)}
    \bF_{\bx_1}^{(k-1)}
\right)^\top]_{\mathbb U}
\\
[\left(
    \mathbf{Id}_{d^{(1)}}
    -
    \boldsymbol n^{(1)}_2\boldsymbol n^{(1)\top}_2
\right)
\left(
    \widehat{\bA}^{(k)}
    \bF_{\bx_2}^{(k-1)}
\right)^\top]_{\mathbb U}
\\
\vdots
\\
[\left(
    \mathbf{Id}_{d^{(1)}}
    -
    \boldsymbol n^{(1)}_T\boldsymbol n^{(1)\top}_T
\right)
\left(
    \widehat{\bA}^{(k)}
    \bF_{\bx_T}^{(k-1)}
\right)^\top]_{\mathbb U}
\end{bmatrix}
\left[
    \bS^{(k)}
    \bI^{(k)}
    \left(
        \bG_a^{(k+1)}
        -
        \bG_b^{(k+1)}
    \right)^\top
\right]_{\mathbb U}
=
\boldsymbol 0.
\end{equation}

\begin{proposition}[{\em eSOE+Alignment}]
\label{prop:esoe-alignment}
Assume every neuron in $\mathbb K$ is inactive at the selected transition points and  the entries of $\left[(\bG_a^{(k+1)}-\bG_b^{(k+1)})^\top\right]_{\mathbb U}$ corresponding to active neurons are nonzero. If the coefficient matrix in Eq.~\eqref{eq:soe-align-stacked-system} has rank $|\mathbb U|-1$, then its nonzero solution is uniquely determined up to a scalar multiple. Then, 
{eSOE+Alignment} recovers all the neuron signs in layer $k$ in polynomial time.
\end{proposition}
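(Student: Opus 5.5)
I will argue in three stages: first that the true vector solves the stacked system, then that the rank hypothesis pins it down up to scale, and finally how signs are read off and why everything is polynomial.

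\emph{Stage 1: the true vector is a nonzero solution.} Write $\boldsymbol v:=\bS^{(k)}\bI^{(k)}(\bG_a^{(k+1)}-\bG_b^{(k+1)})^\top$. All selected points share $\bI^{(k)}$ and $\bG^{(k+1)}$ by construction, so $\boldsymbol v$ is common to every block of Eq.~\eqref{eq:projected-hard-label-soe}. Since every neuron in $\mathbb K$ is inactive, $[\boldsymbol v]_{\mathbb K}=\boldsymbol 0$. Hence deleting the columns indexed by $\mathbb K$ leaves each block equation valid, and $[\boldsymbol v]_{\mathbb U}$ solves Eq.~\eqref{eq:soe-align-stacked-system}. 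It is nonzero: on a decision boundary $\boldsymbol g^{(1)}\neq\boldsymbol 0$, so at least one active neuron exists. By hypothesis its entry of $\bG_a^{(k+1)}-\bG_b^{(k+1)}$ is nonzero, and multiplying by $s_i=\pm1$ keeps it nonzero.

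\emph{Stage 2: uniqueness up to scale.} The coefficient matrix has $|\mathbb U|$ columns and rank $|\mathbb U|-1$. By rank–nullity its kernel is one-dimensional, so it equals $\operatorname{span}\{[\boldsymbol v]_{\mathbb U}\}$. Any nonzero solution $\boldsymbol w$ obtained numerically (say, via SVD) therefore satisfies $\boldsymbol w=c[\boldsymbol v]_{\mathbb U}$ with $c\neq0$. In particular, the zero pattern of $\boldsymbol w$ equals that of $[\boldsymbol v]_{\mathbb U}$. Using the nonzero hypothesis again, this pattern is exactly the set of inactive neurons in $\mathbb U$ at the common activation pattern.

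\emph{Stage 3: reading off signs.} Together with $\mathbb K$, this yields the complete activation state $\tau_t$ of every neuron $t$ in layer $k$ at $\bx_1$. Because $\widehat{\bA}^{(k)}$, $\widehat{\bb}^{(k)}$ and the prefix layers are known, I compute $z_t:=\widehat{\bA}^{(k)}_t\bx^{(k)}+\widehat{\bb}^{(k)}_t$, which is nonzero generically since $\bx_1$ is not critical for layer $k$. The sign then follows as $s_t=\operatorname{sign}(z_t)$ if $\tau_t=1$ and $s_t=-\operatorname{sign}(z_t)$ if $\tau_t=0$. This recovers $s_t$ for $t\in\mathbb U$, while for $t\in\mathbb K$ the signs are the high-confidence \emph{Normal Alignment} outputs, assumed correct. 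This covers all $d^{(k+1)}$ neurons.

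\emph{Complexity.} Normal recovery uses $O(d^{(1)})$ binary searches per point, and each binary search uses polynomially many queries. Vote aggregation is polynomial, and the point collection walks are polynomial, as in \cite{DBLP:conf/eurocrypt/CarliniCHRS25}. The linear algebra is an SVD of a $Td^{(1)}\times|\mathbb U|$ matrix. No enumeration occurs, so the whole procedure runs in polynomial time.

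The main obstacle is the subtle step in Stage 2: a one-dimensional kernel must fix the zero pattern rather than merely the direction. This relies on the nonzero-entry hypothesis, and in floating point a threshold is needed to separate true zeros from small entries. I would state it under the precise computation assumption of the paper.
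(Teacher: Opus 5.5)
Your proof is correct and follows the argument the paper relies on (the paper gives no separate proof beyond the statement and the earlier {\em SOE} read-off rule). The true vector $[\bS^{(k)}\bI^{(k)}(\bG_a^{(k+1)}-\bG_b^{(k+1)})^\top]_{\mathbb U}$ solves the stacked system, rank--nullity makes the kernel one-dimensional, the nonzero-entry hypothesis makes the zero pattern equal the inactive set, and signs follow from the standard rule with polynomial-time subroutines. One small refinement: the correctness of the {\em Normal Alignment} signs on $\mathbb K$ is not an extra assumption but a consequence of the hypothesis that every neuron in $\mathbb K$ is inactive, so your rule $s_t=-\operatorname{sign}(z_t)$ already covers $\mathbb K$ uniformly.
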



The correctness of the sign assignment returned by {\em eSOE+Alignment} can be assessed by attempting signature recovery for the next layer, as proposed by \cite{DBLP:conf/nips/FoersterMSH24}. If the next-layer signatures cannot be recovered, {\em eSOE} may have incorrectly eliminated a column corresponding to a neuron that is actually active at the selected transition points, making the solution unreliable.We therefore discard the signs returned by {\em eSOE+Alignment} and instead use the signs predicted by {\em Normal Alignment}. We then apply the exhaustive-search strategy of \cite{DBLP:conf/nips/FoersterMSH24} to enumerate candidate assignments for the low-confidence signs until the next-layer signatures are successfully recovered. We handle {\em eSOE+Toggle} analogously: when its reduced {\em SOE} solution is unreliable, we apply the same exhaustive-search strategy to the low-confidence signs predicted by {\em Future Toggle}.

We validate {\em eSOE+Alignment} on the CIFAR-10 model with architecture \mbox{$3072$-$256{\times 3}$-$64$-$10$}. Following the proof-of-concept setting in \cite{DBLP:conf/eurocrypt/CarliniCHRS25}, we use exact decision-facet normals and set $n_{\mathrm{attempt}}=200$. For every hidden layer,
the projected coefficient matrix satisfies the rank condition in Proposition~\ref{prop:esoe-alignment}, and {\em eSOE+Alignment} correctly recovers all $832$ signs, as reported in Table~\ref{tab:cifar-sign-strategies} in {\sf Supp.}~\ref{supp:supporting-experimental-results}. The same result is observed in Table~\ref{tab:intro-combined-sign-recovery}, where {\em eSOE+Alignment} recovers all $512$ and $320$ signs, respectively. Thus, no fallback exhaustive search is required in any of the evaluated settings. For the \mbox{$3072$-$256{\times 3}$-$64$-$10$} model, complete sign recovery takes about 1h43m on our 8-core CPU without neuron-level parallelism. For reference, Carlini {\em et al.}~ \cite{DBLP:conf/eurocrypt/CarliniCHRS25} reported an estimated runtime of 8.5 hours for complete sign recovery using {\em Future Toggle} on a 256-core server, with 64 neurons processed in parallel.

%% file: sections/experiment.tex
\section{Experiments}
\label{sec:experiments}



We evaluate sign recovery on two trained ReLU DNNs: a CIFAR-10 model with architecture \mbox{$192$-$64{\times 8}$-$10$} and an MNIST model with architecture \mbox{$64$-$96{\times 3}$-$32$-$10$}, containing $42{,}122$ and $28{,}298$ parameters, respectively.\footnote{CIFAR-10 and MNIST images are resized and flattened to
$8\times8\times3=192$ and $8\times8=64$ dimensions, respectively.} Both networks use ReLU activations and Kaiming normal initialization~\cite{DBLP:conf/iccv/HeZRS15}.  For the target layer $k$, the sign recoveries are evaluated assuming that the preceding layers have been recovered and the layer $k$'s neuron signatures are known up to sign. 

Following the experimental convention of \cite{DBLP:conf/eurocrypt/CarliniCHRS25}, we assume that the dual points have been precomputed because this step is shared by signature recovery and sign recovery. 
We use $200$ attempted dual points per neuron for {\em Normal Alignment} and {\em eSOE+Alignment}, and evaluate
{\em Future Toggle} and {\em eSOE+Toggle} with budgets of both $200$ and $1000$. For CIFAR-10, we follow the proof-of-concept setting of ~\cite{DBLP:conf/eurocrypt/CarliniCHRS25}. Decision-facet normals are computed directly from model parameters. {\em Future Toggle} additionally uses white-box information to walk along the decision boundary and handle non-future toggles. For MNIST, we recover the unit decision-facet normals using the procedure described in Sect.~\ref{subsubsec:recovering-input-space-unit-normal}, and {\em Future Toggle} discards a dual point if the boundary walk first encounters a non-future toggle. 
Tables~\ref{tab:highest-error-confidence-rank}, \ref{tab:cifar-layer-comparison},  and~\ref{tab:mnist-layer-comparison} in
{\sf Supp.}~\ref{supp:supporting-experimental-results} summarize the results across all hidden layers. {\em eSOE+Alignment} correctly recovers all $512$ and $320$ signs in the two models, respectively. 

%% file: sections/appendixExperimentalResults.tex
\begin{figure}[!htbp]
\centering
\includegraphics[width=0.7\linewidth]
{figures/normal\_alignment\_dual\_point\_budget.pdf}
\caption{Sign recovery accuracy of {\em Normal Alignment}.  
Each curve reports the percentage of correctly recovered signs across all three hidden layers after aggregating the  $n_{\mathrm{attempt}}$ votes for each neuron.}
\label{fig:normal-alignment-dual-point-budget}
\end{figure}


\begin{table}[!htbp]
\centering
\caption{Sign recovery results of {\em Normal Alignment} and
{\em eSOE+Alignment} on the CIFAR-10 model with architecture
\mbox{$3072$-$256{\times 3}$-$64$-$10$}.}
\label{tab:cifar-sign-strategies}

\begingroup
\small
\setlength{\tabcolsep}{3.5pt}
\renewcommand{\arraystretch}{1.15}

\resizebox{\linewidth}{!}{%
\begin{threeparttable}
\begin{tabular}{@{}llccccc@{}}
\toprule
\textbf{Method}
& \textbf{Metric}
& \textbf{L1}
& \textbf{L2}
& \textbf{L3}
& \textbf{L4}
& \shortstack{\textbf{All hidden}\\\textbf{layers}}
\\
\midrule

\multirow{2}{*}{%
  \shortstack[l]{{\em Normal}\\{\em Alignment}}}
& \textbf{Correct signs}
& $256/256$
& $256/256$
& $255/256$
& $63/64$
& $830/832$
\\
& Vote accuracy $p$ (\%)
& $77.85$
& $72.09$
& $67.85$
& $74.46$
& $72.74$
\\

\midrule

\multirow{3}{*}{%
  \shortstack[l]{{\em eSOE+}\\{\em Alignment}}}
& \textbf{Correct signs}
& $\mathbf{256/256}$
& $\mathbf{256/256}$
& $\mathbf{256/256}$
& $\mathbf{64/64}$
& $\mathbf{832/832}$
\\
& Projected rank
& $255$
& $146$
& $64$
& $63$
& --
\\
& Min confidence
& \tblNA
& $0.73$
& $0.60$
& \tblNA
& $0.60$
\\

\bottomrule
\end{tabular}

\begin{tablenotes}[flushleft]


\item[]
\textbf{Projected rank}:
The rank of the coefficient matrix in Eq.~\eqref{eq:soe-align-stacked-system} after projection and column elimination. In every layer, the reported
rank equals $|\mathbb U|-1$, so the nonzero solution is uniquely
determined up to a scalar multiple.

\item[]
\textbf{Min confidence}:
The minimum {\em Normal Alignment} confidence used to eliminate
inactive-neuron columns. N/A indicates that the unprojected stacked
coefficient matrix already has full column rank, so no column
elimination is required before projection.

\end{tablenotes}
\end{threeparttable}%
}

\endgroup
\end{table}


\begin{table}[!htbp]
\centering
\caption{Layer-wise confidence and ranks of highest-confidence incorrect sign  
predictions produced by {\em Normal Alignment} and
{\em Future Toggle}~\cite{DBLP:conf/eurocrypt/CarliniCHRS25} on the
CIFAR-10 and MNIST models.}
\label{tab:highest-error-confidence-rank}

\begingroup
\fontsize{7.8}{8.5}\selectfont
\setlength{\tabcolsep}{2.2pt}
\renewcommand{\arraystretch}{1.30}

\resizebox{\linewidth}{!}{%
\begin{threeparttable}
\begin{tabular}{@{}l@{\hspace{6pt}}l@{\hspace{8pt}}lcccccccc@{}}
\toprule
\multirow{2}{*}{\raisebox{-0.6ex}{\textbf{Model}}}
& \multirow{2}{*}{\raisebox{-0.6ex}{\textbf{Method}}}
& \multirow{2}{*}{\raisebox{-0.6ex}{\textbf{Metric}}}
& \multicolumn{8}{c}{\textbf{Hidden Layer}} \\
\cmidrule(l){4-11}
& & &
\textbf{L1} & \textbf{L2} & \textbf{L3} & \textbf{L4}
& \textbf{L5} & \textbf{L6} & \textbf{L7} & \textbf{L8} \\
\midrule

\multirow{6}{*}{%
  \raisebox{-3.2ex}{\makecell[l]{\textbf{CIFAR-10}\\
  $192$-$64{\times}8$-$10$}}}
&
\multirow{2}{*}{%
  \raisebox{-0.8ex}{\makecell[l]{{\em Normal Alignment}\\[2pt]
  $n_{\mathrm{attempt}}=200$}}}
& I-Confidence (\%)
& \multirow{2}{*}{$\checkmark$}
& 50.50 & 51.00 & 51.50 & 55.50 & 64.00 & 55.50 & 55.50 \\
& & I-Rank
& & 64/64 & 64/64 & 61/64 & 60/64 & 50/64 & 60/64 & 58/64 \\

\noalign{\vskip 2.5pt}
\cdashline{2-11}[0.4pt/1.5pt]
\noalign{\vskip 2.5pt}

&
\multirow{2}{*}{%
  \raisebox{-0.8ex}{\makecell[l]{{\em Future Toggle}\\[2pt]
  $n_{\mathrm{attempt}}=200$}}}
& I-Confidence (\%)
& \multirow{2}{*}{$\checkmark$}
& 51.05 & 53.01 & 54.55 & 56.89 & 55.70 & 58.72 & 100.00 \\
& & I-Rank
& & 61/64 & 42/64 & 28/64 & 15/64 & 26/64 & 12/64 & 1/64 \\

\noalign{\vskip 2.5pt}
\cdashline{2-11}[0.4pt/1.5pt]
\noalign{\vskip 2.5pt}

&
\multirow{2}{*}{%
  \raisebox{-0.8ex}{\makecell[l]{{\em Future Toggle}\\[2pt]
  $n_{\mathrm{attempt}}=1000$}}}
& I-Confidence (\%)
& \multirow{2}{*}{$\checkmark$}
& \multirow{2}{*}{$\checkmark$}
& \multirow{2}{*}{$\checkmark$}
& 50.62 & 51.90 & 53.36 & 55.58 & 70.00 \\
& & I-Rank
& & & & 60/64 & 44/64 & 45/64 & 20/64 & 13/64 \\

\noalign{\vskip 2pt}
\midrule
\noalign{\vskip 2pt}

\multirow{6}{*}{%
  \raisebox{-3.2ex}{\makecell[l]{\textbf{MNIST}\\
  $64$-$96{\times}3$-$32$-$10$}}}
&
\multirow{2}{*}{%
  \raisebox{-0.8ex}{\makecell[l]{{\em Normal Alignment}\\[2pt]
  $n_{\mathrm{attempt}}=200$}}}
& I-Confidence (\%)
& 55.43 & 62.30 & 60.11 & 56.99
& \multirow{2}{*}{--} & \multirow{2}{*}{--}
& \multirow{2}{*}{--} & \multirow{2}{*}{--} \\
& & I-Rank
& 88/96 & 68/96 & 78/96 & 29/32 & & & & \\

\noalign{\vskip 2.5pt}
\cdashline{2-11}[0.4pt/1.5pt]
\noalign{\vskip 2.5pt}

&
\multirow{2}{*}{%
  \raisebox{-0.8ex}{\makecell[l]{{\em Future Toggle}\\[2pt]
  $n_{\mathrm{attempt}}=200$}}}
& I-Confidence (\%)
& 50.56 & 60.71 & 80.00 & 100.00
& \multirow{2}{*}{--} & \multirow{2}{*}{--}
& \multirow{2}{*}{--} & \multirow{2}{*}{--} \\
& & I-Rank
& 96/96 & 27/96 & 4/96 & 1/32 & & & & \\

\noalign{\vskip 2.5pt}
\cdashline{2-11}[0.4pt/1.5pt]
\noalign{\vskip 2.5pt}

&
\multirow{2}{*}{%
  \raisebox{-0.8ex}{\makecell[l]{{\em Future Toggle}\\[2pt]
  $n_{\mathrm{attempt}}=1000$}}}
& I-Confidence (\%)
& \multirow{2}{*}{$\checkmark$}
& 57.46 & 63.77 & $100.00^{\dagger}$
& \multirow{2}{*}{--} & \multirow{2}{*}{--}
& \multirow{2}{*}{--} & \multirow{2}{*}{--} \\
& & I-Rank
& & 49/96 & 15/96 & 1/32 & & & & \\

\bottomrule
\end{tabular}

\begin{tablenotes}[flushleft]
\item {\bf I-Confidence:} It is the highest confidence among all the \textcolor{red}{\bf i}ncorrect sign predictions. Therefore, the sign-recovery method with lower I-Confidence is better. 


\item {\bf I-Rank:}
Confidence ranks are computed in descending order among all neurons in the corresponding layer; rank $1$ denotes the highest confidence. I-Rank is the confidence rank of the highest-confidence incorrect sign prediction. This metric is critical for confidence-ordered enumeration in complete sign recovery: following \cite{DBLP:conf/nips/FoersterMSH24}, all signs at or below the I-Rank in the confidence ordering, including the sign at the I-Rank itself, must be included in the enumeration. Therefore, a method is better when its highest-confidence error occurs lower in the confidence ordering, i.e., at a larger numerical I-Rank.

\item[$\checkmark$:\,]
No incorrect sign prediction is produced in this layer.

\item[$\dagger$:\,]
Dual-point utilization in the final hidden layer is only $0.21\%$;
there is a neuron with only two valid votes, both of which are incorrect.
\end{tablenotes}
\end{threeparttable}%
}

\endgroup
\end{table}

\begin{sidewaystable}[p]
\centering

\begin{minipage}{0.92\linewidth}
\caption{Layer-wise comparison of {\em Normal Alignment}, {\em eSOE+Alignment}, {\em Future Toggle}, and {\em eSOE+Toggle} on a CIFAR-10 model with architecture \mbox{$192$-$64{\times 8}$-$10$}.}
\label{tab:cifar-layer-comparison}
\end{minipage}

\begingroup
\fontsize{8.8}{10.2}\selectfont
\setlength{\tabcolsep}{2.1pt}
\renewcommand{\arraystretch}{1.10}

\resizebox{0.92\linewidth}{!}{%
\begin{tabular}{@{}llccccccccc@{}}
\toprule
\textbf{Method}
& \textbf{Metric}
& \textbf{L1}
& \textbf{L2}
& \textbf{L3}
& \textbf{L4}
& \textbf{L5}
& \textbf{L6}
& \textbf{L7}
& \textbf{L8}
& \shortstack{\textbf{All hidden}\\\textbf{layers}}
\\
\midrule

\multirow{7}{*}{%
  \textcolor{oursblue}{%
    \shortstack[l]{\textbf{Normal}\\
                   \textbf{Alignment}\\
                $\boldsymbol{n_{\mathrm{attempt}}=200}$}}}
& \textbf{Correct signs}
& 64/64 & 63/64 & 63/64 & 60/64
& 62/64 & 62/64 & 60/64 & 61/64
& 495/512
\\
& Vote accuracy $p$ (\%)
& 75.22 & 69.83 & 69.65 & 68.46
& 70.73 & 70.33 & 70.30 & 71.05
& 70.70
\\
& $\eta_{\rm dual}$ (\%)
& 100.00 & 100.00 & 100.00 & 100.00
& 100.00 & 100.00 & 100.00 & 100.00
& 100.00
\\
& Time
& $2^{1.00}(s)$
& $2^{3.32}(s)+2^{1}(g)$
& $2^{3.17}(s)+2^{1}(g)$
& $2^{3.32}(s)+2^{4}(g)$
& $2^{3.32}(s)+2^{5}(g)$
& $2^{3.32}(s)+2^{15}(g)$
& $2^{3.32}(s)+2^{5}(g)$
& $2^{3.46}(s)+2^{7}(g)$
& $2^{6.17}(s)+2^{15}(g)$
\\
& Queries
& $2^{28.11}$ & $2^{28.11}$ & $2^{28.12}$ & $2^{28.13}$
& $2^{28.13}$ & $2^{28.15}$ & $2^{28.15}$ & $2^{28.15}$
& $2^{31.13}$
\\
& I-Confidence (\%)
& \multirow{2}{*}{$\checkmark$}
& 50.50 & 51.00 & 51.50 & 55.50
& 64.00 & 55.50 & 55.50 & --
\\
& I-Rank
& & 64/64 & 64/64 & 61/64 & 60/64
& 50/64 & 60/64 & 58/64 & --
\\

\noalign{\vskip 2pt}
\cdashline{1-11}[0.4pt/1.5pt]
\noalign{\vskip 2pt}

\multirow{8}{*}{%
  \textcolor{oursblue}{%
    \shortstack[l]{\textbf{eSOE+}\\
    \textbf{Alignment}\\
                $\boldsymbol{n_{\mathrm{attempt}}=200}$}}}
& \textbf{Correct signs}
& 64/64 & 64/64 & 64/64 & 64/64
& 64/64 & 64/64 & 64/64 & 64/64
& 512/512
\\
& Vote accuracy $p$ (\%)
& 75.22 & 69.83 & 69.65 & 68.46
& 70.73 & 70.33 & 70.30 & 71.05
& 70.70
\\
& $\eta_{\rm dual}$ (\%)
& 100.00 & 100.00 & 100.00 & 100.00
& 100.00 & 100.00 & 100.00 & 100.00
& 100.00
\\
& Time
& $2^{6.88}(s)$
& $2^{7.04}(s)$
& $2^{7.18}(s)$
& $2^{7.40}(s)$
& $2^{7.48}(s)$
& $2^{7.82}(s)$
& $2^{5.55}(s)$
& $2^{9.43}(s)$
& $2^{10.73}(s)$
\\
& Queries
& $2^{28.11}$ & $2^{28.11}$ & $2^{28.12}$ & $2^{28.13}$
& $2^{28.13}$ & $2^{28.15}$ & $2^{28.15}$ & $2^{28.15}$
& $2^{31.13}$
\\
& I-Confidence (\%)
& \multirow{2}{*}{$\checkmark$}
& \multirow{2}{*}{$\checkmark$}
& \multirow{2}{*}{$\checkmark$}
& \multirow{2}{*}{$\checkmark$}
& \multirow{2}{*}{$\checkmark$}
& \multirow{2}{*}{$\checkmark$}
& \multirow{2}{*}{$\checkmark$}
& \multirow{2}{*}{$\checkmark$}
& --
\\
& I-Rank
& & & & & & & & & --
\\
& Min confidence
& \tblNA & 0.68 & 0.64 & 0.61
& 0.63 & 0.64 & 0.60 & 0.66
& 0.60
\\

\midrule

\multirow{7}{*}{%
  \shortstack[l]{Future\\ Toggle\\$n_{\mathrm{attempt}}=200$}}
& \textbf{Correct signs}
& 64/64 & 62/64 & 55/64 & 51/64
& 51/64 & 50/64 & 48/64 & 33/64
& 414/512
\\
& Vote accuracy $p$ (\%)
& 100.00 & 56.59 & 54.61 & 53.96
& 53.45 & 54.18 & 53.77 & 56.40
& 62.29
\\
& $\eta_{\rm dual}$ (\%)
& 99.78 & 94.86 & 91.73 & 87.38
& 80.84 & 71.90 & 56.04 & 2.26
& 73.10
\\
& Time
& $2^{7.97}(s)$
& $2^{6.48}(s)+2^{4}(g)$
& $2^{6.66}(s)+2^{23}(g)$
& $2^{6.91}(s)+2^{37}(g)$
& $2^{7.21}(s)+2^{50}(g)$
& $2^{7.5}(s)+2^{39}(g)$
& $2^{8.23}(s)+2^{53}(g)$
& $2^{9.35}(s)+2^{64}(g)$
& $2^{10.85}(s)+2^{64}(g)$
\\
& Queries
& $2^{29.12}$ & $2^{29.33}$ & $2^{29.44}$ & $2^{29.60}$
& $2^{29.81}$ & $2^{30.08}$ & $2^{30.59}$ & $2^{31.57}$
& $2^{33.17}$
\\
& I-Confidence (\%)
& \multirow{2}{*}{$\checkmark$}
& 51.05 & 53.01 & 54.55 & 56.89
& 55.70 & 58.72 & 100.00 & --
\\
& I-Rank
& & 61/64 & 42/64 & 28/64 & 15/64
& 26/64 & 12/64 & 1/64 & --
\\

\noalign{\vskip 2pt}
\cdashline{1-11}[0.4pt/1.5pt]
\noalign{\vskip 2pt}

\multirow{8}{*}{%
  \shortstack[l]{eSOE+\\ Toggle\\$n_{\mathrm{attempt}}=200$}}
& \textbf{Correct signs}
& 64/64 & 64/64 & 64/64 & 51/64
& 47/64 & 49/64 & 50/64 & 36/64
& 425/512
\\
& Vote accuracy $p$ (\%)
& 100.00 & 56.59 & 54.61 & 53.96
& 53.45 & 54.18 & 53.77 & 56.40
& 62.29
\\
& $\eta_{\rm dual}$ (\%)
& 99.78 & 94.86 & 91.73 & 87.38
& 80.84 & 71.90 & 56.04 & 2.26
& 73.10
\\
& Time
& $2^{8.55}(s)$
& $2^{7.77}(s)$
& $2^{7.93}(s)$
& $2^{8.15}(s)+2^{37}(g)$
& $2^{8.31}(s)+2^{50}(g)$
& $2^{8.66}(s)+2^{39}(g)$
& $2^{9.26}(s)+2^{53}(g)$
& $2^{9.56}(s)+2^{64}(g)$
& $2^{11.65}(s)+2^{64}(g)$
\\
& Queries
& $2^{29.12}$ & $2^{29.33}$ & $2^{29.44}$ & $2^{29.60}$
& $2^{29.81}$ & $2^{30.08}$ & $2^{30.59}$ & $2^{31.57}$
& $2^{33.17}$
\\
& I-Confidence (\%)
& \multirow{2}{*}{$\checkmark$}
& \multirow{2}{*}{$\checkmark$}
& \multirow{2}{*}{$\checkmark$}
& 54.55 & 56.89 & 55.70 & 58.72 & 100.00 & --
\\
& I-Rank
& & & & 28/64 & 15/64 & 26/64 & 12/64 & 1/64 & --
\\
& Min confidence
& \tblNA & 0.56 & 0.53 & 0.53
& 0.51 & 0.53 & 0.51 & 0.60
& 0.51
\\

\midrule

\multirow{7}{*}{%
  \shortstack[l]{Future\\ Toggle\\$n_{\mathrm{attempt}}=1000$}}
& \textbf{Correct signs}
& 64/64 & 64/64 & 64/64 & 61/64
& 59/64 & 57/64 & 54/64 & 47/64
& 470/512
\\
& Vote accuracy $p$ (\%)
& 100.00 & 56.42 & 55.34 & 54.24
& 53.59 & 54.42 & 53.40 & 56.89
& 62.45
\\
& $\eta_{\rm dual}$ (\%)
& 99.78 & 95.36 & 91.69 & 87.43
& 80.21 & 71.40 & 55.98 & 2.20
& 73.00
\\
& Time
& $2^{10.28}(s)$
& $2^{8.80}(s)$
& $2^{8.98}(s)$
& $2^{9.24}(s)+2^{5}(g)$
& $2^{9.54}(s)+2^{21}(g)$
& $2^{9.89}(s)+2^{20}(g)$
& $2^{10.5}(s)+2^{45}(g)$
& $2^{11.66}(s)+2^{52}(g)$
& $2^{13.17}(s)+2^{52}(g)$

\\
& Queries
& $2^{31.44}$ & $2^{31.64}$ & $2^{31.76}$ & $2^{31.92}$
& $2^{32.13}$ & $2^{32.41}$ & $2^{32.92}$ & $2^{33.87}$
& $2^{35.49}$
\\

& I-Confidence (\%)
& \multirow{2}{*}{$\checkmark$}
& \multirow{2}{*}{$\checkmark$}
& \multirow{2}{*}{$\checkmark$}
& 50.62 & 51.90 & 53.36 & 55.58 & 70.00 & --
\\
& I-Rank
& & & & 60/64 & 44/64 & 45/64 & 20/64 & 13/64 & --
\\


\noalign{\vskip 2pt}
\cdashline{1-11}[0.4pt/1.5pt]
\noalign{\vskip 2pt}

\multirow{8}{*}{%
  \shortstack[l]{eSOE+\\ Toggle\\$n_{\mathrm{attempt}}=1000$}}
& \textbf{Correct signs}
& 64/64 & 64/64 & 64/64 & 64/64
& 54/64 & 64/64 & 55/64 & 40/64
& 469/512
\\
& Vote accuracy $p$ (\%)
& 100.00 & 56.42 & 55.34 & 54.24
& 53.59 & 54.42 & 53.40 & 56.89
& 62.45
\\
& $\eta_{\rm dual}$ (\%)
& 99.78 & 95.36 & 91.69 & 87.43
& 80.21 & 71.40 & 55.98 & 2.20
& 73.00
\\
& Time
& $2^{10.41}(s)$
& $2^{9.17}(s)$
& $2^{9.35}(s)$
& $2^{9.60}(s)$
& $2^{9.86}(s)+2^{21}(g)$
& $2^{10.20}(s)$
& $2^{10.83}(s)+2^{45}(g)$
& $2^{11.98}(s)+2^{52}(g)$
& $2^{13.47}(s)+2^{52}(g)$

\\
& Queries
& $2^{31.44}$ & $2^{31.64}$ & $2^{31.76}$ & $2^{31.92}$
& $2^{32.13}$ & $2^{32.41}$ & $2^{32.92}$ & $2^{33.87}$
& $2^{35.49}$
\\
& I-Confidence (\%)
& \multirow{2}{*}{$\checkmark$}
& \multirow{2}{*}{$\checkmark$}
& \multirow{2}{*}{$\checkmark$}
& \multirow{2}{*}{$\checkmark$}
& 51.90
& \multirow{2}{*}{$\checkmark$}
& 55.58 & 70.00 & --
\\
& I-Rank
& & & & & 44/64 & & 20/64 & 13/64 & --
\\
& Min confidence
& \tblNA & 0.56 & 0.55 & 0.53
& 0.52 & 0.54 & 0.52 & 0.58
& 0.52
\\

\bottomrule
\end{tabular}%
}
\endgroup







\par\smallskip
\begin{minipage}{0.92\linewidth}
\scriptsize
\begin{tabularx}{\linewidth}{@{}l@{\hspace{0.5em}}X@{}}

\textbf{I-Confidence/I-Rank}: &
These metrics are defined in Table~\ref{tab:highest-error-confidence-rank}.
A $\checkmark$ indicates that no incorrect sign prediction is produced in the corresponding layer. If a method combined with {\em eSOE} does not recover all signs in a layer, we report the I-Confidence and I-Rank
of its underlying statistical method, {\em Normal Alignment} or {\em Future Toggle}, because the fallback exhaustive search uses that method's confidence ordering.
\\

\textbf{Min confidence}: &
The minimum confidence used by {\em eSOE} to eliminate
inactive-neuron columns. N/A indicates that the unprojected stacked
coefficient matrix already has full column rank, so no column
elimination is required before projection.
\\

\textbf{Time}: &
$2^x(\mathrm{s})$ denotes the method runtime in seconds, whereas
$2^y(\mathrm{g})$ denotes the estimated number of candidate sign
assignments required by confidence-ordered exhaustive search for
complete recovery. The guessing term is omitted when all signs are
correctly recovered.
\\

\end{tabularx}
\end{minipage}

\end{sidewaystable}

\begin{sidewaystable}
\centering

\begin{minipage}{0.72\linewidth}
\caption{Layer-wise comparison of {\em Normal Alignment}, {\em eSOE+Alignment}, {\em Future Toggle}, and {\em eSOE+Toggle} on an MNIST model with architecture \mbox{$64$-$96{\times 3}$-$32$-$10$}.}
\label{tab:mnist-layer-comparison}
\end{minipage}

\begingroup
\fontsize{8.6}{9.4}\selectfont
\setlength{\tabcolsep}{2.8pt}
\renewcommand{\arraystretch}{1.02}

\resizebox{0.72\linewidth}{!}{%
\begin{tabular}{@{}llccccc@{}}
\toprule
\textbf{Method}
& \textbf{Metric}
& \textbf{L1}
& \textbf{L2}
& \textbf{L3}
& \textbf{L4}
& \shortstack{\textbf{All hidden}\\\textbf{layers}}
\\
\midrule

\multirow{7}{*}{%
  \textcolor{oursblue}{%
    \shortstack[l]{\textbf{Normal}\\
                   \textbf{Alignment}\\
                $\boldsymbol{n_{\mathrm{attempt}}=200}$}}}
& \textbf{Correct signs}
& 91/96 & 93/96 & 93/96 & 30/32 & 307/320
\\
& Vote accuracy $p$ (\%)
& 70.95 & 64.98 & 69.10 & 72.33 & 68.74
\\
& $\eta_{\rm dual}$ (\%)
& 88.24 & 89.71 & 91.05 & 90.53 & 89.75
\\
& Time
& $2^{8.94}(s)+2^{9}(g)$
& $2^{9.03}(s)+2^{29}(g)$
& $2^{8.84}(s)+2^{19}(g)$
& $2^{7.25}(s)+2^{4}(g)$
& $2^{10.67}(s)+2^{29}(g)$
\\
& Queries
& $2^{26.96}$ & $2^{26.97}$ & $2^{26.98}$
& $2^{25.40}$ & $2^{28.71}$
\\
& I-Confidence (\%)
& 55.43 & 62.30 & 60.11 & 56.99 & --
\\
& I-Rank
& 88/96 & 68/96 & 78/96 & 29/32 & --
\\

\noalign{\vskip 1pt}
\cdashline{1-7}[0.4pt/1.5pt]
\noalign{\vskip 1pt}

\multirow{8}{*}{%
  \textcolor{oursblue}{%
    \shortstack[l]{\textbf{eSOE+}\\
    \textbf{Alignment}\\
                $\boldsymbol{n_{\mathrm{attempt}}=200}$}}}
& \textbf{Correct signs}
& 96/96 & 96/96 & 96/96 & 32/32 & 320/320
\\
& Vote accuracy $p$ (\%)
& 70.95 & 64.98 & 69.10 & 72.33 & 68.74
\\
& $\eta_{\rm dual}$ (\%)
& 88.24 & 89.71 & 91.05 & 90.53 & 89.75
\\
& Time
& $2^{9.16}(s)$
& $2^{9.29}(s)$
& $2^{9.23}(s)$
& $2^{9.10}(s)$
& $2^{11.20}(s)$
\\
& Queries
& $2^{26.96}$ & $2^{26.98}$ & $2^{26.98}$
& $2^{25.40}$ & $2^{28.71}$
\\
& I-Confidence (\%)
& \multirow{2}{*}{$\checkmark$}
& \multirow{2}{*}{$\checkmark$}
& \multirow{2}{*}{$\checkmark$}
& \multirow{2}{*}{$\checkmark$}
& --
\\
& I-Rank
& & & & & --
\\
& Min confidence
& 0.74 & 0.63 & 0.61 & \tblNA & 0.61
\\

\midrule

\multirow{7}{*}{%
  \shortstack[l]{Future\\ Toggle\\$n_{\mathrm{attempt}}=200$}}
& \textbf{Correct signs}
& 95/96 & 86/96 & 66/96 & 8/32 & 255/320
\\
& Vote accuracy $p$ (\%)
& 63.67 & 57.37 & 56.85 & 75.00 & 60.89
\\
& $\eta_{\rm dual}$ (\%)
& 92.32 & 60.42 & 11.26 & 0.25 & 49.23
\\
& Time
& $2^{13.30}(s)+2^{1}(g)$
& $2^{13.35}(s)+2^{70}(g)$
& $2^{13.41}(s)+2^{93}(g)$
& $2^{11.93}(s)+2^{32}(g)$
& $2^{15.11}(s)+2^{93}(g)$
\\
& Queries
& $2^{26.63}$ & $2^{26.68}$ & $2^{26.77}$
& $2^{25.31}$ & $2^{28.45}$
\\
& I-Confidence (\%)
& 50.56 & 60.71 & 80.00 & 100.00 & --
\\
& I-Rank
& 96/96 & 27/96 & 4/96 & 1/32 & --
\\

\noalign{\vskip 1pt}
\cdashline{1-7}[0.4pt/1.5pt]
\noalign{\vskip 1pt}

\multirow{8}{*}{%
  \shortstack[l]{eSOE+\\Toggle\\$n_{\mathrm{attempt}}=200$}}
& \textbf{Correct signs}
& 96/96 & 83/96 & 67/96 & 32/32 & 278/320
\\
& Vote accuracy $p$ (\%)
& 63.67 & 57.37 & 56.85 & 75.00 & 60.89
\\
& $\eta_{\rm dual}$ (\%)
& 92.32 & 60.42 & 11.26 & 0.25 & 49.23
\\
& Time
& $2^{13.31}(s)$
& $2^{13.36}(s)+2^{70}(g)$
& $2^{13.41}(s)+2^{93}(g)$
& $2^{12.08}(s)$
& $2^{15.13}(s)+2^{93}(g)$
\\
& Queries
& $2^{26.63}$ & $2^{26.69}$ & $2^{26.77}$
& $2^{25.31}$ & $2^{28.45}$
\\
& I-Confidence (\%)
& \multirow{2}{*}{$\checkmark$}
& 60.71 & 80.00
& \multirow{2}{*}{$\checkmark$}
& --
\\
& I-Rank
& & 27/96 & 4/96 & & --
\\
& Min confidence
& 0.63 & 0.56 & 0.52 & \tblNA & 0.52
\\

\midrule

\multirow{7}{*}{%
  \shortstack[l]{Future\\ Toggle\\$n_{\mathrm{attempt}}=1000$}}
& \textbf{Correct signs}
& 96/96 & 82/96 & 80/96 & 19/32 & 277/320
\\
& Vote accuracy $p$ (\%)
& 63.05 & 56.72 & 57.31 & 71.21 & 60.32
\\
& $\eta_{\rm dual}$ (\%)
& 91.93 & 60.37 & 11.43 & 0.21 & 49.14
\\
& Time
& $2^{15.57}(s)$
& $2^{15.63}(s)+2^{48}(g)$
& $2^{15.66}(s)+2^{82}(g)$
& $2^{14.21}(s)+2^{32}(g)$
& $2^{17.38}(s)+2^{82}(g)$
\\
& Queries
& $2^{28.95}$ & $2^{29.00}$ & $2^{29.09}$
& $2^{27.63}$ & $2^{30.77}$
\\
& I-Confidence (\%)
& \multirow{2}{*}{$\checkmark$}
& 57.46 & 63.77 & $100.00$ & --
\\
& I-Rank
& & 49/96 & 15/96 & 1/32 & --
\\

\noalign{\vskip 1pt}
\cdashline{1-7}[0.4pt/1.5pt]
\noalign{\vskip 1pt}

\multirow{8}{*}{%
  \shortstack[l]{eSOE+\\ Toggle\\$n_{\mathrm{attempt}}=1000$}}
& \textbf{Correct signs}
& 96/96 & 96/96 & 72/96 & 32/32 & 296/320
\\
& Vote accuracy $p$ (\%)
& 63.05 & 56.72 & 57.31 & 71.21 & 60.32
\\
& $\eta_{\rm dual}$ (\%)
& 91.93 & 60.37 & 11.43 & 0.21 & 49.14
\\
& Time
& $2^{15.57}(s)$
& $2^{15.64}(s)$
& $2^{15.67}(s)+2^{82}(g)$
& $2^{14.25}(s)$
& $2^{17.39}(s)+2^{82}(g)$
\\
& Queries
& $2^{28.95}$ & $2^{29.01}$ & $2^{29.09}$
& $2^{27.63}$ & $2^{30.77}$
\\
& I-Confidence (\%)
& \multirow{2}{*}{$\checkmark$}
& \multirow{2}{*}{$\checkmark$}
& 63.77
& \multirow{2}{*}{$\checkmark$}
& --
\\
& I-Rank
& & & 15/96 & & --
\\
& Min confidence
& 0.64 & 0.56 & 0.52 & \tblNA & 0.52
\\

\bottomrule
\end{tabular}%
}
\endgroup

\par\smallskip
\begin{minipage}{0.72\linewidth}
\scriptsize
\begin{tabularx}{\linewidth}{@{}l@{\hspace{0.5em}}X@{}}

\textbf{I-Confidence/I-Rank}: &
These metrics are defined in
Table~\ref{tab:highest-error-confidence-rank}.
A $\checkmark$ indicates that no incorrect sign prediction is produced
in the corresponding layer. If a method combined with {\em eSOE} does
not recover all signs in a layer, we report the I-Confidence and I-Rank
of its underlying statistical method, {\em Normal Alignment} or
{\em Future Toggle}, because the fallback exhaustive search uses that
method's confidence ordering.
\\

\textbf{Min confidence}: &
The minimum confidence used by {\em eSOE} to eliminate
inactive-neuron columns. N/A indicates that the unprojected stacked
coefficient matrix already has full column rank, so no column
elimination is required before projection.
\\

\textbf{Time}: &
$2^x(\mathrm{s})$ denotes the method runtime in seconds, whereas
$2^y(\mathrm{g})$ denotes the estimated number of candidate sign
assignments required by confidence-ordered exhaustive search for
complete recovery. The guessing term is omitted when all signs are
correctly recovered.
\\

\end{tabularx}
\end{minipage}
\end{sidewaystable}